\newtheorem{theorem}{Theorem}
\newtheorem{proposition}{Proposition}
\newtheorem{notation}{Notation}
\newtheorem{example}{Example}
\newtheorem{remark}{Remark}
\newcommand{\B}{\mathbb{B}}
\newcommand{\N}{\mathbb{N}}
\newcommand{\ie}{\emph{i.e. }}
\newcommand{\fp}{\mathrm{fp}}
\newcommand{\fpzero}{(0,\dots,0)} 
\newcommand{\fpone}{(1,\dots,1)}
\newcommand{\F}{\mathrm{F}}
\newcommand{\FP}{\mathrm{FP}}
\newcommand{\LC}{\mathrm{LC}}
\newcommand{\stab}{\mathrm{s}}
\newcommand{\cy}{\mathrm{c}}
\newcommand{\lcm}{\mathrm{lcm}}
\newcommand{\loopr}{{}~\rotatebox{90}{$\circlearrowleft$}{}}
\newcommand{\gcI}{\mathrm{cI}}
\newcommand{\gcII}{\mathrm{cII}}
\newcommand{\gcro}{\mathrm{cro}}
\newcommand{\gN}{\mathrm{N}}
\newcommand{\gAPI}{\mathrm{AP1}}
\newcommand{\gBFU}{\mathrm{BFU}}
\newcommand{\gAG}{\mathrm{AG}}
\newcommand{\gAPIII}{\mathrm{AP3}}
\newcommand{\gPI}{\mathrm{PI}}
\begin{document}

\renewcommand{\floatpagefraction}{.95}

\setlength{\parindent}{0pt}

\title{Attractor landscapes in Boolean networks with firing memory
}
  
\author{Eric Goles$^{1}$, Fabiola Lobos$^{1}$, Gonzalo A. Ruz$^{1,2}$, Sylvain Sen{\'e}$^3$\\[2mm]
  {\small $^1~$Facultad de Ingener{\'i}a y Ciencias, Universidad Adolfo Iba{\~n}ez, 
			Santiago, Chile}\\
  {\small $^2~$Center of Applied Ecology and Sustainability (CAPES), Santiago, Chile}\\
  {\small $^3~$Aix-Marseille Univ., Toulon Univ., CNRS, LIS UMR7020, Marseille, France}
}

\date{}
  
\maketitle

\begin{abstract}
	In this paper we study the dynamical behavior of Boolean networks with firing 	
	memory, namely Boolean networks whose vertices are updated synchronously 
	depending on their proper Boolean local transition functions so that each vertex
	remains at its firing state a finite number of steps.	 We prove in particular that 
	these networks have the same computational power than the classical ones, \ie any 
	Boolean network with firing memory composed of $m$ vertices can be simulated by a 
	Boolean network by adding vertices. We also prove general results on specific 
	classes of networks. For instance, we show that the existence of at least one 
	delay greater than $1$ in disjunctive networks makes such networks have only 
	fixed points as attractors. Moreover, for arbitrary networks composed of two 
	vertices, we characterize the delay phase space, \ie the delay values such that 
	networks admits limit cycles or fixed points. Finally, we analyze two classical 
	biological models by introducing delays: the model of the immune control of the 
	$\lambda$-phage and that of the genetic control of the floral morphogenesis of 
	the plant \emph{Arabidopsis thaliana}.\\
	\emph{Keywords:} Discrete dynamical systems, Boolean networks, Biological 
		network modeling
\end{abstract}

\section{Introduction}
\label{sec:intro}

In the context of gene regulation modeling, the choice of the methodology highly 
depends on the nature of the underlying real system and on the objective of the 
study, that can be oriented towards quantitative or qualitative analysis of the 
dynamical behaviors of the networks. From the qualitative point of view, Boolean 
networks (BNs) are one of the simplest model and for more than forty years, they have 
been used to analyze and understand several biological phenomena. Notably, several 
BN models of real biological systems have become popular: the immunity 
control network of bacteriophage $\lambda$~\cite{Thieffry1995}, the floral 
morphogenesis network of \textit{Arabidopsis thaliana}~\cite{Mendoza1998}, the 
fission yeast cell-cycle network~\cite{Davidich2008}, the budding yeast cell-cycle 
network~\cite{Li2004}, the mammalian cell-cycle network~\cite{Faure2006}, the 
\textit{p53-mdm2} network~\cite{Choi2012}, and the blood cancer large granular 
lymphocyte (T-LGL) leukemia network~\cite{Zhang2008}.

Introduced by Kauffman at the end of the 1960's~\cite{Kauffman1969} by generalizing 
the classical formal neural networks of McCulloch and Pitts~\cite{McCulloch1943}, 
this model consists in a network where the vertices represent genes that can be 
expressed (or active, \ie vertex value $1$) or not (inactive, \ie vertex value $0$), 
and the edges represent regulatory relations between the genes. The dynamics of a 
network is then given by a set of Boolean functions, one for each vertex. Starting 
from any of the $2^n$ possible configurations (a configuration being a vector of 
$\B^n = \{0,1\}^n$), for a network composed of $n$ vertices, the dynamics of the 
network eventually converges towards ordered sets of recurrent configurations that 
repeat endlessly and periodically which we classically call attractors. When an 
attractor is composed of one configuration, it is called a fixed point; when it is 
composed of at least two configurations, we call it a limit cycle. Attractors are 
particularly relevant in the context of biological modeling because they are used to 
represent differentiated cellular types or tissues (in the case of fixed points) and 
biological rhythms or oscillations (in the case of limit cycles).

One of the characteristics of BNs is that they are associated with an update mode 
that defines the way vertices update their states along time. The parallel mode in 
which all the vertices are updated at each time step is canonical (\ie it is 
directly derived from the network definition) and belongs to the class of 
block-sequential update modes~\cite{Demongeot2008,Goles2008,Robert1986}. 
Block-sequential modes are deterministic and periodic and are defined by ordered 
partitions of the set of vertices. Another classical approach in the domain is to 
consider non-deterministic (and non-stochastic) update modes like the asynchronous 
one~\cite{Remy2003,Richard2007,Thomas1973} (stochastic asynchronicity, however, has 
been well studied in the context of cellular
automata~\cite{Dennunzio2013,Fates2014,Fates2005,Regnault2009}). Numerous 
studies have focused on the influence of update modes on the dynamical behaviors. 
From the theoretical point of view, among the most impacting analyses 
are~\cite{Aracena2009,Aracena2011,Goles2010} in the context of deterministic modes 
and~\cite{Noual2017} in that of non-deterministic ones. In both of these, the very 
relevance comes from the fact that the authors succeeded in explaining the influence 
of update modes on the dynamics of BNs by relating it to their static structures. 
From the applied point of view, the dynamical behaviors of many biological networks 
with different update schemes have been 
studied~\cite{Demongeot2010,Goles2013,Mendoza1999,Ruz2010,Ruz2014}.
This manner of studying the dynamical behaviors of biological networks is desirable 
when searching for biologically meaningful updating modes. However, as a matter of 
fact, although it is deeply interesting and relevant from formal points of view like 
mathematical and computational ones, this manner that consists in studying biological 
networks by considering as much updating schemes as possible is rather tedious (due to 
the infinite number of updating schemes, an updating scheme being defined from a 
general point of view as a function associating any subset of nodes with each time step 
of $\mathbb{N}$, \ie an infinite sequence of subsets of nodes) when the objective is 
fixed on the biological matter. Another approach that allows adding asynchronicity is 
based on the concept of delay. In the context of discrete modeling of biological 
regulation networks, among the first who have introduced delays is certainly 
Thomas~\cite{Thomas1991,Thomas1988,Thomas1995} whose works have been followed by many 
other in different 
frameworks~\cite{Ahmad2008,Bernot2004,Fromentin2010,Ren2008,Ribeiro2014}. Here, we 
make choice using a distinct approach based on considering BNs with memory, as the 
model studied in~\cite{Graudenzi2011a,Graudenzi2011b} that was initially developed by 
Graudenzi and Serra under the name of \emph{gene protein Boolean networks} 
(GPBNs)~\cite{Graudenzi2010}. As this name suggests, in this model, each vertex of 
the classical Boolean network is decoupled into both a gene vertex and a protein 
vertex, so that each pair of such vertices is associated with a decay time that acts 
as a memory standing for the number of steps during which the protein vertex remains 
active. 

In the seminal papers~\cite{Graudenzi2011a,Graudenzi2011b}, the authors focus 
on the provision of the memory effect due the addition of decay times. In 
particular, thanks to numerical simulations, they highlight very interesting 
properties: the memory effect significantly affects the robustness of the 
computational model itself against state perturbations, with respect to the classical 
model of random BNs; the more the maximum decay time value, the less the network admits 
asymptotic degrees of freedoms, \ie attractors; higher values of the maximum decay time 
results in longer limit cycles associated to attraction basins that are more ordered 
than in the case of (random) BNs. 

From this, we are convinced that this model deserves to be deeply studied, from both 
theoretical and applied points of view. That is what we propose to do in this paper, by 
following a constructive approach. Indeed, we will see that the GPBN model proposed by 
Graudenzi et al. is not more powerful than that of classical BNs from a strictly 
computational standpoint. Doing so, we will develop another equivalent intermediate 
representation merging gene and protein vertices that simplify substantially the phase 
space. This representation will be called \emph{Memory Boolean networks} (MBNs). We 
will also focus on specific classes of networks and pay particular attention to two 
genes networks which, despite their small size, allow acquiring much knowledge about 
the model. In addition, under a biological context, network traditionally are small, 
for example: Quorum-sensing systems in the plant growth-promoting bacterium ($5$ 
nodes)~\cite{Zuniga2017}, \textit{lac} operon in \textit{Escherichia coli} ($10$ 
nodes, which can be even further reduced to $3$ nodes)~\cite{Veliz2011}, 
oxidative stress response ($6$ nodes)~\cite{Leifeld2018}. Furthermore, 
there are examples where no prior knowledge (key genes) is available, and therefore, 
key genes cannot be selected from the hundreds or thousands of genes beforehand. In 
these cases, small Boolean networks have been inferred, where the nodes are metagenes 
(a group of genes that have similar co-expression patterns) identified via clustering 
in an earlier stage of the analysis, for example, the network of \textit{Arabidopsis 
thaliana} saline stress response ($12$ meta genes nodes, originally $569$ genes that 
were differentially regulated due to salt exposure)~\cite{Ruz2015}. 
Eventually, a pertinent constructive track initiated by Alon et 
al.~\cite{Alon2003,Alon2002,Alon2004} to achieve a better understanding of genetic 
networks consists in viewing them as compositions of small regulation motifs of $2$ or 
$3$ nodes (considered as ``building blocks of complex networks'') that deserve to be 
studied \textit{per se} before tackling their compositions. As a consequence, 
theoretical analyses of small networks are of interest in the context of modeling. 

This theoretical part will be followed by applications to two real biological systems: 
the immune control of the $\lambda$-phage and the genetic control of the floral 
morphogenesis of the plant \emph{Arabidopsis thaliana}.

\section{The models}
\label{sec:models}

\subsection{Definitions and notations}
\label{sec:models_def}

\subsubsection{Boolean networks (BNs)}
\label{sec:models_def_bns}

A BN $F$ of size $n$, \ie composed of $n$ genes, is a collection of $n$ Boolean local 
transition functions such that $F = (f_i: \B^n \to \B, f_i(x) \mapsto x_i)_{i \in 
\{1, \dots, n\}}$, where $x$ denotes a \emph{configuration} of $F$, and $x_i$ denotes 
the state of gene $i$. In a function $f_i$, consider it being minimal, if there is a 
positive (resp. negative) literal, for instance $x_j$ (resp. $\neg x_j$), this means 
that gene $j$ tends to activate (resp. inhibit) gene $i$. In other terms, the state 
of $i$ tends to mimic (resp. negate) that of $j$. From this can be easily derived a 
digraph $G = (V,E)$ where the vertex set is $V = \{1, \dots, n\}$ and where 
$E = \{(j, s, i)\ |\ s = + \text{ (resp. } s = - \text{) if } x_j \text{ (resp. } 
\neg x_j \text{) appears  in the definition of } f_i\}$. Such a graph $G$ is called 
the \emph{interaction graph} of $F$. As in Kauffman's seminal 
work~\cite{Kauffman1969}, let us consider for now on that BNs evolve in such a way 
that every gene updates its (expression) state at each time step, \ie in parallel. In 
this specific framework, the (global) dynamics of a BN $F$ is simply given by 
$\forall x \in \B^n,\ F(x) = (f_1(x), f_2(x), \dots, f_{n}(x))$, and can be written 
(by emphasing time steps): $\forall i \in V, \forall t \in \N,\ x_i(t+1) = 
f_i(x(t))$. Such a dynamics can be represented by its \emph{transition graph} that is 
the digraph $\mathcal{G} = (\B^n, F)$ (see Figure~\ref{fig_BN_example}).
\begin{figure}[t!]
	\begin{center}
		\begin{minipage}{.36\textwidth}
			\centerline{\scalebox{.85}{\input{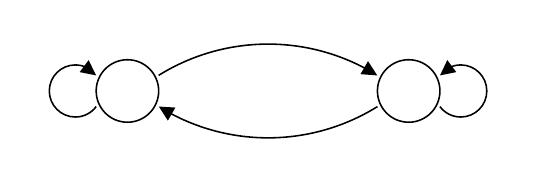_t}}}
		\end{minipage}
		\quad
		\begin{minipage}{.275\textwidth}
			\centerline{\setlength{\tabcolsep}{2pt}\begin{tabular}{|cc|cc|}
				\hline
				$x_1$ & $x_2$ & $f_1$ & $f_2$\\
				\hline\hline
				0 & 0 & 0 & 0\\
				0 & 1 & 0 & 0\\
				1 & 0 & 1 & 1\\
				1 & 1 & 0 & 0\\
				\hline
			\end{tabular}}
		\end{minipage}
		\quad
		\begin{minipage}{.235\textwidth}
			\scalebox{.85}{\input{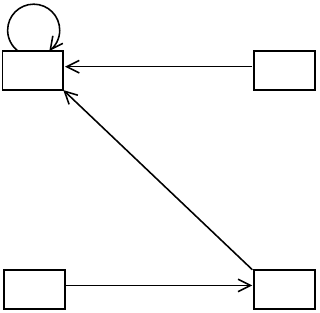_t}}
		\end{minipage}\smallskip
		
		\begin{minipage}{.36\textwidth}
			\centerline{\scalebox{.85}{(a)}}
		\end{minipage}
		\quad
		\begin{minipage}{.275\textwidth}
			\centerline{\scalebox{.85}{(b)}}
		\end{minipage}
		\quad		
		\begin{minipage}{.235\textwidth}
			\centerline{\scalebox{.85}{(c)}}
		\end{minipage}\medskip
	\end{center}
	\caption{(a) Interaction graph, (b) truth tables of its local transition 
		functions and (c) transition graph of the BN composed of two genes $1$ 
		and $2$, defined by the local transition functions $f_1(x) = f_2(x) = 
		x_1 \land \neg x_2$. The transition graph shows that this BN admits 
		only one attractor, fixed point $(0,0)$.}
	\label{fig_BN_example}
\end{figure}
This graph represents more precisely the trajectories of all configurations towards 
attractors that are either fixed points or limit cycles as explained in the 
introduction.

\subsubsection{Gene protein Boolean networks (GPBNs)}
\label{sec_models_def_gpbns}

\noindent GPBNs were presented in~\cite{Graudenzi2010}. A GPBN $F$ can be viewed 
similarly to a BN by its interaction graph $G = (V,E)$, where each vertex of $V$ is 
decoupled into a gene and its associated protein. So, each gene of the network is 
strictly linked to a unique and specific protein. The vertex set is defined as $V = 
\{G_1, \dots, G_N, P_1, \dots, P_N\}$ and the Boolean local transition functions are 
given by $(f_{G_i}, f_{P_i}: \B^{N} \to \B)_{i \in \{1, \dots, N\}}$. Let us consider 
configuration $x = (x_{G_1}, \ldots,  x_{G_N}, x_{P_1}, \ldots, x_{P_N})$. If 
$x_{G_i} = 1$ (resp. $0$) then it means that gene $G_i$ is expressed or active 
(resp. unexpressed or inactive), and if $x_{P_i} = 1$ (resp. $0$), it means that 
protein $P_i$ is present (resp. absent) in the underlying cell. Every protein $P_i$, 
with $1 \leq i\leq N$, is associated with a \emph{decay time} $dt_i \in \N$. This 
decay time $dt_i$ defines the number of time steps during which $P_i$ remains present 
in the cell after having been produced by the punctual expression of gene $G_i$. 
Moreover, in this model, a delay of one time step is considered between a gene 
punctual expression and a protein for it to be considered as present. 

Formally, the global dynamics of a GPBN $F$ is defined as $\forall x \in \B^n\text{, 
with } n = 2N,\ F(x) = (f_{G_1}(x), \ldots, f_{G_N}(x), f_{P_1}(x), \ldots, 
f_{P_N}(x))$, where $\forall i \in \{1, \ldots, N\}$ and for any time step $t \in 
\N$:
\begin{multline*}
	x_{G_i}(t+1) = f_{G_i}(x(t))\quad \text{and}\quad 
	x_{P_i}(t+1) = \begin{cases}
		1 & \text{if } \Delta_{i}(t+1) \geq 1\\
		x_{G_i}(t) & \text{if } \Delta_{i}(t+1) = 0
	\end{cases}\text{,}
\end{multline*}
with:
\begin{equation*}
	\left\lbrace
		\begin{array}{rl}
			\Delta_i(0) & = \begin{cases}
				0 & \text{if } x_{P_i}(0) = 0\\
				\alpha \in \{1, \ldots, dt_i\} & \text{if } x_{P_i}(0) = 1
			\end{cases}\\
			\Delta_i(t+1) & = \begin{cases}
				0 & \text{if } x_{G_i}(t) = 0 \land \Delta_i(t) = 0\\
				\Delta_i(t)-1 & \text{if } x_{G_i}(t) = 0 \land \Delta_i(t) > 0\\
				dt_i & \text{if } x_{G_i}(t) = 1\\
			\end{cases}
		\end{array}
	\right.\text{.}
\end{equation*}

\subsubsection{Memory Boolean networks (MBNs)}
\label{sec_models_def_mbns}

In this paper, we propose a new model, that of MBNs. A MBN is defined by a digraph 
$G = (V,E)$, with $V = \{1, \dots, n\}$ and the Boolean local transition functions 
$(f_i: \B^n \to \B)_{i \in \{1, \dots, n\}}$. To this model is added a vector of 
delays $dt \in (\N \setminus \{0\})^n$ such that a firing vertex $i$ will remain at 
state $1$ during $dt_i$ time steps. Given an initial condition $x(0) \in \{0,1\}^n$ 
we consider the delays of each vertex so that $\Delta_i(0) = 0$ if $x_i(0) 
= 0$ and $\Delta_i(0) \in \{1, \dots, dt_i\}$ if $x_i(0) = 1$. More formally, a 
MBN is defined as $\forall x \in \B^n,\ F(x) = (f_1(x), f_2(x), \dots, f_{n}(x))$, 
where the update is:
\begin{equation*}
	x_i(t+1) = \begin{cases}
		1 & \text{if } \Delta_i(t+1) \geq 1\\
		f_i(x(t)) & \text{if } \Delta_i(t+1) = 0
	\end{cases}\text{,}
\end{equation*}	
and the delays are:
\begin{equation*}
	\Delta_i(t+1) = \begin{cases}
		0 & \text{if } f_i(x(t)) = 0 \text{ and } \Delta_i(t) = 0\\
		\Delta_i(t)-1 & \text{if } f_i(x(t)) = 0 \text{ and } \Delta_i(t) > 0\\
		dt_i & \text{if } f_i(x(t)) = 1\\
	\end{cases}\text{.}
\end{equation*}

In the sequel, we will use an abuse of notation for not burdening the reading. Rather 
than decoupling the state $x_i \in \B$ of a vertex $i$ from its associated 
delay $dt_i \in \N$, we will simply change the notation into $x_i \in \{0, \dots, 
dt_i\}$ so that, for all $i \in V$, if $x_i = 1$ and $dt_i = 2$, we will usually 
write $x_i = 2$. Due to this abuse of notation, if we consider for instance a MBN of 
size $3$ such that $dt = (2,1,1)$, configuration $x$ at time step $t$ denoted by 
$x(t) = (2,0,1)$ stands for the Boolean configuration $(1,0,1)$ (with $dt = (1,1,1)$) 
in the model such as it has been formally defined above.

\subsection{Equivalence(s) between BNs, GPBNs, MBNs}
\label{sec_models_equiv}

Definitions of BNs, GPBNs and MBNs above emphasize that these models match on many 
aspects. Here we will show that these models are equivalent in the following 
sense: given a GPBN $F$, it is always possible to build a MBN $F'$ and a BN 
$\tilde{F'}$ such that $F$, $F'$ and $\tilde{F'}$ admit equivalent asymptotic 
behaviors, in terms of type and number of attractors (of course, the attractors are 
not exactly composed of the same recurrent configurations because of the compression 
induced by the construction).
\begin{figure}[t!]
	\begin{center}
		\begin{minipage}{.175\textwidth}
			\centerline{\scalebox{.85}{\input{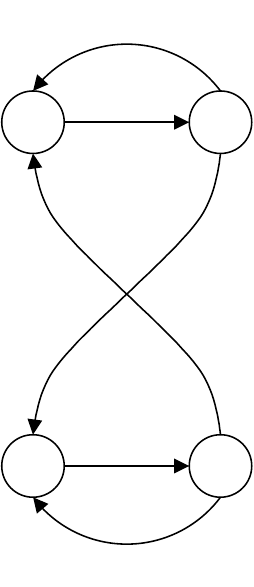_t}}}
		\end{minipage}
		\quad
		\begin{minipage}{.4\textwidth}
			\centerline{\footnotesize\setlength{\tabcolsep}{1pt}
				\begin{tabular}{|cccc|cccc|}
					\hline
					$x_{P_1}$ & $x_{P_2}$ & $x_{G_1}$ & $x_{G_2}$ & 
					$f_{P_1} $ & $f_{P_2}$ & $f_{G_1}$ & $f_{G_2}$\\
					\hline
					\hline
					0 & 0 & 0 & 0 & 0 & 0 & 0 & 0 \\
					0 & 1 & 0 & 0 & 0 & 0 & 0 & 0 \\
					1 & 0 & 0 & 0 & 0 & 0 & 1 & 1 \\
					1 & 1 & 0 & 0 & 0 & 0 & 0 & 0 \\
					2 & 0 & 0 & 0 & 1 & 0 & 1 & 1 \\
					2 & 1 & 0 & 0 & 1 & 0 & 0 & 0 \\
					\hline
					0 & 0 & 0 & 1 & 0 & 1 & 0 & 0 \\
					0 & 1 & 0 & 1 & 0 & 1 & 0 & 0 \\
					1 & 0 & 0 & 1 & 0 & 1 & 1 & 1 \\
					1 & 1 & 0 & 1 & 0 & 1 & 0 & 0 \\
					2 & 0 & 0 & 1 & 1 & 1 & 1 & 1 \\
					2 & 1 & 0 & 1 & 1 & 1 & 0 & 0 \\
					\hline
					0 & 0 & 1 & 0 & 2 & 0 & 0 & 0 \\
					0 & 1 & 1 & 0 & 2 & 0 & 0 & 0 \\
					1 & 0 & 1 & 0 & 2 & 0 & 1 & 1 \\
					1 & 1 & 1 & 0 & 2 & 0 & 0 & 0 \\
					2 & 0 & 1 & 0 & 2 & 0 & 1 & 1 \\
					2 & 1 & 1 & 0 & 2 & 0 & 0 & 0 \\
					\hline
					0 & 0 & 1 & 1 & 2 & 1 & 0 & 0 \\
					0 & 1 & 1 & 1 & 2 & 1 & 0 & 0 \\
					1 & 0 & 1 & 1 & 2 & 1 & 1 & 1 \\
					1 & 1 & 1 & 1 & 2 & 1 & 0 & 0 \\
					2 & 0 & 1 & 1 & 2 & 1 & 1 & 1 \\
					2 & 1 & 1 & 1 & 2 & 1 & 0 & 0 \\
					\hline
				\end{tabular}
			}
		\end{minipage}
		\quad
		\begin{minipage}{.325\textwidth}
			\centerline{
				\scalebox{.75}{\input{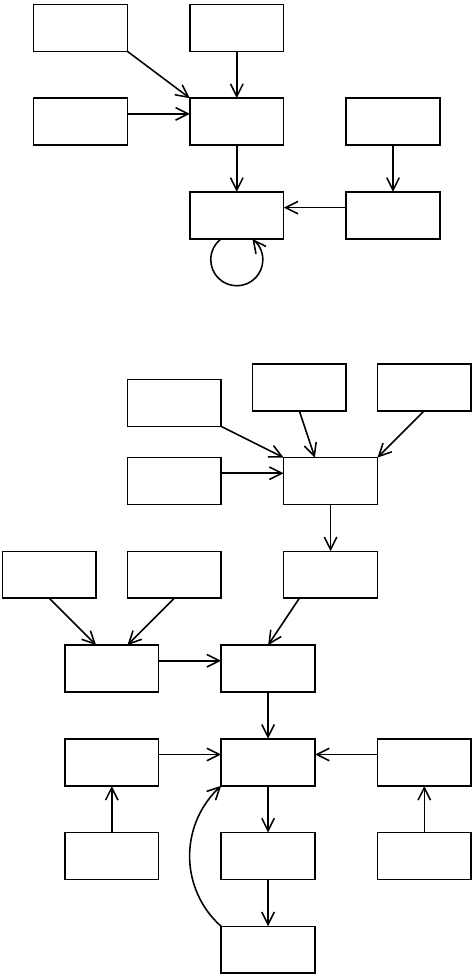_t}}
			}
		\end{minipage}\medskip
		
		\begin{minipage}{.175\textwidth}
			\centerline{\footnotesize (a)}
		\end{minipage}
		\quad
		\begin{minipage}{.4\textwidth}
			\centerline{\footnotesize (b)}
		\end{minipage}
		\quad
		\begin{minipage}{.325\textwidth}
			\centerline{\footnotesize (c)}
		\end{minipage}
	\end{center}
	\caption{(a) Interaction graph, (b) transition tables of its local transition 
		functions and (c) transition graph of the GPBN of Example~\ref{example1}, 
		with $dt = (2, 1)$.}
	\label{fig:GPBN}
\end{figure}

First, let us analyze the equivalence between GPBNs and MBNs that have similar 
characteristics (as delay memory). Now, if we eliminate the intermediate translation 
of the genetic state $1$ to the protein state from the GPBN framework, both behaviors 
are the same. More precisely, if in the GPBN model we define $f_i : \B^N \to \{0,1\}$ 
as
\begin{equation*}
	\forall i,\ x_i(t+1) = f_i(x(t)) = \begin{cases}
		1 & \text{if } \Delta_i(t + 1) \geq 1\\
		f_{G_i}(x(t)) & \text{if } \Delta_i(t + 1) = 0
	\end{cases}\text{,}
\end{equation*}
and we take $x = (x_1, \dots, x_N)$ where, $\forall i \in \{1, \dots, N\}$, $x_i = 
x_{P_i}$ and
\begin{equation*}
	\begin{array}{rl}
		\Delta_i(0) & = \begin{cases}
			0 & \text{if } x_{i}(0) = 0\\
			\alpha \in \{1, \ldots, dt_i\} & \text{if } x_{i}(0) = 1
		\end{cases}\\
		\Delta_i(t+1) & = \begin{cases}
			0 & \text{if } f_{G_i}(x(t)) = 0 \text{ and } \Delta_i(t) = 0\\
			\Delta_i(t)-1 & \text{if } f_{G_i}(x(t)) = 0 \text{ and } 
				\Delta_i(t) > 0\\
			dt_i & \text{if } f_{G_i}(x(t)) = 1\\
		\end{cases}\text{,}
	\end{array}
\end{equation*}
we get the MBN model. 

\begin{example}\normalfont
	\label{example1}
	Consider the GPBN defined by means of the local transition functions $f_{G_1} = 
	f_{G_2} = x_{P_1} \land \neg x_{P_2}$ and delay vector $dt = (2, 1)$. The 
	interaction graph of this network in Figure~\ref{fig:GPBN}.a. Denoting each state 
	$x_{P_i} = k$ when $x_{P_i} = 1$ with variable memory $k$, \ie if delays $dt = 
	(dt_1, dt_2) = (2, 1)$ then $(x_{P_1}, x_{P_2}) \in \{0,1,2\} \times \{0,1\}$, 
	then we transform the model into this equivalent form in which delays are 
	integrated to the protein states:
	\begin{equation*}
		x_{P_1}(t+1) = \begin{cases}
			\Delta_{1}(t+1) & \text{if } \Delta_{1}(t+1) \geq 1\\
			x_{G_1}(t) & \text{if } \Delta_{1}(t+1) = 0
		\end{cases}
		\quad \text{and}\quad
		x_{P_2}(t+1) =x_{G_2}(t)\text{.}
	\end{equation*}
	Figures~\ref{fig:GPBN}.b and c picture the dynamical behavior of the GPBN, 
	emphasizing the existence of two attractors, the fixed point $(0,0,0,0)$ and a 
	limit cycle of size $3$. Now, by eliminating  the intermediate translation from 
	gene state $1$ to the protein state, this GPBN can be easily transformed into a 
	MBN whose vertex set is $V = \{P_1, P_2\}$, local transition functions are:
	\begin{equation*}
		x_{P_i}(t+1) = f_i(x(t)) = \begin{cases}
			1 & \text{if } \Delta_i(t+1) \geq 1\\
			x_{P_1}(t) \land \neg x_{P_2}(t) & \text{if } \Delta_i(t+1) = 0
		\end{cases}\text{,}
	\end{equation*}
	and delay vector is $dt=(2,1)$, as pictured in Figure~\ref{fig:MBN}.
	\begin{figure}[t!]
	\begin{center}
		\begin{minipage}{.36\textwidth}
			\centerline{\scalebox{.85}{\input{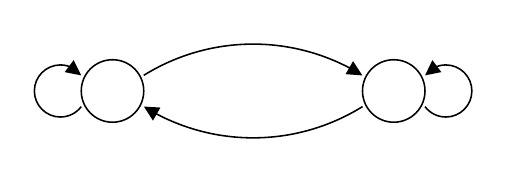_t}}}
		\end{minipage}
		\quad
		\begin{minipage}{.275\textwidth}
			\centerline{\setlength{\tabcolsep}{2pt}\begin{tabular}{|cc|cc|}
				\hline
				$x_{P_1}$ & $x_{P_2}$ & $f_{1} $ & $f_{2}$\\
				\hline\hline
				0 & 0 & 0 & 0\\
				0 & 1 & 0 & 0\\
				1 & 0 & 2 & 1\\
				1 & 1 & 0 & 0\\
				2 & 0 & 2 & 1\\
				2 & 1 & 1 & 0\\
				\hline
			\end{tabular}}
		\end{minipage}
		\quad
		\begin{minipage}{.235\textwidth}
			\scalebox{.85}{\input{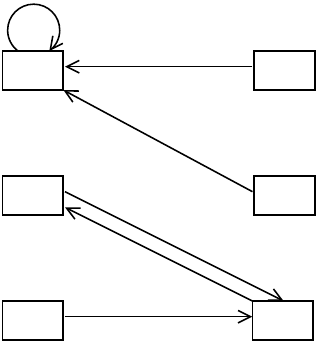_t}}
		\end{minipage}\smallskip
	
		\begin{minipage}{.36\textwidth}
			\centerline{\scalebox{.85}{(a)}}
		\end{minipage}
		\quad
		\begin{minipage}{.275\textwidth}
			\centerline{\scalebox{.85}{(b)}}
		\end{minipage}
		\quad		
		\begin{minipage}{.235\textwidth}
			\centerline{\scalebox{.85}{(c)}}
		\end{minipage}\medskip
	\end{center}
	\caption{(a) Interaction graph, (b) transition tables of its local transition 
		functions $f_1$ and $f_2$ and (c) transition graph of the MBN related to 
		the GPBN given in Example~\ref{example1}.}
	\label{fig:MBN}
    \end{figure}
\end{example}

Now, let us see to what extent the MBN model is equivalent to the BN model. To do so, 
let us consider a MBN $F$ defined over an interaction graph $G = (V = \{1, \dots, 
n\}, E)$ with $n$ Boolean functions $(f_1, \dots, f_n)$ and delay vector $dt$. The 
idea is to find a BN $\tilde{F}$ without delays that simulates the asymptotic 
dynamics of $F$. To do so, to every vertex $v \in V$ we associate a set of vertices 
$\{[v,1], \dots, [v,dt_v]\}$, consider the neighborhood of vertex $v$, $\mathcal{N}_v 
= \{j \in V\ |\ (j, v) \in E\} = \{j_1, \dots, j_r\}$. So in the new network 
(constructed by replacing each vertex $v$ by a set of $dt_v$ vertices) we consider 
the following Boolean functions:
\begin{equation*}
	\begin{array}{llcl}
		\forall v \in V,\ &
		\tilde{f}_{[v, dt_v]}
		(x_{[1,1]}, \dots, x_{[n,dt_n]}) & 	
		= &
		f_v(x_{[j_1, 1]}, \dots, x_{[j_r, 1]})\\
		&
		\tilde{f}_{[v, dt_v-1]}
		(x_{[1,1]}, \dots, x_{[n,dt_n]}) &
		= & 
		f_v(x_{[j_1, 1]}, \dots, x_{[j_r, 1]}) \lor x_{[v, dt_v]}\\
		&& 
		\vdots &
		\\
		&
		\tilde{f}_{[v,1]}
		(x_{[1, 1]}, \dots, x_{[n,dt_n]}) & 	
		= & 
		f_v(x_{[j_1, 1]}, \dots, x_{[j_r, 1]}) \lor x_{[v,2]}
	\end{array}\text{,}
\end{equation*}
where the evolution of $\tilde{F}$ is such that $\tilde{x}_{[v, dt_v]}(t + 1) = 
f_v(x_{[j_1, 1]}(t), \dots, x_{[j_r, 1]}(t))$ and $\tilde{x}_{[v, a]}(t + 1) = 
f_v(x_{[j_1, 1]}(t), \dots, x_{[j_r, 1]}(t)) \lor x_{[v, a+1]}(t)$, $1 \leq a < 
dt_v$. This construction emphasizes an injective encoding $\phi$ of MBNs into BNs 
such that for any configuration $x$ of $F$, if $x \mapsto F(x)$ then $\tilde{x} = 
\phi(x) \mapsto \tilde{F}(\phi(x)) = \tilde{F}(\tilde{x})$, where $\tilde{x}$ is a 
configuration of $\tilde{F}$. Thus, if there exists a fixed point (resp. a limit 
cycle) among the attractors of $F$, $\tilde{F}$ admits also a fixed point (resp. a 
limit cycle) that is the encoding of the latter according to the given construction.
\begin{figure}[t!]
	\centerline{\scalebox{.85}{\input{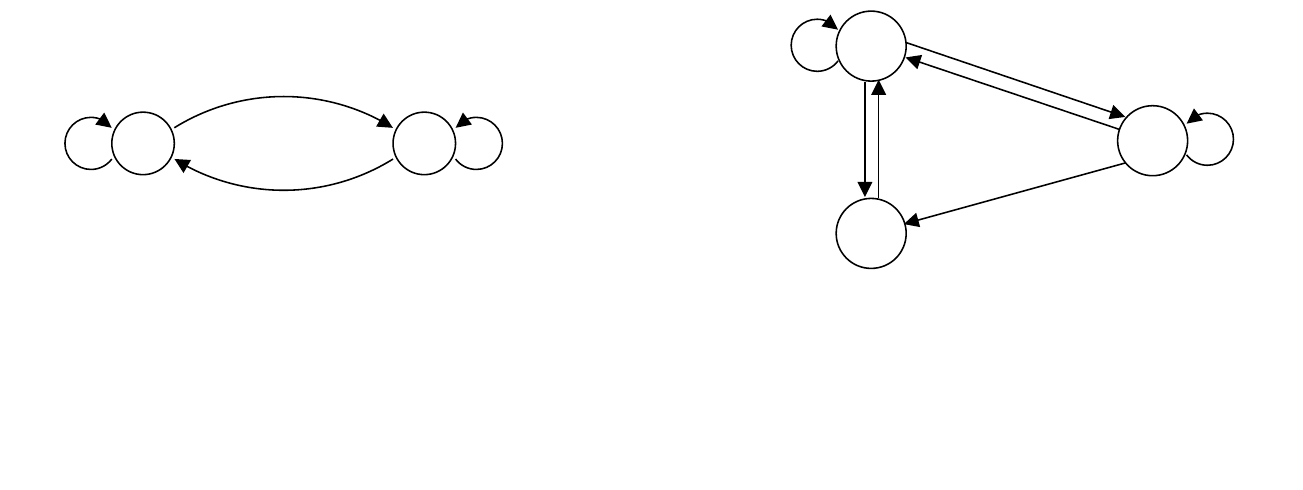_t}}}
	\caption{(left) Interaction graph and local transition functions defining the 
		MBN of Example~\ref{example1} and (right) its associated and equivalent BN.}
	\label{fig:BN}
\end{figure}

\begin{example}\normalfont
	\label{example2}\normalfont
	Let us consider the MBN of Example~\ref{example1}, in which we rename each 
	protein vertex $P_i$ by its index $i$ such that $V = \{1, 2\}$. From the 
	construction given above, we obtain its related BN as pictured at the right of 
	Figure~\ref{fig:BN} whose local transition function truth tables and transition 
	graph are illustrated in Figure~\ref{fig:BN_dyn}.
	
	Notice that by construction of our simulation, $x_{[1,1]}(t) = 0$ implies that 
	$x_{[1,2]}(t) = 0$. More precisely, $x_{[1,1]}(t) = 0 \iff x_{[1,1]}(t) =  
	f_1(x_{[1, 1]}(t-1), x_{[2, 1]}(t-1)) \lor x_{[1,2]}(t-1) = 0$ that implies  
	that $x_{[1,2]}(t-1) = 0$ and $f_1(x_{[1, 1]}(t-1), x_{[2, 1]}(t-1))=0$. As a 
	consequence, since $x_{[1, 2]}(t) = f_1(x_{[1, 1]}(t-1), x_{[2, 1]}(t-1)) = 
	x_{[1, 1]}(t) = 0$, the configurations $x = (x_{[1,1]}, x_{[1, 2]}, 
	x_{[2, 1]}) = (0, 1, a)$, with $a \in \{0,1\}$, are artefacts of the 
	construction and are not real parts of the dynamical behavior.
	\begin{figure}[t!]
		\begin{center}
			\begin{minipage}{.38\textwidth}
				\centerline{\footnotesize \setlength{\tabcolsep}{1pt}
					\begin{tabular}{|ccc|ccc|}
						\hline
						$x_{[1,1]}$ & $x_{[1,2]}$ & $x_{[2,1]}$ &$f_{[1,1]}$ & 
						$f_{[1,2]}$ & $f_{[2,1]}$\\
						\hline
						\hline
						0 & 0 & 0 & 0 & 0 & 0\\
						0 & 0 & 1 & 0 & 0 & 0\\
						0 & 1 & 0 & 1 & 0 & 0\\
						0 & 1 & 1 & 1 & 0 & 0\\
						1 & 0 & 0 & 1 & 1 & 1\\
						1 & 0 & 1 & 0 & 0 & 0\\
						1 & 1 & 0 & 1 & 1 & 1\\
						1 & 1 & 1 & 1 & 0 & 0\\
						\hline
					\end{tabular}
				}
			\end{minipage}
			\qquad
			\begin{minipage}{.55\textwidth}
				\centerline{\scalebox{.85}{\input{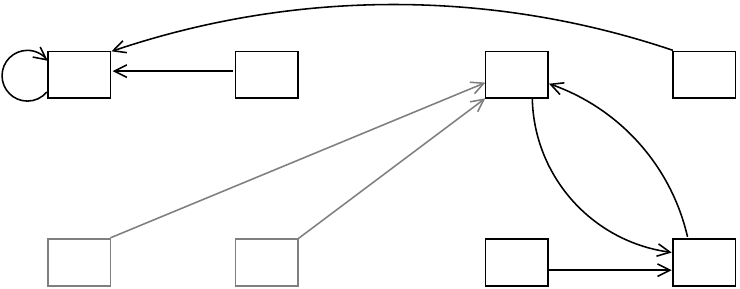_t}}}
			\end{minipage}\medskip
		
			\begin{minipage}{.38\textwidth}
				\centerline{\footnotesize (a)}
			\end{minipage}\qquad
			\begin{minipage}{.55\textwidth}
				\centerline{\footnotesize (b)}
			\end{minipage}
		\end{center}
		\caption{(a) Truth tables of the local transition functions  $f_{[1,1]}$, 
			$f_{[2,1]}$, $f_{[1,2]}$ and (b) transition graph of the BN constructed in 
			Figure~\ref{fig:BN} that is equivalent to the MBN of 
			Example~\ref{example1}.}
		\label{fig:BN_dyn}
	\end{figure}
\end{example}

\subsection{Some particular classes of MBNs}

In this section, we present general results that hold for specific classes of MBNs. 
For our purpose, let us first focus on the class of positive disjunctive MBNs, \ie 
MBNs in which every vertex is associated with a local transition function composed 
only of the Boolean operator \textsc{or} with no negated variables. Before presenting 
any result, let us give two definitions~\cite{Brualdi1991}. First, the \emph{index of 
imprimitivity} $\eta(G)$ of a strongly connected digraph $G$ is the greatest common 
divisor of the lengths of all cycles of $G$. Second, the adjacency matrix $M$ of a 
strongly connected digraph $G$ is \emph{primitive} if and only if $M$ is an 
irreducible square matrix for which there exists a positive integer $m$ such that 
$\forall k \geq m$, $M^k$ is a strictly positive matrix. These definitions are 
related by the fact that, given a strongly connected digraph, its index of 
imprimitivity equals $1$ if and only if its adjacency matrix is primitive. 
Proposition~\ref{prop:disj_MBN} below emphasizes that such networks cannot admit 
limit cycles. 

\begin{proposition}
	\label{prop:disj_MBN}
	Let $F$ be a strongly connected positive disjunctive MBN and let $G = (V,E)$ be 
	its interaction graph. If $F$ is such that at least one vertex $v$ admits a delay 
	greater than $1$ ($dt_v \geq 2$), then $F$ does not have any limit cycle and can 
	only converge towards two fixed points: $\fpzero$ and $(dt_1, \dots, dt_n)$.
\end{proposition}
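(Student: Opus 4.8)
The plan is to track how "activity" — i.e.\ having state $\geq 1$ — spreads through the network, using positivity and disjunctiveness to show that once something is firing it can never go back to $0$, so the dynamics is monotone along the nonzero orbit and must converge to a fixed point. First I would set up the key monotonicity observation: in a positive disjunctive MBN, $f_i(x) = \bigvee_{j \in \mathcal{N}_i} [x_j \geq 1]$, so $f_i(x) = 1$ as soon as \emph{any} in-neighbor of $i$ is nonzero. Consequently, if $x_i(t) \geq 1$ then either $x_i$ is still counting down a delay (so $x_i(t+1) \geq 1$ trivially) or $f_i(x(t)) \geq 1$ is evaluated; but in the latter case I will argue that some in-neighbor $j$ of $i$ had $x_j(t') \geq 1$ at a recent enough time, and propagate this. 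The cleaner route: define the support $S(t) = \{ i \in V : x_i(t) \geq 1\}$ and show $S(t) \subseteq S(t+1)$, i.e.\ the support is nondecreasing. Indeed if $i \in S(t)$ with $\Delta_i(t) > 0$ then $x_i(t+1)\geq 1$; if $\Delta_i(t)=0$ but $x_i(t)\geq 1$, then $x_i(t)=f_i(x(t-1))$-value was $1$, forced by some nonzero in-neighbor, and one checks $f_i(x(t))=1$ as well because that in-neighbor is still nonzero (here the delay-$\geq 2$ hypothesis at $v$ is what keeps a firing neighbor alive one step longer so the cascade does not stall). So $S$ is monotone nondecreasing in $V$, hence stabilizes after at most $n$ steps to some set $S^\ast$.

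Next I would do a case split on $S^\ast$. If $S^\ast = \emptyset$, the orbit is eventually at $\fpzero$, done. If $S^\ast \neq \emptyset$, I claim $S^\ast = V$: since $G$ is strongly connected, any nonempty set that is "closed under in-neighbors being eventually activated" must be all of $V$ — concretely, if $i \in S^\ast$ and $(i,k)\in E$ then $f_k$ will see the nonzero coordinate $i$ and fire, so $k \in S^\ast$ too; strong connectivity then forces $S^\ast = V$. Once $S^\ast = V$, every vertex is permanently nonzero, so for every $i$ and every large $t$, $f_i(x(t)) = 1$ (all in-neighbors nonzero), which means every vertex is continuously re-fired, hence its state is reset to $dt_i$ every step. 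Therefore for $t$ large, $x_i(t) = dt_i$ for all $i$, i.e.\ the orbit reaches the fixed point $(dt_1,\dots,dt_n)$. Finally I would note these two configurations are genuinely fixed points of $F$ (for $\fpzero$: all $f_i=0$ and all $\Delta_i = 0$; for $(dt_1,\dots,dt_n)$: all $f_i = 1$ so all $\Delta_i$ reset to $dt_i$), and that no limit cycle can exist because every orbit converges to one of them.

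**Where the delay $\geq 2$ enters, and the main obstacle.** The subtle point — and the one I expect to need the most care — is why the hypothesis "$dt_v \geq 2$ for at least one $v$" is needed at all, given that the argument above seems to work for any delays. The issue is the base case / timing of support monotonicity: with all delays equal to $1$, the MBN \emph{is} just the ordinary BN, and a positive disjunctive strongly connected BN whose interaction digraph has index of imprimitivity $\eta(G) > 1$ genuinely has limit cycles (rotating supports of period $\eta(G)$) — so the statement is simply false when all $dt_v = 1$. The role of a single delay $dt_v \geq 2$ is to "break" every cycle through $v$ in the relevant sense: a token passing through $v$ lingers there for an extra step, which desynchronizes the otherwise-periodic rotation and lets the support strictly grow until it fills $V$. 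So the heart of the proof is a careful argument that with at least one delay $\geq 2$, the support $S(t)$ is not merely "periodically permuted" but actually monotone nondecreasing — I would make this precise by showing that whenever $S(t) \neq \emptyset$ and $S(t) \neq V$, strong connectivity produces an in-neighbor relation forcing $|S(t+1)| > |S(t)|$ within a bounded number of steps (using that the extra delay at $v$ guarantees the "wavefront" never has to leave a vertex before its successor is activated). Once support monotonicity is nailed down, the rest is the routine cascade/strong-connectivity argument sketched above.
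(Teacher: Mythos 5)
Your plan hinges on the claim, in your first paragraph, that the support $S(t)=\{i : x_i(t)\geq 1\}$ is nondecreasing, and this claim is false; the hypothesis $dt_v\geq 2$ cannot rescue it, because it concerns a \emph{single} vertex $v$, whereas your argument needs every active in-neighbor to ``stay alive one step longer''. Concretely, take $f_1(x)=x_2$, $f_2(x)=x_1$ with $dt=(2,1)$ (strongly connected, positive disjunctive, one delay $\geq 2$): the configuration $(0,1)$ maps in one step to $(2,0)$, so vertex $2$ leaves the support, since its only in-neighbor was at $0$ and $dt_2=1$. Even the weakened claim in your second paragraph (that $|S(t)|$ cannot drop, or must strictly grow within a bounded number of steps) fails: in the paper's Example~3 ($f_1=x_2$, $f_2=x_1\lor x_3$, $f_3=x_2$, $dt=(2,1,1)$) the configuration $(1,0,1)$, i.e.\ $\Delta_1=1$, maps to $(0,1,0)$, so the support cardinality drops from $2$ to $1$. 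What is true and easy is only that a nonempty support stays nonempty (each active vertex fires all its out-neighbors); the hard part --- that activity eventually occupies all of $V$ instead of rotating or fluctuating forever --- is precisely the content of the proposition, so your appeal to a stabilized limit set $S^{\ast}$ closed under out-neighbors is circular once the monotonicity lemma is gone. You do correctly identify why $dt_v\geq 2$ is needed (with all delays $1$ an imprimitive disjunctive BN has rotating limit cycles), but identifying the role of the hypothesis is not the same as using it: your sketch never turns the extra delay at the single vertex $v$ into a valid global convergence argument.

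For comparison, the paper supplies exactly this missing ingredient by a global, primitivity-based argument rather than a local wavefront one: it simulates the MBN by an ordinary positive disjunctive BN (splitting $v_0$ into $dt_{v_0}$ vertices), observes that a cycle of length $q$ through $v_0$ in $G$ lifts to two directed cycles of lengths $q$ and $q+1$, hence $\gcd$ equal to $1$, index of imprimitivity $1$ and a primitive adjacency matrix, and then invokes the known result (Goles, 1980) that a strongly connected positive disjunctive BN with primitive adjacency matrix sends every configuration other than $\fpzero$ to $\fpone$; translating back gives the two fixed points $\fpzero$ and $(dt_1,\dots,dt_n)$. If you wish to keep a purely combinatorial route, you would essentially have to reprove that primitivity statement, e.g.\ by showing that for all sufficiently large $t$ there is a walk of length exactly $t$ from some initially active vertex to any given vertex (using that the two cycle lengths $q$ and $q+1$ through $v$ generate all large integers), which is delicate to do directly in the delayed semantics --- that is why the paper first reduces to a BN. As written, your proposal rests on two support-growth claims that are both false, so it does not establish the proposition.
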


\begin{proof} 
	Let us consider $F$ such that it is composed of a vertex $v_0 \in V$ of delay 
	$dt_{v_0} \geq 2$. Following the construction proposed above, we can obtain a BN 
	$\tilde{F}$ of interaction graph $\tilde{G} = (\tilde{V}, \tilde{E})$ that 
	simulates $F$. $G$ being strongly connected by hypothesis, $v_0$ 
	belongs to a cycle $C$ in $G$. Let us admit that $C$ is of length $q$ and such 
	that:\smallskip
	
	\centerline{$C =$ \quad \scalebox{.85}{\input{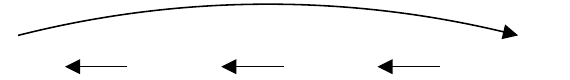_t}}.}\smallskip
	
	\noindent From this, we derive that, in $\tilde{G}$, there are at least the two
	following cycles $\tilde{C}$ and $\tilde{C}'$ (the latter being the direct 
	consequence of $dt_{v_0} \geq 2$):\smallskip
	
	\centerline{$\tilde{C} =$ \quad \scalebox{.85}{\input{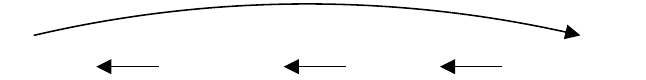_t}}}\smallskip
	
	\noindent and:
	
	\centerline{$\tilde{C}' =$ \quad 
		\scalebox{.85}{\input{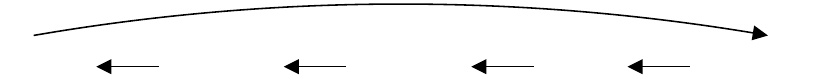_t}},}\smallskip

	\noindent of respective lengths $|\tilde{C}|$ and $|\tilde{C}'|$ such that 
	$|\tilde{C}'| = |\tilde{C}| + 1$, which implies that 
	$\gcd(|\tilde{C}|,|\tilde{C}'|) = 1$. Thus, $\eta(\tilde{G}) = 1$ and $\tilde{F}$ 
	is primitive. In this context, this means that for every configuration $x \neq 
	\fpzero$, there exists $t \in \N$ such that $x(t) = x \cdot \tilde{M}^t = 
	\fpone$, where $\tilde{M}$ is the adjacency matrix of 
	$\tilde{G}$~\cite{Goles1980}. As a result, $\tilde{F}$ does not admit any 
	limit cycle but two fixed points $\fpzero$ and $\fpone$. From this, and because 
	of the disjunctive nature of the underlying MBN, the only attractors of $F$ are 
	$\fpzero$ and $(dt_1, \dots, dt_n)$.
\end{proof}

\begin{example}\normalfont
	\label{example3}
	In order to illustrate Proposition~\ref{prop:disj_MBN}, let us consider the 
	positive disjunctive BN defined as $(f_1(x) = x_2, f_2(x) = x_1 \lor x_3, 
	f_3(x) = x_2)$. This BN admits three attractors: the fixed points $\fpzero$ and 
	$\fpone$ and the limit cycle $010 \leftrightarrows 101$. Consider now the related 
	MBN with the same local transition functions in which $dt = (2,1,1)$. For this 
	network, the only attractors are $\fpzero$ and $(2,1,1)$. 
\end{example}

Because positive disjunctive BNs may admit fixed points and limit cycles (cf. Example~\ref{example3}), 
Proposition~\ref{prop:disj_MBN} highlights that \emph{the introduction of ``memory'' 
may freeze the dynamics} (\ie limit cycles may disappear). Now, the question is to 
know if this freezing property is invariant under the addition of delays. 
Proposition~\ref{prop:dag} below shows that it is true for digraphs with no cycles 
except possibly positive loops. Notice that this result is an extension of that of 
Robert~\cite{Robert1986} about acyclic BNs.

\begin{proposition}
	\label{prop:dag}
	Let $F$ be a BN whose interaction graph $G = (V, E)$  such that $|V| = n$ does 
	not induce cycles except possibly positive loops. Every MBN built on $F$ admits 
	only fixed points.
\end{proposition}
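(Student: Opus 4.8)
The plan is to reduce the statement about an arbitrary MBN built on $F$ to an ordinary statement about an acyclic BN, and then invoke (an extension of) Robert's theorem. First I would apply the construction of Section~\ref{sec_models_equiv} to the MBN: each vertex $v$ of $G$ is replaced by the chain $[v,1],\dots,[v,dt_v]$, producing a BN $\tilde{F}$ on the interaction graph $\tilde{G}=(\tilde{V},\tilde{E})$ that simulates the asymptotic dynamics of the MBN. Since simulation preserves the type of each attractor (a fixed point maps to a fixed point, a limit cycle to a limit cycle), it suffices to prove that $\tilde{F}$ has only fixed points, i.e.\ that $\tilde{G}$ induces no cycle except possibly positive loops. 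The key observation is that the new edges introduced within a block, namely $[v,a+1]\to[v,a]$ for $1\le a<dt_v$, are all positive and, crucially, never form a cycle among themselves: they point strictly downward in the index $a$, so any cycle through these internal edges would have to leave the block and come back, which requires using the ``external'' edges that copy an edge $(j,v)$ of $G$ as $[j,1]\to[v,dt_v]$. Hence every cycle of $\tilde{G}$ of length $\ge 2$ projects, under the map $[v,a]\mapsto v$, onto a closed walk in $G$ of length $\ge 2$, which contradicts the hypothesis that $G$ has no cycle except possibly positive loops.

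More precisely, the main step is the following combinatorial lemma about $\tilde{G}$: every directed cycle of $\tilde{G}$ is either a positive loop $[v,a]\loopr$ (which can only occur at $[v,dt_v]$ and only if $v$ carries a positive loop in $G$, since the internal edges $[v,a+1]\to[v,a]$ are not loops), or it projects to a genuine cycle of $G$ of length $\ge 2$. To see this, trace a cycle $\gamma$ in $\tilde{G}$: whenever $\gamma$ uses an internal edge $[v,a+1]\to[v,a]$ the projection stays at $v$; whenever it uses an external edge $[j,1]\to[v,dt_v]$ the projection moves along the $G$-edge $(j,v)$. If $\gamma$ is not contained in a single block, it uses at least one external edge, and since external edges always land on the top vertex $[v,dt_v]$ of a block while internal edges only descend, the sequence of blocks visited by $\gamma$, read in order, is a closed walk in $G$; extracting a simple cycle from it gives a cycle of $G$ of length $\ge 2$, contradiction. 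If $\gamma$ is contained in a single block $\{[v,1],\dots,[v,dt_v]\}$, then it uses only internal edges $[v,a+1]\to[v,a]$ and possibly the external self-edge $[v,1]\to[v,dt_v]$ when $v$ has a loop; the internal edges alone cannot close up (they strictly decrease $a$), so $\gamma$ must be the single loop at $[v,dt_v]$, which exists only if $v$ has a loop in $G$, and that loop is positive by hypothesis.

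Having established that $\tilde{G}$ has no cycle except positive loops, I would finish by applying the cited extension of Robert's theorem on acyclic interaction graphs~\cite{Robert1986}: a BN whose interaction graph has no cycle other than positive loops converges, under the parallel update, to a fixed point from every initial configuration. (The positive loops are harmless: a vertex with a positive loop and no other incoming cycle stabilizes because its value is eventually monotone; formally one can treat the positive-loop vertices as a ``source layer'' that stabilizes first, then peel them off and induct on the remaining acyclic part.) Therefore $\tilde{F}$ has only fixed points, and pulling back through the simulation, so does the MBN. The part requiring the most care is the combinatorial lemma — specifically the bookkeeping that external edges only ever enter a block at its top vertex $[v,dt_v]$, which is exactly what forces every non-trivial cycle of $\tilde{G}$ to descend from $G$; the rest is a direct appeal to earlier results in the excerpt and to Robert's theorem.
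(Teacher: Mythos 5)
There is a genuine gap, and it lies in the combinatorial lemma on which your whole reduction rests. You assume that in the simulation of Section~\ref{sec_models_equiv} the ``external'' arcs copying an edge $(j,v)$ of $G$ enter a block only at its top vertex $[v,dt_v]$. That is not the construction: for \emph{every} level $1 \leq a \leq dt_v$ the function $\tilde{f}_{[v,a]}$ contains the term $f_v(x_{[j_1,1]},\dots,x_{[j_r,1]})$ (for $a<dt_v$ it is $f_v(\cdots) \lor x_{[v,a+1]}$), so there are arcs $[j,1]\to[v,a]$ for all $a$. Consequently, if a vertex $v$ carries a positive loop in $G$ and $dt_v\geq 2$, then $\tilde{G}$ contains the arc $[v,1]\to[v,2]$ (loop-derived, external) together with the internal arc $[v,2]\to[v,1]$, i.e.\ a cycle of length $2$ — and more generally cycles of every length up to $dt_v$ inside the block. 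So your lemma ``$\tilde{G}$ has no cycle except positive loops'' is false precisely in the case the proposition is meant to cover, and the Robert-type theorem you invoke does not apply to $\tilde{F}$. Nor can the gap be closed by observing that all these intra-block cycles are positive: under the parallel update a positive cycle of length $\geq 2$ does not by itself exclude limit cycles (the network $f_1(x)=x_2$, $f_2(x)=x_1$ has the limit cycle $01\leftrightarrows 10$). Finally, the obvious repair — redefining the simulation so that external arcs feed only $[v,dt_v]$, turning each block into a pure shift register — destroys the simulation itself: after a single firing of $f_v$ the bottom vertex $[v,1]$ would be active for one step only, not for $dt_v$ consecutive steps, so $\tilde{F}$ would no longer reproduce the firing memory.

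By contrast, the paper does not pass through the simulation at all: it argues directly on the MBN, stratifying $G$ into layers as for a DAG, observing that the first layer (constant or mimic vertices, i.e.\ sources and positive-loop vertices) freezes after at most $\max_i dt_i$ steps, and then inducting on the layer depth. Your own parenthetical remark (``treat the positive-loop vertices as a source layer that stabilizes first, then peel them off and induct'') is essentially this argument; if you carry it out on $\tilde{F}$ (or, more simply, on the MBN itself) it works, but as written your proof leans on a structural claim about $\tilde{G}$ that the construction does not satisfy, so the chain of reasoning as proposed does not go through.
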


\begin{proof}
	With or without positive loops, $G$ can be represented by layers of different 
	depths as a classical directed acyclic graph. Now, let us focus on the first 
	layer $\mathcal{L}_1 = \{i\ |\ f_i(x) \text{ is constant } \lor f_i(x) = x_i\}$ , 
	\ie the layer that contains only vertices that are either source vertices or 
	mimic vertices. By definition of the local transition functions, all the vertices 
	of this layer will remain fixed after at most $\max_{i\in \mathcal{L}_1}(dt_i)$ 
	time steps. Once fixed, an induction on the layers depths suffices to show that 
	the stability of the whole MBN is reached and, considering that there are $k$ 
	layers in the graph, that it is reached in at most $\sum_{j=1}^k\max_{i \in 
	\mathcal{L}_j}{dt_i}$ time steps.
\end{proof}

The BN class of the previous proposition is not the only one for which the freezing 
property remains invariant. As Proposition~\ref{prop:decreasing} states, it is also 
the case for decreasing (resp. increasing) BNs that are such that $\forall x \in 
\B^n, F(x) \leq x$ (resp. $F(x) \geq x$), where $F(x) \leq x$ 
(resp. $F(x) \geq x$) if and only if $\forall i \in \{1, \dots, n\}, f_i(x) \leq x_i$ (resp. $f_i(x) \geq x_i$).

\begin{proposition}
	\label{prop:decreasing}
	Let $F$ be a decreasing (resp. increasing) BN. Any MBN associated to $F$ 
	necessarily converges towards fixed points only.
\end{proposition}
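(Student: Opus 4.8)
The plan is to follow an arbitrary trajectory of an MBN built on $F$ through the \emph{support} of its configurations, and to show that this support evolves monotonically and therefore stabilizes, which forces convergence to a fixed point. Fix such an MBN, with local functions $(f_1,\dots,f_n)$ and delay vector $dt$, fix an initial condition, and for $t \in \N$ write $x(t)$ for the current configuration (equivalently, the Boolean projection of the delay vector, since $x_i(t) = 1 \iff \Delta_i(t) \geq 1$, and it is this projection that is fed to the $f_i$) and $A(t) = \{i \in V \mid x_i(t) = 1\}$ for its support. The first step I would carry out is to check that $A(t)$ is monotone. If $F$ is decreasing and $i \notin A(t)$, then $x_i(t) = 0$, so $f_i(x(t)) \leq x_i(t) = 0$, hence $f_i(x(t)) = 0$; since also $\Delta_i(t) = 0$, the update rule gives $\Delta_i(t+1) = 0$, i.e. $i \notin A(t+1)$, so $A(t+1) \subseteq A(t)$. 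If $F$ is increasing and $i \in A(t)$, then $f_i(x(t)) \geq x_i(t) = 1$, whence $\Delta_i(t+1) = dt_i \geq 1$ and $i \in A(t+1)$, so $A(t+1) \supseteq A(t)$. In either case $(A(t))_{t \in \N}$ is a monotone sequence of subsets of the finite set $V$, so there are $T \in \N$ and $A^\star \subseteq V$ with $A(t) = A^\star$ for all $t \geq T$.

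The second step is to read the asymptotic behaviour off this stabilization. For $t \geq T$ the configuration $x(t)$ is the fixed characteristic vector of $A^\star$, so each $f_i(x(t))$ equals a constant $\varphi_i \in \B$. I would then argue $\varphi_i = 1$ for every $i \in A^\star$: in the increasing case this is immediate from $\varphi_i \geq x_i(t) = 1$, and in the decreasing case, if $\varphi_i = 0$ the update rule would force $\Delta_i$ to decrease strictly by $1$ at each step from time $T$ on, reaching $0$ within at most $dt_i$ steps and removing $i$ from the support, contradicting $i \in A(t) = A^\star$ for all $t \geq T$. Symmetrically, $\varphi_i = 0$ for every $i \notin A^\star$ (in the decreasing case because $\varphi_i \leq x_i(t) = 0$; in the increasing case because $\varphi_i = 1$ would give $\Delta_i(t+1) = dt_i \geq 1$ and put $i$ back in the support). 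Hence for all $t \geq T$ one has $\Delta_i(t+1) = dt_i$ for $i \in A^\star$ and $\Delta_i(t+1) = 0$ for $i \notin A^\star$, so the configuration is constant from time $T+1$ onwards and equal to the configuration whose $i$-th coordinate is $dt_i$ for $i \in A^\star$ and $0$ otherwise; this is a fixed point. Thus every trajectory converges to a fixed point, i.e. the MBN admits only fixed points as attractors.

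The one genuinely non-routine point is the claim in the second step that, once the support has stabilized, every still-active vertex must in fact be firing at each step rather than quietly decaying: this is exactly what rules out oscillations hidden inside the delay counters while the Boolean projection is already constant, and it is where the decreasing (resp. increasing) hypothesis gets used a second time. Everything else is bookkeeping with the two update rules of MBNs. One could alternatively try to route the argument through the BN simulation $\tilde F$ of Section~\ref{sec_models_equiv}, but working directly with $(\Delta(t))_{t \in \N}$ is shorter and sidesteps the non-reachable ``artefact'' configurations of $\tilde F$.
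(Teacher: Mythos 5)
Your proof is correct and follows essentially the same route as the paper's: the decreasing (resp.\ increasing) hypothesis makes the support of the configuration monotone under the MBN update, which rules out limit cycles. You go further than the paper's two-line argument by explicitly verifying that, once the support stabilizes, the delay counters cannot go on oscillating (every surviving vertex must fire at each step), a point the paper leaves implicit, so your write-up is if anything more complete.
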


\begin{proof}
	Let us consider the case of a decreasing BN $F$ where $\forall x \in \B^n, F(x) 
	\leq x$. Such a global transition function implies locally that if 
	$x_i = 0$, it remains at that state because no transitions from state $0$ to 
	state $1$ are possible. As a consequence, whatever the delay vector $dt$ we 
	decide to associate to $F$ to create a MBN, the network cannot admit limit 
	cycles. The case of an increasing BN is analogous.
\end{proof}

However, despite the two classes previously presented, \emph{there obviously exist 
MBNs for which the introduction of ``memory'' does not lead to freeze their 
dynamics.} For instance, Figure~\ref{fig_BN_example} depicts a BN of size $2$ that 
admits only one fixed point, configuration $\fpzero$. Figure~\ref{fig:MBN} pictures 
the related MBN with $dt = (2,1)$ in which a limit cycle of length $2$ appears. 
Another more general example follows. Let us consider the BN $F: \B^n \to \B^n$ 
defined as:
\begin{equation*}
	F(x) = \begin{cases}
		(0,\dots,0, \stackbin{a\ b}{0{,}1}, 0,\dots,0) & \text{if } x = (0,\dots,0, 
			\stackbin{a\ b}{1{,}0}, 0,\dots,0)\\
		(0,\dots,0, \stackbin{a\ b}{1{,}0}, 0,\dots,0) & \text{if } x = (0,\dots,0, 
			\stackbin{a\ b}{1{,}1}, 0,\dots,0)\\
		\fpzero & \text{otherwise} 
	\end{cases}\text{.}
\end{equation*}
Clearly, this BN converges towards the fixed point $\fpzero$. However, adding delay 
can make a limit cycle appear. For instance, consider the associated MBN such that 
$dt = (\fpone \stackbin{a}{2} \fpone)$, \ie $\forall i \neq a, dt_i = 1$ and $dt_a = 
2$. It is easy to see that this MBN admits a limit cycle of length $2$ that is:
$(0,\dots,0, \stackbin{a\ b}{1{,}1}, 0,\dots,0) \leftrightarrows (0,\dots,0, \stackbin{a\ b}{2{,}0}, 0,\dots,0)$.

\section{MBN with two genes}
\label{sec:MBN2}

In this section, we analyze the dynamical behavior of every network composed of two 
vertices that admit fixed points. The idea is to highlight some of the main 
theoretical features of such interaction networks, which constitutes a first step for 
further formal studies of more general MBNs. 

\subsection{Networks admitting one fixed point}
\label{sec:MBN2_1fp}

\setlength{\tabcolsep}{2pt}
\begin{table}[t!]
	{\scriptsize \centerline{
		$\begin{array}{|c||m{.072\textwidth}|m{.072\textwidth}|
			m{.072\textwidth}|m{.072\textwidth}|m{.072\textwidth}|m{.072\textwidth}|
			m{.072\textwidth}|m{.072\textwidth}|m{.072\textwidth}|}
			\hline
			x & \makebox[.072\textwidth][c]{[1{,}00]} & 
				\makebox[.072\textwidth][c]{[2{,}00]} & 
				\makebox[.072\textwidth][c]{[3{,}00]} & 
				\makebox[.072\textwidth][c]{[4{,}00]} & 
				\makebox[.072\textwidth][c]{[5{,}00]} & 
				\makebox[.072\textwidth][c]{[6{,}00]} & 
				\makebox[.072\textwidth][c]{[7{,}00]} & 
				\makebox[.072\textwidth][c]{[8{,}00]} & 
				\makebox[.072\textwidth][c]{[9{,}00]}\\
			\hline\hline
			00 & \makebox[.072\textwidth][c]{00} & \makebox[.072\textwidth][c]{00} & 
				\makebox[.072\textwidth][c]{00} & \makebox[.072\textwidth][c]{00} & 
				\makebox[.072\textwidth][c]{00} & \makebox[.072\textwidth][c]{00} & 
				\makebox[.072\textwidth][c]{00} & \makebox[.072\textwidth][c]{00} & 
				\makebox[.072\textwidth][c]{00}\\
			01 & \makebox[.072\textwidth][c]{00} & \makebox[.072\textwidth][c]{00} & 
				\makebox[.072\textwidth][c]{00} & \makebox[.072\textwidth][c]{00} & 
				\makebox[.072\textwidth][c]{00} & \makebox[.072\textwidth][c]{00} & 
				\makebox[.072\textwidth][c]{00} & \makebox[.072\textwidth][c]{00} & 
				\makebox[.072\textwidth][c]{00}\\
			10 & \makebox[.072\textwidth][c]{00} & \makebox[.072\textwidth][c]{00} & 
				\makebox[.072\textwidth][c]{00} & \makebox[.072\textwidth][c]{01} & 
				\makebox[.072\textwidth][c]{01} & \makebox[.072\textwidth][c]{01} & 
				\makebox[.072\textwidth][c]{11} & \makebox[.072\textwidth][c]{11} & 
				\makebox[.072\textwidth][c]{11}\\
			11 & \makebox[.072\textwidth][c]{00} & \makebox[.072\textwidth][c]{01} & 
				\makebox[.072\textwidth][c]{10} & \makebox[.072\textwidth][c]{00} & 
				\makebox[.072\textwidth][c]{01} & \makebox[.072\textwidth][c]{10} & 
				\makebox[.072\textwidth][c]{00} & \makebox[.072\textwidth][c]{01} & 
				\makebox[.072\textwidth][c]{10}\\
			\hline
		\end{array}$
	}}\smallskip
	
	{\scriptsize \centerline{$\begin{array}{|c||m{.072\textwidth}|m{.072\textwidth}|
		m{.072\textwidth}|m{.072\textwidth}|m{.072\textwidth}|m{.072\textwidth}|
		m{.072\textwidth}|m{.072\textwidth}|m{.072\textwidth}|}
		\hline
		x & \makebox[.072\textwidth][c]{[10{,}00]} & 
			\makebox[.072\textwidth][c]{[11{,}00]} & 
			\makebox[.072\textwidth][c]{[12{,}00]} & 
			\makebox[.072\textwidth][c]{[13{,}00]} & 
			\makebox[.072\textwidth][c]{[14{,}00]} & 
			\makebox[.072\textwidth][c]{[15{,}00]} & 
			\makebox[.072\textwidth][c]{[16{,}00]} & 
			\makebox[.072\textwidth][c]{[17{,}00]} & 
			\makebox[.072\textwidth][c]{[18{,}00]}\\
		\hline\hline
		00 & \makebox[.072\textwidth][c]{00} & \makebox[.072\textwidth][c]{00} & 
			\makebox[.072\textwidth][c]{00} & \makebox[.072\textwidth][c]{00} & 
			\makebox[.072\textwidth][c]{00} & \makebox[.072\textwidth][c]{00} & 
			\makebox[.072\textwidth][c]{00} & \makebox[.072\textwidth][c]{00} & 
			\makebox[.072\textwidth][c]{00}\\
		01 & \makebox[.072\textwidth][c]{10} & \makebox[.072\textwidth][c]{10} & 
			\makebox[.072\textwidth][c]{10} & \makebox[.072\textwidth][c]{10} & 
			\makebox[.072\textwidth][c]{10} & \makebox[.072\textwidth][c]{10} & 
			\makebox[.072\textwidth][c]{10} & \makebox[.072\textwidth][c]{10} & 
			\makebox[.072\textwidth][c]{10}\\
		10 & \makebox[.072\textwidth][c]{00} & \makebox[.072\textwidth][c]{00} & 
			\makebox[.072\textwidth][c]{00} & \makebox[.072\textwidth][c]{01} & 
			\makebox[.072\textwidth][c]{01} & \makebox[.072\textwidth][c]{01} & 
			\makebox[.072\textwidth][c]{11} & \makebox[.072\textwidth][c]{11} & 
			\makebox[.072\textwidth][c]{11}\\
		11 & \makebox[.072\textwidth][c]{00} & \makebox[.072\textwidth][c]{01} & 
			\makebox[.072\textwidth][c]{10} & \makebox[.072\textwidth][c]{00} & 
			\makebox[.072\textwidth][c]{01} & \makebox[.072\textwidth][c]{10} & 
			\makebox[.072\textwidth][c]{00} & \makebox[.072\textwidth][c]{01} & 
			\makebox[.072\textwidth][c]{10}\\
		\hline
	\end{array}$}}\smallskip
	
	{\scriptsize \centerline{$\begin{array}{|c||m{.072\textwidth}|m{.072\textwidth}|
		m{.072\textwidth}|m{.072\textwidth}|m{.072\textwidth}|m{.072\textwidth}|
		m{.072\textwidth}|m{.072\textwidth}|m{.072\textwidth}|}
		\hline
		x & \makebox[.072\textwidth][c]{[19{,}00]} & 
			\makebox[.072\textwidth][c]{[20{,}00]} & 
			\makebox[.072\textwidth][c]{[21{,}00]} & 
			\makebox[.072\textwidth][c]{[22{,}00]} & 
			\makebox[.072\textwidth][c]{[23{,}00]} & 
			\makebox[.072\textwidth][c]{[24{,}00]} & 
			\makebox[.072\textwidth][c]{[25{,}00]} & 
			\makebox[.072\textwidth][c]{[26{,}00]} & 
			\makebox[.072\textwidth][c]{[27{,}00]}\\
		\hline\hline
		00 & \makebox[.072\textwidth][c]{00} & \makebox[.072\textwidth][c]{00} & 
			\makebox[.072\textwidth][c]{00} & \makebox[.072\textwidth][c]{00} & 
			\makebox[.072\textwidth][c]{00} & \makebox[.072\textwidth][c]{00} & 
			\makebox[.072\textwidth][c]{00} & \makebox[.072\textwidth][c]{00} & 
			\makebox[.072\textwidth][c]{00}\\
		01 & \makebox[.072\textwidth][c]{11} & \makebox[.072\textwidth][c]{11} & 
			\makebox[.072\textwidth][c]{11} & \makebox[.072\textwidth][c]{11} & 
			\makebox[.072\textwidth][c]{11} & \makebox[.072\textwidth][c]{11} & 
			\makebox[.072\textwidth][c]{11} & \makebox[.072\textwidth][c]{11} & 
			\makebox[.072\textwidth][c]{11}\\
		10 & \makebox[.072\textwidth][c]{00} & \makebox[.072\textwidth][c]{00} & 
			\makebox[.072\textwidth][c]{00} & \makebox[.072\textwidth][c]{01} & 
			\makebox[.072\textwidth][c]{01} & \makebox[.072\textwidth][c]{01} & 
			\makebox[.072\textwidth][c]{11} & \makebox[.072\textwidth][c]{11} & 
			\makebox[.072\textwidth][c]{11}\\
		11 & \makebox[.072\textwidth][c]{00} & \makebox[.072\textwidth][c]{01} & 
			\makebox[.072\textwidth][c]{10} & \makebox[.072\textwidth][c]{00} & 
			\makebox[.072\textwidth][c]{01} & \makebox[.072\textwidth][c]{10} & 
			\makebox[.072\textwidth][c]{00} & \makebox[.072\textwidth][c]{01} & 
			\makebox[.072\textwidth][c]{10}\\
		\hline
	\end{array}$}}
	\caption{Exhaustive list of the $27$ BNs that admits configuration $(0,0)$ as 
		their unique fixed point.}
	\label{tab:BN2_fp00}
\end{table}

\begin{figure}
	\centerline{\scalebox{.55}{}}
\end{figure}

Here, we focus on networks that admit a unique fixed point. Because the other cases 
can be studied similarly (and do not reveal other dynamical peculiarities), we pay 
only attention to networks that converge towards the fixed point $(0,0)$. First of 
all, it is trivial to list all the $27$ BNs that admit $(0,0)$ as their unique fixed 
point. For the sake of clarity, let us introduce the following notation that allows 
to specify the $27$ distinct networks at stake here.
\begin{notation}
	\label{nota:MBN2_1fpnets}
	$[k{,}x]$, with $k \in \{1, \dots, 27\}$ and $x \in \B^2$, denotes the network of 
	size $2$ whose local transition functions are represented by $k$ and defined by 
	their truth tables in Table~\ref{tab:BN2_fp00} and that admits $x$ as its unique 
	fixed point (represented as a binary word).
\end{notation}
Let us also denote by $\F[x]$ the set composed of all BNs that admit $x$ as their 
unique fixed point, $\FP[x] \subseteq \F[x]$ the set of BNs for which $x$ is the 
unique attractor, and $\LC[x] \subseteq \F[x]$ the set of BNs that admit at least a 
limit cycle.


\renewcommand{\tabcolsep}{5pt}
\begin{figure}[t!]
	\centerline{\scalebox{.47}{\input{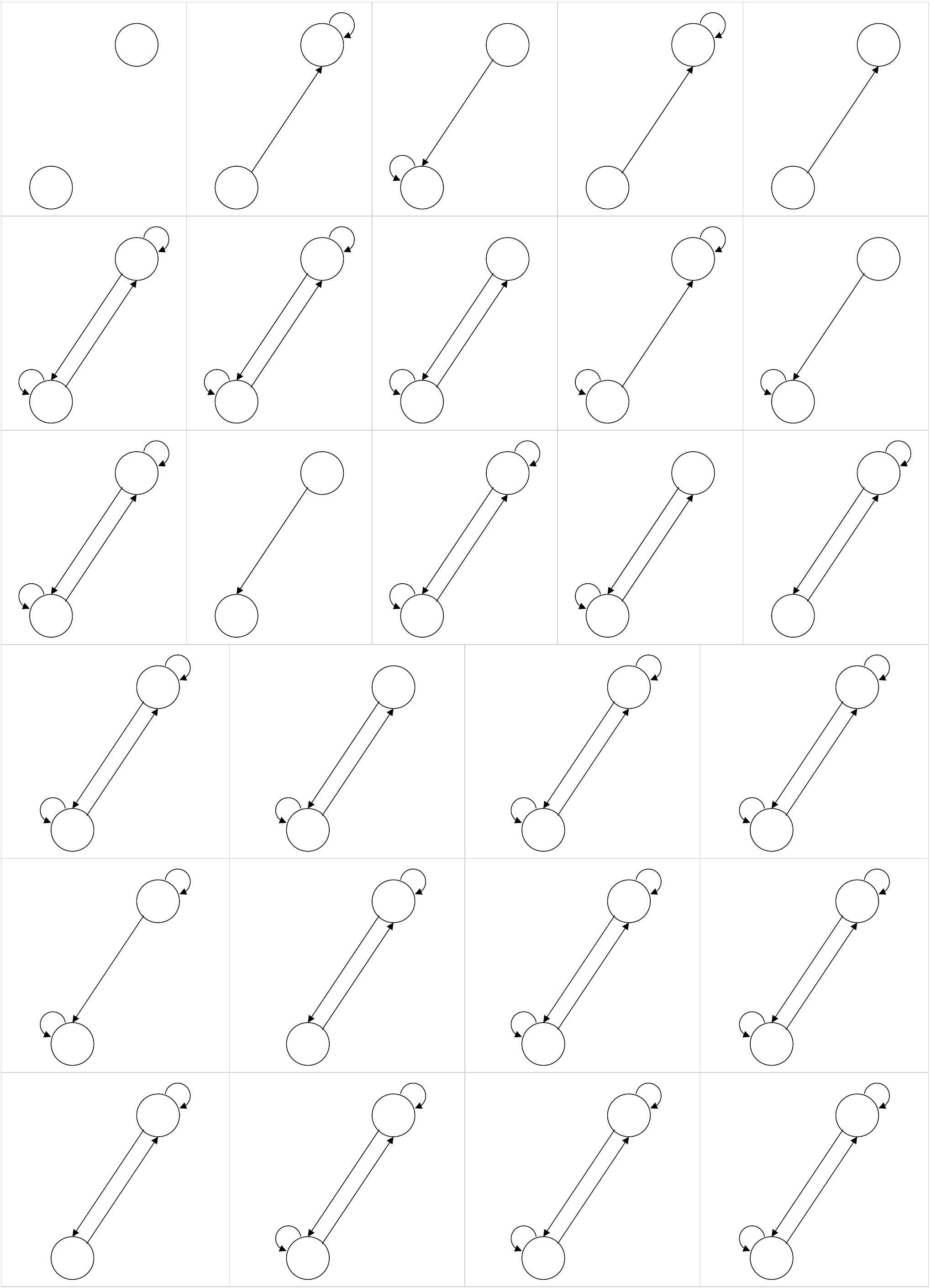_t}}}
	\caption{Interaction graphs of all the networks that admit $(0,0)$ as a fixed 
		point, whose local transition functions are given in Table~\ref{tab:BN2_fp00}.
		In some of these graphs, abusing notations for not burdening the reading, an 
		arc $(x,y)$ that is labeled by $\pm$ stands for the actual existence of two 
		arcs from $x$ to $y$, one labeled by $+$, the other by $-$. Such an arc 
		highlights notably that the local transition function of $y$ is not locally 
		monotonic, and more precisely a \textsc{xor} (denoted by the operator 
		$\oplus$) in this case.}
	\label{fig:BN2_fp00}
\end{figure}

From now on, let us analyze the dynamical behavior of networks belonging to $\F[x]$. 
Notice that in general, besides the fixed point $x$, a BN may (or may not) admit 
limit cycles. The peculiar question that we address now deals with the asymptotic 
behavior invariance when we add delays to a BN and thus change it into a MBN. 
Actually, what follows aims at determining the delay vectors $dt = (dt_1, dt_2)$ for 
which a given network $f \in F[x]$ admits a particular attractor (say the fixed point 
$x$, or a limit cycle which appears only when delays are added). First of all, it is 
easy to see from Table~\ref{tab:BN2_fp00} that 
\begin{multline*}
	\FP[00]\ =\ \{
		[1,00], [2,00], [3,00], [4,00], [5,00], [6,00], [7,00], [8,00],\\ 
			[10,00], [11,00], [12,00], [16,00], [19,00], [21,00], [22,00], [25,00]
	\}\text{,}
\end{multline*}
and that
\begin{multline*}
	\LC[00]\ =\ \{
		[9,00], [13,00], [14,00],[15,00], [17,00], [18,00],\\
		[20,00], [23,00], [24,00], [26,00], [27,00]
	\}.
\end{multline*}
\begin{notation}
	\label{nota:MBN2_00}
	In what follows, we will make particular use of the following notations:
	\begin{itemize}
	\item Delay vector $dt = (dt_1, dt_2)$ will be denoted by 
		$dt = (\alpha, \beta)$.
	\item Initial configuration $(\Delta_1, \Delta_2)$ will be denoted by 
		$(\rho, \gamma)$, with $0 \leq \rho \leq \alpha$ and $0 \leq \gamma \leq 
		\beta$.
	\item We will write $(x_1(0), x_2(0)) \xrightarrow{n} (x_1(n), x_2(n))$ to refer 
		to $(x_1(0), x_2(0)) \to (x_1(1), x_2(1)) \to \cdots \to (x_1(n), 
		x_2(n))$. Furthermore, we will use $x \xrightarrow{*} x'$ to indicate that 
		$x'$ belongs to the configurations that are successors of $x$.
	\end{itemize}
\end{notation}

Let us now study separately both classes $\FP[00]$ and $\LC[00]$.

\subsubsection{Analysis of $\FP[00]$}

Let us divide $\FP[00]$ into two sub-classes $\FP_\stab^{[00]}$ and $\FP_\cy^{[00]}$ 
such that:
\begin{itemize}
\item $\FP_\stab^{[00]} = \{[1,00], \dots, [5,00], [8,00], [10,00], [12,00], 
[21,00]\}$, and 
\item $\FP_\cy^{[00]} = \FP[00] \setminus \FP_\stab^{[00]}$. 
\end{itemize}
Propositions~\ref{prop:FP00fp} and~\ref{prop:FP00lc} below respectively show that 
MBNs related to BNs of $\FP_\stab^{[00]}$ cannot admit limit cycles whereas MBNs 
related to BNs of $\FP_\cy^{[00]}$ can.

\begin{proposition}
	\label{prop:FP00fp}
	For any delay vector $dt$, all the MBNs built on a BN belonging to 
	$\FP_\stab^{[00]}$ have a unique attractor, the fixed point $(0,0)$. 
\end{proposition}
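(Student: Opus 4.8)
The plan is to dispatch the nine networks of $\FP_\stab^{[00]}$ by a short case split. Throughout I write $dt=(\alpha,\beta)$ and use the memory notation $x_i\in\{0,\dots,dt_i\}$, so that an initial condition is just a pair in $\{0,\dots,\alpha\}\times\{0,\dots,\beta\}$, and a coordinate $i$ is Boolean-$0$ exactly when $x_i=0$ (in which case, $f_i$ then evaluating to $0$, it stays at $0$ forever). A first reduction: it suffices to show that every trajectory eventually reaches $\fpzero$. Indeed, a fixed point $x$ of the MBN forces, coordinate by coordinate, either $x_i=0$ with $f_i(x)=0$ or $x_i=dt_i$ with $f_i(x)=1$; its Boolean projection $\tilde x$ ($\tilde x_i=0$ in the first case, $\tilde x_i=1$ in the second) is then a fixed point of the underlying BN, hence $\tilde x=(0,0)$ by definition of $\FP[00]$, so $x=\fpzero$. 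In particular $(\alpha,\beta)$ is never fixed (equivalently, $F(1,1)\neq(1,1)$ for each of the nine networks, as Table~\ref{tab:BN2_fp00} shows), so $\fpzero$ is the unique fixed point; once every trajectory is shown to reach it, it is the unique attractor.

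First, seven of the nine networks --- $[1,00]$, $[2,00]$, $[3,00]$, $[4,00]$, $[5,00]$, $[10,00]$, $[12,00]$ --- have (by Table~\ref{tab:BN2_fp00}) a vertex $i$ whose local transition function is the constant $0$ (for $[1,00]$, both). Such a vertex is a source, so $x_i(t)=0$ for every $t\geq dt_i$, whatever the initial condition. In each of these networks the other vertex's local transition function is a conjunction of literals one of whose conjuncts is the uncomplemented variable of the constant vertex (namely $x_1\wedge x_2$ for $[2,00]$ and $[3,00]$; $x_1\wedge\neg x_2$ for $[4,00]$; $x_1$ for $[5,00]$; $\neg x_1\wedge x_2$ for $[10,00]$; $x_2$ for $[12,00]$). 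Hence, once the constant vertex is frozen at $0$, the other one is too, and decrements to $0$ within at most its own delay. The MBN therefore reaches $\fpzero$ within $\alpha+\beta$ steps --- a layered argument in the spirit of Proposition~\ref{prop:dag}, available here because the source layer is constant.

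The two remaining and genuinely delicate cases are $[8,00]$, with $f_1(x)=x_1\wedge\neg x_2$ and $f_2(x)=x_1$, and $[21,00]$, with $f_1(x)=x_2$ and $f_2(x)=\neg x_1\wedge x_2$; these have no constant vertex and contain a real $2$-cycle in their interaction graph, so Propositions~\ref{prop:dag} and~\ref{prop:decreasing} do not apply and I analyse the dynamics directly. For $[8,00]$ the key is $f_1(x)\leq x_1$: vertex $1$ can be reset only while it is already active, and a reset additionally requires $x_2=0$; but whenever vertex $1$ is active we have $f_2(x)=1$, which puts $x_2=\beta$ at the next step, whence $f_1=x_1\wedge\neg x_2=0$, so vertex $1$ strictly decrements from then on without any further reset. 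Thus $x_1$ reaches $0$ within $\alpha+1$ steps and stays there, after which $f_2=x_1=0$ drains $x_2$ to $0$. The network $[21,00]$ is symmetric, with $f_2(x)\leq x_2$ playing the role of $f_1\leq x_1$ and the two vertices exchanged. In both cases every trajectory reaches $\fpzero$ within $\alpha+\beta+1$ steps, completing the proof. I expect this last regime --- arguing that the feedback loop of $[8,00]$ and $[21,00]$ cannot sustain an oscillation for any choice of $(\alpha,\beta)$ --- to be the only real difficulty; the other seven networks are bookkeeping.
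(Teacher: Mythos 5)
Your proposal is correct and follows essentially the same route as the paper's proof: an exhaustive case analysis in which the seven networks possessing a constant-$0$ vertex are settled by a freeze-then-drain argument (the paper invokes Proposition~\ref{prop:dag} for $[2,00]$, $[3,00]$, $[5,00]$, $[12,00]$ and treats $[1,00]$, $[4,00]$, $[10,00]$ directly, whereas you handle all seven uniformly, which is fine since the mechanism is identical), while $[8,00]$ and $[21,00]$ are dispatched by the same observation the paper makes explicit via trajectories, namely that at most one reset of the self-maintained vertex can occur before it decreases monotonically to $0$, after which the other vertex drains. One minor wording slip: the opening parenthetical suggesting that a coordinate at state $0$ stays at $0$ forever is false as a blanket statement (e.g.\ vertex $1$ of $[12,00]$ with $f_1(x)=x_2$ can be reactivated), but none of your case arguments actually rely on it, so the proof stands as written.
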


\begin{proof}	
	Let us first consider BN $[1,00]$. Because its local transition functions 
	are both constant and equal to $0$, it is trivial that any MBN built on it can 
	only converge towards fixed point $(0,0)$, in at most $\max(\alpha,\beta)$ time 
	steps.\smallskip

	Now, let us consider any MBN built on the BNs that belongs to the subset of 
	$\FP_\stab^{[00]}$ defined as $\{[2,00], [3,00], [5,00], [12,00]\}$ such that $dt 
	= (\alpha, \beta)$. As pictured in Figure~\ref{fig:BN2_fp00}, their interaction 
	graphs do not induce cycles of length greater than or equal to $2$ and the only 
	loops they contain are positive. So, from Proposition~\ref{prop:dag}, we derive 
	that $(0,0)$ is the unique attractor of MBNs built on them and that it is reached 
	in at most $\alpha+\beta$ time steps.\smallskip
	
	About MBNs built on BN $[4,00]$ with any initial configuration $(\rho, \gamma)$, 
	because of function $f_1(x)$ that is constant and equal to $0$, $x_1$ is 
	necessarily fixed to $0$ after $\rho \leq \alpha$ time steps. Once $x_1 = 0$, by 
	definition of $f_2(x)$, $x_2$ will decrease to reach $0$ after at most $\beta$ 
	time steps. A similar reasoning can be used on the basis of BN $[10,00]$ because 
	they are symmetric networks. Moreover, they both converge towards $(0,0)$ in at 
	most $\alpha + \beta$ time steps.\smallskip
	
	Now, concerning MBNs built on BN $[8,00]$, let us consider two cases. The first 
	case is when $x_1 = 0$. By definition of $f_1(x)$, once $x_1$ at $0$ it remains 
	fixed to $0$ and according to the definition of $f_2$, the value of $x_2$ 
	necessarily decreases to $0$. Thus, $(0,x_2)$ converges towards $(0,0)$ in at 
	most $\beta$ time steps. Consider now the case where $0 < x_1 \leq \alpha$. We 
	have: if $x_2 \geq 1$ then $(x_1, x_2) \to (x_1 - 1, \beta) \xrightarrow{x_1-1} 
	(0, \beta) \xrightarrow{\beta} (0, 0)$, and the network converges in 
	$x_1+\beta$ time steps; if $x_2 = 0$ then $(x_1, 0) \to (\alpha, \beta) \to 
	(\alpha-1, \beta) \xrightarrow{\alpha-1} (0, \beta) \xrightarrow{\beta} (0,0)$ 
	and the network converges in $\alpha+\beta+1$ time steps. Thus, such MBNs (resp. 
	MBNs built on BN $[21,00]$ by symmetry) converge towards their unique attractor, 
	fixed point $(0,0)$, in at most $\alpha+\beta+1$ time steps.\smallskip
		
	Hence, all the MBNs of $\FP_\stab^{[00]}$ admit $(0,0)$ as their unique 
	attractor.
\end{proof}

\begin{proposition}
	\label{prop:FP00lc}
	For any BN of $\FP_\cy^{[00]}$, there exist delay vectors $dt$s such that their 
	related MBNs admit limit cycles. 
\end{proposition}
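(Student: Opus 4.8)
The plan is to prove the statement by a finite verification over the seven networks making up $\FP_\cy^{[00]} = \FP[00]\setminus\FP_\stab^{[00]} = \{[6{,}00],[7{,}00],[11{,}00],[16{,}00],[19{,}00],[22{,}00],[25{,}00]\}$: for each of them I will exhibit one delay vector $dt=(\alpha,\beta)$ together with a cyclic sequence of at least two distinct delay-augmented configurations that forms a closed orbit of the associated MBN. Since an MBN is deterministic on a finite configuration space, displaying one such closed orbit of length $\geq 2$ already certifies the presence of a limit cycle.

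First I would halve the work using the label-swap symmetry of two-vertex networks: if $g$ is obtained from a BN $f$ on $\{1,2\}$ by exchanging the two vertices, then the MBN built on $g$ with delay vector $(\beta,\alpha)$ is, up to coordinate swap, identical to the MBN built on $f$ with delay vector $(\alpha,\beta)$, hence one has a limit cycle if and only if the other does. Reading the truth tables of Table~\ref{tab:BN2_fp00} identifies the symmetric pairs $[6{,}00]\leftrightarrow[11{,}00]$, $[7{,}00]\leftrightarrow[19{,}00]$, $[16{,}00]\leftrightarrow[22{,}00]$, while $[25{,}00]$ is symmetric to itself; so it suffices to treat the four representatives $[6{,}00]$ (with $f_1=x_1\land x_2$, $f_2=x_1\land\neg x_2$), $[7{,}00]$ (with $f_1=f_2=x_1\land\neg x_2$), $[16{,}00]$ (with $f_1=x_1\oplus x_2$, $f_2=x_1\land\neg x_2$), and $[25{,}00]$ (with $f_1=f_2=x_1\oplus x_2$).

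Then I would carry out the four traces explicitly, using the abuse of notation in which a coordinate equal to $dt_i$ marks a vertex that has just fired, and where one step consists in computing $f_i$ on the Boolean projection of the current configuration, resetting coordinate $i$ to $dt_i$ when $f_i=1$ and decrementing it (with floor at $0$) when $f_i=0$. Concretely: for $[6{,}00]$ with $dt=(2,1)$ one checks the length-$2$ orbit $(2,0)\leftrightarrows(1,1)$; for $[7{,}00]$ with $dt=(2,1)$ the length-$2$ orbit $(2,1)\leftrightarrows(1,0)$ (already visible in Figure~\ref{fig:MBN}); for $[16{,}00]$ with $dt=(1,2)$ the length-$3$ orbit $(1,0)\to(1,2)\to(0,1)\to(1,0)$; and for $[25{,}00]$ with $dt=(2,1)$ the length-$2$ orbit $(1,0)\leftrightarrows(2,1)$. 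By the symmetry above, the delay vectors $(1,2)$, $(1,2)$ and $(2,1)$ then yield limit cycles for $[11{,}00]$, $[19{,}00]$ and $[22{,}00]$ respectively, which completes the list.

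The only real difficulty here is bookkeeping rather than ideas: extracting the seven local-transition-function pairs correctly from Table~\ref{tab:BN2_fp00}, pairing them correctly under the relabeling symmetry, and iterating the $\Delta$-update without slips in each of the four traces. Conceptually the statement is essentially forced: each BN in $\FP_\cy^{[00]}$ contains in its interaction graph either a negative two-cycle or a non-monotone (\textsc{xor}) interaction, and inserting a delay $\geq 2$ on the appropriate vertex desynchronizes the two coordinates just enough to sustain an oscillation that the memoryless parallel dynamics had collapsed onto the trajectory towards $(0,0)$.
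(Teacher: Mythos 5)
Your proposal is correct, and I checked the four witness orbits against the MBN update rule (reset coordinate $i$ to $dt_i$ when $f_i=1$ on the Boolean projection, decrement towards $0$ otherwise): for $[6{,}00]$ with $dt=(2,1)$ one indeed has $(2,0)\to(1,1)\to(2,0)$; for $[7{,}00]$ with $dt=(2,1)$ the cycle $(1,0)\to(2,1)\to(1,0)$ is the one already shown in Figure~\ref{fig:MBN}; for $[16{,}00]$ with $dt=(1,2)$ one gets $(1,0)\to(1,2)\to(0,1)\to(1,0)$; and for $[25{,}00]$ with $dt=(2,1)$ one gets $(1,0)\to(2,1)\to(1,0)$. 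Your identification of $\FP_\cy^{[00]}$, your reading of the local functions from Table~\ref{tab:BN2_fp00}, and your swap-symmetry pairing $[6{,}00]\leftrightarrow[11{,}00]$, $[7{,}00]\leftrightarrow[19{,}00]$, $[16{,}00]\leftrightarrow[22{,}00]$ (with $[25{,}00]$ self-symmetric) all agree with the paper, and since the dynamics is deterministic on a finite state space, a closed orbit of length at least two is indeed a limit cycle, so the existential claim follows. The difference with the paper's proof is one of scope rather than decomposition: the paper runs the same per-network case split with the same symmetry halving, but for arbitrary delays $(\alpha,\beta)$ and arbitrary initial conditions $(\rho,\gamma)$, thereby determining exactly when limit cycles arise (e.g.\ $\alpha>\beta$ for $[7{,}00]$, $\alpha\neq\beta$ for $[25{,}00]$, the condition $\beta\neq k(\alpha+1)-1$ for $[16{,}00]$); that surplus is what later feeds Theorem~\ref{thm:FP00} and Table~\ref{tab:FP00}, whereas your concrete-witness verification is lighter and fully sufficient for Proposition~\ref{prop:FP00lc} as stated, but gives no information on the rest of the delay/initial-condition space. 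One harmless caveat: your closing heuristic (``negative two-cycle or \textsc{xor}'') is not load-bearing and is actually inaccurate for $[6{,}00]$, whose two-cycle between the vertices is positive; the oscillation there is produced by the negative loop on vertex $2$ together with the delay on vertex $1$.
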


\begin{proof}
	In this proof, for BNs belonging to $\FP_\cy^{[00]}$, we exhibit delay 
	vectors with which the associated MBN evolves towards a limit cycle.
	
		Let us now consider a MBN built on BN $[6,00]$. First, suppose that $\alpha = 1$. 
	Then, for all $\beta$, either $\rho = 0$ and $x_1$ will stay fixed to $0$, which 
	leads inevitably $x_2$ to decrease to $0$ and remain fixed in $\gamma$ time 
	steps, or $\rho = 1$: in this case: if $\gamma = 0$, then after one time step, 
	$x_1 = 0$ and we get back to the previous case; otherwise, $\gamma \geq 1$ and 
	the dynamics is $(1,\gamma) \xrightarrow{\gamma} (1,0) \to (0, \gamma)$, and we 
	get back to previous item in $\gamma+1$ time steps. As a consequence, in order 
	for MBNs associated with BN $[6,00]$ to admit limit cycles, $\alpha$ needs to be 
	greater than $1$ (except if $\beta$ is also equal to $1$ of course). Now let us 
	consider $dt = (\alpha \geq 2, \beta)$ and initial configuration $(\rho = \alpha, 
	\gamma = \beta)$. Its dynamics is $(\alpha,\beta) \to (\alpha,\beta-1) 
	\xrightarrow{\beta-1} (\alpha, 0) \to (\alpha-1, \beta) \to (\alpha, \beta-1)$. 
	So, there exist recurrent configurations and thus a limit cycle. Notice that this 
	limit cycle is of length $\beta$. Furthermore, for the same reasons as those 
	given for the case where $\alpha = 1$, initial configurations such that $\rho = 
	0$ or $(\rho=1, \gamma=0)$ converges towards fixed point $(0,0)$ in at most 
	$\beta+1$ time steps. All the other configurations, \ie those such that $(\rho 
	\geq 2, \gamma \geq 0)$ and $(\rho = 1, \gamma = 0)$ evolves to $(\alpha, \beta)$ 
	and thus towards the limit cycle. By symmetry, a similar reasoning can be used 
	for BN $[11,00]$.\smallskip
	
	About MBNs built on BN $[7,00]$, for all $(\alpha, \beta)$, notice that by 
	definition of $f_1$ and because of the positive loop, when $x_1 = 0$, it remains 
	fixed to $0$. Moreover, when $x_1 = 0$, it leads $x_2$ to decrease until it 
	reaches $0$ also. Let us now detail the dynamics of this network depending on the 
	initial configurations, considering $1 \leq \rho \leq \alpha$ and $1 \leq \gamma 
	\leq \beta$:
	\begin{itemize}
	\item if $\rho \leq \gamma$ then $(\rho, \gamma) \xrightarrow{\rho} (0, 
		\gamma - \rho) \xrightarrow{\gamma-\rho} (0,0)$ that is reached in $\gamma$ 
		time steps.
	\item if $\rho > \gamma$ then $(\rho, \gamma) \xrightarrow{\gamma} (\rho - 
		\gamma, 0) \to (\alpha, \beta)$. From configuration $(\alpha,\beta)$, we have:
		if $\alpha = \beta$ then $(\alpha, \alpha) \to (\alpha-1, \alpha-1) 
		\xrightarrow{\alpha-1} (0,0)$ that is reached in $\gamma+\alpha+1$ time steps; 
		if $\alpha < \beta$ then $(\alpha, \beta) \xrightarrow{\alpha} (0, 
		\beta-\alpha) \xrightarrow{\beta-\alpha} (0,0)$ that is reached in 
		$\gamma+\beta+1$ time steps; if $\alpha > \beta$ then $(\alpha, \beta) 
		\xrightarrow{\beta} (\alpha-\beta, 0) \to (\alpha,\beta)$, which emphasizes a 
		limit cycle of length $\beta$.
	\end{itemize}
	Thus, any MBN built on BN $[7,00]$ such that $\alpha > \beta$ admits a limit 
	cycle of length $\beta$. The unique other possible attractor is fixed point 
	$(0,0)$ that is reached in at most $2\max(\alpha,\beta)+1$ time steps. By 
	symmetry, a similar reasoning can be used for BN $[19,00]$.\smallskip

	Now, let us focus on BN $[16,00]$.
	First of all, by definition of its local transition functions, it is easy to 
	check that for all $(\alpha, \beta)$, if $\rho = \gamma$ then the configuration 
	converges towards fixed point $(0,0)$. Indeed, we have $(\rho, \rho) \to (\rho-1, 
	\rho-1) \xrightarrow{\rho-1} (0,0)$, that is reached in $\rho$ time steps, \ie in 
	at most $\min(\alpha,\beta)$ time steps. Now, let us deal with all possible 
	$\alpha$ and $\beta$. First, consider that $\alpha = \beta$. Then, according to 
	the initial configuration, if $\rho < \gamma$, the dynamics of $(\rho, \gamma)$ 
	is $(\rho, \gamma) \xrightarrow{\rho} (0, \gamma - \rho) \to (\alpha, 
	\gamma - \rho - 1) \xrightarrow{\gamma - \rho - 1 < \alpha} (\alpha - \gamma + 
	\rho + 1, 0) \to (\alpha, \alpha) \xrightarrow{\alpha} (0,0)$ that is reached in 
	$\alpha+\gamma+1$ time steps; otherwise, if $\rho > \gamma$, it is $(\rho, 
	\gamma) \xrightarrow{\gamma} (\rho - \gamma, 0) \to (\alpha, \alpha) 
	\xrightarrow{\alpha} (0,0)$ that is reached in $\alpha+\gamma+1$ time steps.
	Thus, if $\alpha = \beta$, any configuration converges towards fixed point 
	$(0,0)$ in at most $2\alpha+1 = 2\beta+1$ time steps. 
	
	Now, consider that $\alpha > \beta$. In this case, if $\rho < \gamma$, the 
	dynamics of $(\rho, \gamma)$ is $(\rho, \gamma) \xrightarrow{\rho} (0, \gamma - 
	\rho) \to (\alpha, \gamma - \rho - 1) \xrightarrow{\gamma - \rho - 1 < \alpha}
	(\alpha - \gamma + \rho + 1, 0) \to (\alpha, \beta) \xrightarrow{\beta} 
	(\alpha-\beta, 0) \to (\alpha, \beta)$; otherwise, if $\rho > \gamma$, the 
	dynamics of $(\rho, \gamma)$ is $(\rho, \gamma) \xrightarrow{\gamma} (\rho - 
	\gamma, 0) \to (\alpha, \beta) \xrightarrow{\beta} (\alpha - \beta, 0) \to 
	(\alpha, \beta)$. Thus, when $\alpha > \beta$, for all initial configurations 
	$(\rho, \gamma)$, with $\rho \neq \gamma$, the network evolves towards a limit 
	cycle of length $\beta$. 
	
	Now, in the case where $\alpha < \beta$, we have:
	\begin{itemize}
	\item if $\rho > \gamma$, the dynamics is $(\rho, \gamma) \xrightarrow{\gamma} 
		(\rho - \gamma, 0) \to (\alpha, \beta)$. Here, suppose that $\alpha < \beta < 
		2\alpha+1$. Then we have $(\alpha, \beta) \xrightarrow{\alpha} (0, \beta - 
		\alpha) \to (\alpha, \beta - \alpha - 1) \xrightarrow{\beta - \alpha - 1} 
		(\alpha - (\beta - \alpha - 1), 0) \to (\alpha, \beta)$. Now, if $\beta = 
		2\alpha + 1$, the trajectory of $(\alpha, \beta)$ is $(\alpha, \beta) 
		\xrightarrow{\alpha} (0, \beta - \alpha) \to (\alpha, \beta - \alpha - 1) 
		\xrightarrow{\alpha} (0, \beta - 2\alpha -1) = (0, 0)$, that is reached in at 
		most $2\beta+1$ time steps. More generally, if there exists $k \in \N$ such 
		that $\beta = k \cdot (\alpha + 1) - 1$ then, the trajectory of $(\alpha, 
		\beta)$ is $(\alpha, \beta) \xrightarrow{k \cdot (\alpha + 1) - 1} (0, \beta - 
		k \cdot (\alpha + 1) - 1) = (0,0)$, that is reached from any $(\rho, \gamma)$ 
		in at most $2\beta+1$ time steps. Now, suppose that $\forall k \in \N, \beta 
		\neq k \cdot (\alpha + 1) - 1$ and $\ell$ the greatest natural number such 
		that $\beta > \ell \cdot (\alpha + 1) - 1$ (\ie $\ell \cdot (\alpha + 1) - 1 < 
		\beta < (\ell + 1) (\alpha + 1 - 1)$), then the trajectory of $(\alpha, 
		\beta)$ is $(\alpha, \beta) \xrightarrow{\ell \cdot (\alpha + 1) - 1} (0, 
		\beta - (\ell \cdot (\alpha + 1) - 1) \neq 0) \to (\alpha, \beta - \ell \cdot 
		(\alpha + 1)) \xrightarrow{\beta - \ell \cdot (\alpha + 1)} (\alpha - (\beta - 
		\ell \cdot (\alpha + 1)) \neq 0, 0) \to (\alpha, \beta)$.
	\item if $\rho < \gamma$, if we suppose that $\gamma = \rho + \alpha + 1$, the 
		dynamics is $(\rho, \gamma) \xrightarrow{\rho} (0, \gamma - \rho) \to (\alpha, 
		\gamma - \rho - 1) \xrightarrow{\gamma - \rho - 1} (\alpha - \gamma + \rho + 1 
		= 0, 0)$, that is reached in at most $\beta$ time steps. More generally, by 
		supposing that given $k > 1 \in \N$, $\gamma = \rho + k \cdot (\alpha + 1)$, 
		we have $(\rho, \gamma) \xrightarrow{\rho} (0, \gamma - \rho) \to (0, \gamma - 
		\rho - 1) \xrightarrow{\alpha}\\ (0, \gamma - \rho - 1 - \alpha) 
		\xrightarrow{(k - 1)(\alpha + 1)} (0, \gamma - \rho - k\cdot(\alpha  + 1)) = 
		(0,0)$, that is reached in at most $\beta$ time steps. Now, suppose that 
		$\forall k \in \N, \gamma \neq \rho + k \cdot (\alpha + 1)$ and that $\ell$ is 
		the greatest natural number such that $\gamma > \rho + \ell \cdot (\alpha + 
		1)$, \ie $\rho + \ell \cdot (\alpha + 1) < \gamma < \rho + (\ell + 1) \cdot 
		(\alpha + 1)$, the trajectory of $(\rho, \gamma)$ is $(\rho, \gamma) 
		\xrightarrow{\rho + \ell \cdot (\alpha + 1)} (0, \gamma - \rho - \ell \cdot 
		(\alpha + 1)) \to (\alpha, \gamma - \rho - \ell \cdot (\alpha + 1) - 1) 
		\xrightarrow{\gamma - \rho - \ell \cdot (\alpha + 1) - 1} (\alpha - (\gamma - 
		\rho - \ell \cdot (\alpha + 1) - 1) \neq 0, 0) \to (\alpha, \beta)$, for which 
		it suffices to apply the case discussed in the previous sub-item. 
	\end{itemize}
	As a consequence, when $\alpha < \beta$, initial configurations can evolve 
	either towards the fixed point $(0,0)$ in at most $2\beta+1$ time steps or 
	towards a limit cycle of length $\beta$ only if $\beta \neq k \cdot (\alpha + 1) 
	- 1, \forall k \in \N$. By symmetry, a similar reasoning can be used for BN 
	$[22,00]$.

	Concerning MBNs built on BN $[25,00]$, first of all, because this network is a 
	symmetric \textsc{xor} network, it is easy to check that if $\rho = \gamma$, 
	whatever $\alpha$ and $\gamma$ are, the trajectory of the initial configuration 
	is $(\rho, \rho) \to (\rho - 1, \rho - 1) \xrightarrow{\rho-1} (0,0)$, and	
	converges thus towards fixed point $(0,0)$ in $\rho$ time steps, \ie in at most 
	$\min(\alpha,\beta)$ time steps. From now on, let us focus on initial 
	configurations such that $\rho \neq \gamma$. Suppose that $\alpha > 
	\beta$. The underlying dynamics is $(\rho, \gamma) \xrightarrow{\min(\rho, 
	\gamma)+1} (\alpha, \beta) \xrightarrow{\beta} (\alpha - \beta, 0) \to (\alpha, 
	\beta)$, which highlights a limit cycle of length $\beta$. Symmetrically, in the 
	case where $\alpha < \beta$, $(\rho, \gamma)$ evolves towards the cycle $(\alpha, 
	\beta) \xrightarrow{\alpha} (0, \beta - \alpha) \to (\alpha, \beta)$. Lastly, if 
	$\alpha = \beta$, the trajectory of $(\rho, \gamma)$ is $(\rho, \gamma) 
	\xrightarrow{\min(\rho,\gamma) + 1} (\alpha, \alpha) \xrightarrow{\alpha} 
	(0,0)$, and there is convergence towards the unique fixed point in at most 
	$2\alpha+1$ time steps.
\end{proof}

Proposition~\ref{prop:FP00lc} emphasizes that \emph{adding delays to BNs to make them 
become MBNs can lead to the creation of limit cycles in the set of network 
attractors}. Now, let us focus on the class $\LC[00]$.

\subsubsection{Analysis of $\LC[00]$}

By definition, the BNs that belong to $\LC[00]$ have the feature of having, 
besides the fixed point $(0,0)$, a limit cycle. This limit cycle is of length $2$ for 
$[\{9, 13, 14, 15, 18, 20, 23, 26, 27\}, 00]$ and of length $3$ for $[\{17, 24\}, 
00]$. Often, in BN models of real genetic regulatory networks, the biological meaning 
comes from the fixed point. Indeed, except in networks modeling biological rhythms 
sustained oscillations in which limit cycles are of course meaningful, the latter 
correspond to spurious asymptotic behaviors. Thus, in a modeling framework, finding a 
way of removing limit cycles can be particularly relevant. Classically, it is done by 
using an asynchronous updating mode. As Example~\ref{example3} highlighted it, adding 
delays can serve in this context. More generally, we will see that changing BNs into 
MBNs may allow to obtain this desirable property of avoiding spurious attractors. 
More precisely, for all the BNs that belong to $\LC[00]$, we analyze the delay 
parameter space for knowing the regions in which the limit cycle disappears. This 
analysis is presented by the following propositions. 

\begin{proposition}
	\label{prop:LC9}
	Every MBN built on BN $[9,00]$ (resp. on $[20,00]$) admits a limit cycle that is 
	reached by all configurations such that $\rho \geq 1$ (resp. $\gamma \geq 1$). 
\end{proposition}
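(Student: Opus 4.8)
First I would identify the two networks. Reading off Table~\ref{tab:BN2_fp00} (and checking it against Figure~\ref{fig:BN2_fp00}), BN $[9,00]$ is the network $f_1(x) = x_1$, $f_2(x) = x_1 \land \neg x_2$, while BN $[20,00]$ is the very same network with the roles of the two vertices exchanged, that is, $f_1(x) = x_2 \land \neg x_1$ and $f_2(x) = x_2$. Hence it suffices to prove the statement for $[9,00]$: the statement for $[20,00]$ then follows by swapping the indices $1$ and $2$ everywhere, which turns the hypothesis $\rho \ge 1$ into $\gamma \ge 1$. Throughout I use the integrated notation of Notation~\ref{nota:MBN2_00}: a configuration is the pair $(\Delta_1,\Delta_2)$, the delay vector is $dt = (\alpha,\beta)$, the initial configuration is $(\rho,\gamma)$, and the Boolean value presented to the functions $f_i$ at time $t$ equals $1$ when $\Delta_i(t) \ge 1$ and $0$ when $\Delta_i(t) = 0$.

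The first step concerns vertex $1$. Since $f_1(x) = x_1$, if $\rho = \Delta_1(0) \ge 1$ then $f_1$ evaluates to $1$ at time $0$, so the firing rule gives $\Delta_1(1) = \alpha \ge 1$; an immediate induction then yields $\Delta_1(t) = \alpha$ for every $t \ge 1$. In particular the Boolean value of vertex $1$ equals $1$ at every time step, so vertex $1$ is permanently firing from time $0$ on: the positive loop on vertex $1$, whose feedback refreshes its own delay at every step, pins it in the firing state.

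The second step analyses vertex $2$ under this constraint. Once the Boolean value of vertex $1$ is constantly $1$, we have $f_2(x(t)) = 1$ if and only if the Boolean value of vertex $2$ is $0$, that is, if and only if $\Delta_2(t) = 0$. Consequently the update of $\Delta_2$ reduces to the deterministic map $g$ on $\{0,1,\dots,\beta\}$ defined by $g(0)=\beta$ and $g(k)=k-1$ for $1 \le k \le \beta$. This $g$ is a cyclic permutation of $\{0,1,\dots,\beta\}$, so every orbit of $\Delta_2$ is the whole $(\beta+1)$-cycle; since $\Delta_2(t) \in \{0,\dots,\beta\}$ at all times, every trajectory with $\rho \ge 1$ enters, after at most one time step, the limit cycle
\[
(\alpha,0) \rightarrow (\alpha,\beta) \rightarrow (\alpha,\beta-1) \rightarrow \cdots \rightarrow (\alpha,1) \rightarrow (\alpha,0)
\]
of length $\beta+1$ (which for $\alpha=\beta=1$ recovers the length-$2$ cycle of the underlying BN). This proves the claim for $[9,00]$, and by the symmetry above the same holds for $[20,00]$, with a limit cycle of length $\alpha+1$ reached by every configuration with $\gamma \ge 1$.

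As for the difficulty, there is essentially no obstacle: the proof is a direct unfolding of the MBN update rule once one observes the stabilisation of vertex $1$. The only points that need a line of care are the transient --- at $t=0$ the value $\Delta_1(0)=\rho$ need not already equal $\alpha$, and $\Delta_2(0)=\gamma$ is an arbitrary element of $\{0,\dots,\beta\}$, which is why one waits one step before invoking the pinning of vertex $1$ --- and the (routine) verification that the exhibited orbit is genuinely a cycle, which is immediate from $g$ being a cyclic permutation.
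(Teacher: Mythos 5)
Your proof is correct and follows essentially the same route as the paper's: the positive loop pins vertex $1$ at the firing state, after which the delay counter of vertex $2$ cycles through $\{0,\dots,\beta\}$, so every configuration with $\rho \geq 1$ reaches the limit cycle after at most one step (and symmetrically for $[20,00]$). Your count of the cycle length as $\beta+1$ is the careful one --- the paper states length $\beta$, a minor slip that does not affect the proposition.
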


\begin{proof}
	Consider a MBN built on BN $[9,00]$. Whatever $\alpha$ and $\beta$ are, by 
	definition of $f_1$, $x_1$ is maintained by the positive loop. As a consequence: 
	if $\rho = 0$, the network converges towards the fixed point $(0,0)$ in $\gamma$ 
	time steps, \ie in at most $\beta$ time steps; if $\rho \geq 1$, by definition of 
	$f_2$, the initial configuration $(\rho, \gamma = \beta)$ has the following 
	dynamics: $(\rho, \beta) \to (\rho, \beta-1) \xrightarrow{\beta-1} (\rho, 0) \to 
	(\rho, \beta)$. Thus, any configuration $(\rho \geq 1, \gamma)$ belongs to a 
	limit cycle of length $\beta$. By symmetry, a similar reasoning can be used for 
	BN $[20,00]$.
\end{proof}

\begin{proposition}
	\label{prop:LC13}
	Consider a MBN $M$ based on BN $[13,00]$. We have:
	\begin{enumerate}
	\item If $M$ is such that $\gcd(\alpha+1, \beta+1) = 1$, all its configurations 
		converge towards fixed point $(0,0)$.
	\item If $M$ is such that $\gcd(\alpha+1, \beta+1) > 1$, any configuration 
		$(\rho,\gamma)$ such that $\rho + \ell_0(\alpha+1) = \gamma + k_0(\beta+1)$, 
		with $0 \leq k_0 \leq k$ and $0 \leq \ell_0 \leq \ell$ with $\ell(\alpha+1) = 
		k(\beta+1) = \lcm(\alpha+1, \beta+1)$, converges towards fixed point 
		$(0,0)$. Otherwise, it evolves towards a limit cycle of length $\lcm(\alpha+1, 
		\beta+1)$.
	\end{enumerate}
\end{proposition}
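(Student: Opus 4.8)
The plan is to start by reading off from Table~\ref{tab:BN2_fp00} that BN $[13,00]$ is the network $f_1(x) = \neg x_1 \land x_2$, $f_2(x) = x_1 \land \neg x_2$, and then to describe the associated MBN dynamics coordinate by coordinate. Using the abuse of notation of Section~\ref{sec_models_def_mbns}, I would show that each coordinate behaves as an independent countdown timer: whenever $x_1(t) \geq 1$ one has $x_1(t+1) = x_1(t) - 1$; whenever $x_1(t) = 0$ and $x_2(t) \geq 1$ one has $x_1(t+1) = \alpha$; and whenever $x_1(t) = 0 = x_2(t)$ one has $x_1(t+1) = 0$ (symmetrically for $x_2$ with $\beta$ in place of $\alpha$). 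The structural fact I would establish first is that the two coordinates evolve \emph{completely independently of one another as long as they are never simultaneously $0$}, and that $(0,0)$ is absorbing; hence the only way the dynamics can leave the ``cycling regime'' is by a \emph{coincidence}, i.e.\ a time $t$ with $x_1(t) = x_2(t) = 0$, after which the orbit is frozen at $(0,0)$.

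Next I would make precise when coincidences occur. Starting from $(\rho,\gamma)$ and as long as no coincidence has happened, coordinate $1$ equals $0$ exactly at the steps $t \geq \rho$ with $t \equiv \rho \pmod{\alpha+1}$ (once it reaches $0$ at step $\rho$ it cycles $\alpha, \alpha-1, \dots, 1, 0$ with \emph{exact} period $\alpha+1$), and coordinate $2$ equals $0$ exactly at the steps $t \geq \gamma$ with $t \equiv \gamma \pmod{\beta+1}$. Therefore a coincidence occurs if and only if the system $t \equiv \rho \pmod{\alpha+1}$, $t \equiv \gamma \pmod{\beta+1}$ has a nonnegative solution, which by the Chinese remainder theorem holds if and only if $\rho \equiv \gamma \pmod{\gcd(\alpha+1,\beta+1)}$. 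When it does, the first coincidence time $t_0$ lies in $[\max(\rho,\gamma), \lcm(\alpha+1,\beta+1))$ — it is the unique solution in that interval, and it is $\geq \rho$ and $\geq \gamma$ since $0\le\rho\le\alpha$ and $0\le\gamma\le\beta$ — and the configuration reaches $(0,0)$ at step $t_0+1$; when it does not, both coordinates cycle forever with periods $\alpha+1$ and $\beta+1$, so the orbit is a limit cycle of length exactly $\lcm(\alpha+1,\beta+1)$.

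This immediately gives Item~1: if $\gcd(\alpha+1,\beta+1)=1$ the congruence $\rho\equiv\gamma\pmod 1$ holds vacuously, so every configuration reaches $(0,0)$. For Item~2, with $\gcd(\alpha+1,\beta+1)>1$, I would verify that the arithmetic condition in the statement is exactly the coincidence condition: writing $\ell = (\beta+1)/\gcd(\alpha+1,\beta+1)$, $k = (\alpha+1)/\gcd(\alpha+1,\beta+1)$ so that $\ell(\alpha+1)=k(\beta+1)=\lcm(\alpha+1,\beta+1)$, a pair $(\ell_0,k_0)$ with $0\le\ell_0\le\ell$, $0\le k_0\le k$ and $\rho+\ell_0(\alpha+1)=\gamma+k_0(\beta+1)=t$ exists if and only if the congruence system is solvable. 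For the nontrivial direction one takes $t = t_0$ and sets $\ell_0 = (t_0-\rho)/(\alpha+1)$, $k_0 = (t_0-\gamma)/(\beta+1)$, which are nonnegative integers because $(\alpha+1)\mid(t_0-\rho)$, $(\beta+1)\mid(t_0-\gamma)$ and $t_0\ge\max(\rho,\gamma)$, and are bounded by $\ell$ and $k$ because $t_0<\lcm(\alpha+1,\beta+1)$; the converse is immediate since such a $t$ is a common solution. In that case the orbit converges to $(0,0)$; otherwise no coincidence occurs and the orbit is a limit cycle of length $\lcm(\alpha+1,\beta+1)$ by the previous paragraph.

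The main obstacle I anticipate is not the arithmetic but the bookkeeping of boundary cases and of the bounds in the last step. One must handle separately the configurations with $\rho=0$ or $\gamma=0$ — checking that the description ``coordinate $1$ is at $0$ exactly at steps $\equiv\rho\pmod{\alpha+1}$'' still holds there, with the degenerate coincidence $t_0=0$ corresponding precisely to the fixed point $(0,0)$ when $\rho=\gamma=0$ — and one must confirm that the window $0\le\ell_0\le\ell$, $0\le k_0\le k$ in the statement neither excludes the genuine first coincidence nor admits a spurious solution on another branch of the CRT progression. This is precisely where pinning $t_0$ down in the interval $[\max(\rho,\gamma),\lcm(\alpha+1,\beta+1))$ does the work, and it should be written out rather than merely asserted.
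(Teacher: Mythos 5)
Your proof is correct, and it takes a genuinely different route from the paper's. The paper's argument pivots on the observation that a configuration with equal entries decrements synchronously to $(0,0)$: item~1 is proved by a pigeonhole/contradiction argument showing that coprimality of $\alpha+1$ and $\beta+1$ forces the two entries to become equal at some time, and item~2 is a lengthy explicit case analysis of trajectories (split according to $\alpha<\beta$, $\alpha=\beta$, $\alpha>\beta$ and $\rho>\gamma$, $\rho<\gamma$) parameterized by the smallest $(\ell_0,k_0)$ satisfying the equation. You instead isolate the structural fact that, until the two coordinates are simultaneously zero, they evolve as independent countdown timers of exact periods $\alpha+1$ and $\beta+1$, so that coordinate $1$ (resp.\ $2$) vanishes exactly on the progression $t\equiv\rho\pmod{\alpha+1}$ (resp.\ $t\equiv\gamma\pmod{\beta+1}$); convergence versus cycling then reduces to solvability of this congruence system, i.e.\ by CRT to $\rho\equiv\gamma\pmod{\gcd(\alpha+1,\beta+1)}$, and your localization of the first coincidence time $t_0$ in $[\max(\rho,\gamma),\lcm(\alpha+1,\beta+1))$ shows this is exactly the arithmetic condition of the statement. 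Your approach buys a unified, case-free treatment of both items together with extra information (the exact hitting time of $(0,0)$, and the fact that non-converging configurations already lie on the limit cycle rather than merely reaching it), while the paper's trajectory computations stay closer in style to the neighbouring propositions. Two cosmetic points to fix when writing it up: at the coincidence time $t_0$ the configuration is already $(0,0)$ (not at $t_0+1$), and the claim that the cycle length is exactly $\lcm(\alpha+1,\beta+1)$ should be accompanied by the one-line remark that each coordinate's minimal period is exactly $\alpha+1$ (resp.\ $\beta+1$) since the countdown visits that many distinct values, so the minimal joint period is their least common multiple.
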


\begin{proof}
	In this proof, we deal with the two distinct items of the statement separately.
	\begin{enumerate}
	\item Let us first consider a MBN such that $\gcd(\alpha+1, \beta+1) = 1$ and 
		such that $\alpha < \beta$. Let $(\rho_t, \gamma_t)$ be the configuration 
		obtained after $t \in \N$ time steps from any possible initial configuration 
		$(\rho, \gamma)$. In this case, inevitably, there exists $t$, with $0 \leq t <
		(\alpha+1) \cdot \beta + 1$, such that $\rho_t = \gamma_t$. Indeed, suppose 
		on the contrary that $\forall t \in \N$ such that $0 \leq t < (\alpha+1) \cdot 
		\beta + 1,\ \rho_t \neq \gamma_t$. Then, we know that there exist $(\alpha+1) 
		\cdot \beta$ different ordered pairs of integers (\ie configurations) such 
		that $\rho_t \neq \gamma_t$. Thus, necessarily, there exist $i \neq j \in \{0, 
		\dots, (\alpha+1) \cdot \beta \}$ such that $(\rho_i, \gamma_i) = (\rho_j, 
		\gamma_j)$, which implies the existence of a limit cycle of length $|i-j|$. 
		More precisely, given $0 \leq h \leq \beta$, $k < \alpha + 1$ and $\ell < 
		\beta + 1$, we have:
		\begin{equation*}
			(\rho_i, \gamma_i) \xrightarrow{|i-j|} (\rho_i, \gamma_i)\ \implies\ 
			(\alpha, \beta - h) \xrightarrow{|i-j| = \ell(\alpha+1) = k(\beta+1)} 
			(\alpha, \beta - h)\text{,}
		\end{equation*}
		which implies that $\alpha + 1$ and $\beta + 1$ are not coprime, which is a 
		contradiction. Therefore, there exists a time step $t$ at which $\rho_t = 
		\gamma_t$. Now, given the local transition functions, it is easy to remark 
		that all the configurations whose two terms are equal converge towards fixed 
		point $(0,0)$.
		
		By symmetry of the rule, the same reasoning applies for the case where $\alpha 
		> \beta$. Moreover, by the hypothesis highlighting that $\gcd(\alpha+1, 
		\beta+1) = 1$, the case where $\alpha = \beta$ does not exist. 

	\item Now, let us consider a MBN such that $\gcd(\alpha+1, \beta+1) = \lambda > 
		1$. Consider an initial configuration $(\rho, \gamma)$ and distinguish two 
		cases:
		\begin{itemize}
		\item $(\rho, \gamma)$ satisfies 
			\begin{equation}
				\label{eq1_rule13}
				\rho + \ell_0(\alpha+1) = \gamma + k_0(\beta+1)
			\end{equation} 
			for some $k_0$, $\ell_0$ such that $0 \leq k_0 \leq k$, $0 \leq \ell_0 
			\leq \ell$, with $\ell(\alpha+1) = k(\beta+1) = \lcm(\alpha+1, \beta+1)$.
			First of all, as evoked above in the previous item, given the nature of 
			the MBN, it is trivial to remark that for all $dt = (\alpha, \beta)$, an 
			initial configuration such that $\rho = \gamma$ (this initial condition 
			satisfies Equation~\ref{eq1_rule13} above with $k_0 = \ell_0 = 0$) admits 
			the trajectory $(\rho, \rho) \to (\rho - 1, \rho - 1) \xrightarrow{(\rho - 
			1)} (0,0)$, and converges towards fixed point $(0,0)$ in $\rho$ time 
			steps, \ie in at most $\min(\alpha,\beta)$ time steps.
			Now, let us admit that $\alpha < \beta$ and that $\rho > \gamma$. If the 
			lower values of $k_0$ and $\ell_0$ satisfying Equation~\ref{eq1_rule13} 
			are both equal to $1$ then the configuration admits the following 
			trajectory: $(\rho, \gamma) \xrightarrow{\gamma} (\rho - \gamma, 0) \to 
			(\rho - \gamma - 1, \beta) \xrightarrow{\rho - \gamma - 1} (0, \beta - 
			(\rho - \gamma - 1)) \to (\alpha, \beta - (\rho - \gamma) = 
			\alpha)$. Moreover, if the lower values of $k_0$ and $\ell_0$ satisfying 
			Equation~\ref{eq1_rule13} are such that $k_0 > 1$ and $\ell_0 \geq 1$ then  
			the configuration admits the following trajectory: $(\rho, \gamma) 
			\xrightarrow{\rho + 1 + (\ell_0-1)(\alpha+1) = \gamma + 1 + k_0(\beta+1) - 
			(\alpha+1)} (\alpha,\alpha)$. 
			Hence, in both cases, $(\rho, \gamma)$ converges towards fixed point 
			$(0,0)$. 
			
			With the same reasoning, it can be shown that the result holds also for 
			initial configurations such that $\rho < \gamma$. Furthermore, by symmetry 
			of the rule, the same reasoning applies for the case where $\alpha > 
			\beta$. Moreover, if $\alpha = \beta$, the only way for 
			Equation~\ref{eq1_rule13} to hold is when $\rho = \gamma$ and this case 
			has already been dealt with.
			
		\item $(\rho, \gamma)$ satisfies $\rho + \ell_0(\alpha+1) \neq \gamma + 
			k_0(\beta+1)$ for some $k_0$, $\ell_0$ such that $0 \leq k_0 \leq k$, $0 
			\leq \ell_0 \leq \ell$, with $\ell(\alpha+1) = k(\beta+1) = \lcm(\alpha+1, 
			\beta+1)$. First, let us consider the case where $\alpha = \beta$. In this 
			case, the only way for the negation of Equation~\ref{eq1_rule13} to hold 
			is when $\rho \neq \gamma$. So, let us consider a configuration where 
			$\rho > \gamma$. Its trajectory is $(\rho, \gamma) \xrightarrow{\gamma} 
			(\rho - \gamma, 0) \to (\rho - \gamma - 1, \alpha) \xrightarrow{\rho - 
			\gamma - 1} (0, \alpha - (\rho - \gamma - 1)) \to (\alpha, \alpha - (\rho 
			- \gamma)) \xrightarrow{\alpha - (\rho - \gamma)} (\rho - \gamma, 
			0)$, which highlights a limit cycle of length $\alpha + 1$. With the same 
			reasoning, we can show that a configuration such that $\rho < \beta$ 
			evolves also towards a limit cycle of length $\alpha + 1$.
			Now, let us consider the case where $\alpha < \beta$. By the hypothesis 
			stating that $\rho + \ell_0(\alpha+1) \neq \gamma + k_0(\beta+1)$, the 
			trajectory of a configuration such that $\rho > \gamma$ is $(\rho, \gamma) 
			\xrightarrow{\gamma} (\rho - \gamma, 0) \to (\rho - \gamma - 1, \beta) 
			\xrightarrow{\rho - \gamma - 1} (0, \beta - (\rho - \gamma - 1)) \to 
			(\alpha, \beta - (\rho - \gamma)) \xrightarrow{\ell(\alpha+1) = 
			k(\beta+1)} (\alpha, \beta - (\rho - \gamma))$, which corresponds to a 
			limit cycle of length $\lcm(\alpha+1, \beta+1)$. In other terms, if 
			$\alpha+1$ and $\beta+1$ are not coprime and if $\rho + \ell_0(\alpha+1) 
			\neq \gamma + k_0(\beta+1)$ then the network admits a limit cycle.
			
			With the same reasoning, it can be shown that the result holds also for 
			initial configurations such that $\rho < \gamma$. Furthermore, by symmetry 
			of the rule, the same reasoning applies for the case where $\alpha > 
			\beta$.
		\end{itemize}
	\end{enumerate} 
\end{proof}

\begin{proposition}
	\label{prop:LC14-15_17-24_18-23_26-27}
	Let $S = \{[9,00],[13,00],[20,00]\}$. Every MBN built on a BN belonging to 
	$\LC[00] \setminus S$ admits a limit cycle that is reached by all 
	configurations except $(0,0)$.
\end{proposition}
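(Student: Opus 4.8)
The plan is to exploit a structural feature that, by a direct inspection of Table~\ref{tab:BN2_fp00}, singles out exactly the eight networks of $\LC[00]\setminus S = \{[14,00],[15,00],[17,00],[18,00],[23,00],[24,00],[26,00],[27,00]\}$: for each of them the global map sends $(0,0)$, and only $(0,0)$, onto $(0,0)$. For the three excluded networks a second Boolean configuration is also mapped onto $(0,0)$ --- namely $(0,1)$ for $[9,00]$, $(1,1)$ for $[13,00]$ and $(1,0)$ for $[20,00]$ --- which is exactly what produces configurations distinct from $(0,0)$ that still fall into the fixed point, and hence what forces those three networks to be handled apart in Propositions~\ref{prop:LC9} and~\ref{prop:LC13}.

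First I would lift this remark to the MBN level. Fix one of the eight networks $[k,00]$ and an arbitrary delay vector $dt$, and consider the associated MBN $M$. By the MBN update rule a coordinate can equal $0$ at time $t+1$ only if its local transition function evaluates to $0$ at time $t$; hence, if $M$ reaches the all-$0$ configuration at some time $t+1$, then both local functions vanish on the Boolean configuration underlying $x(t)$, i.e. the underlying BN maps that Boolean configuration onto $(0,0)$. By the previous paragraph this Boolean configuration must be $(0,0)$ itself, and the only configuration of $M$ whose Boolean projection is $(0,0)$ is the all-$0$ one. Therefore, in $M$, the configuration $(0,0)$ has no predecessor but itself, so no trajectory issued from a configuration $\neq(0,0)$ can ever reach $(0,0)$.

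To conclude, I would use that $(0,0)$ is the unique fixed point of $M$: in any MBN a fixed point must have each coordinate either equal to $0$ with local value $0$, or equal to its maximal delay $dt_i$ with local value $1$ (a coordinate whose value lies strictly between $0$ and $dt_i$ is never stationary), so the fixed points of $M$ are in bijection with those of the underlying BN, and $[k,00]$ admits $(0,0)$ as its only one. Since $M$ is a finite deterministic dynamical system, every orbit is eventually periodic; an orbit started from a configuration $\neq(0,0)$ stays forever in the finite set of non-zero configurations, hence it enters a cycle of length at least $2$, that is a limit cycle. This gives the claim: every such $M$ admits a limit cycle, and the set of its limit cycles is reached from every configuration except $(0,0)$.

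The main obstacle is really to spot the right invariant: without the remark that $(0,0)$ is its own unique preimage under $F$, one is driven into an eight-fold case analysis tracking explicit delay trajectories, whereas that remark makes the whole statement uniform and essentially immediate. The residual work is light --- the bijection between fixed points of a MBN and of its underlying BN, and the mechanical verification of the ``unique preimage of $(0,0)$'' property on the eight truth tables. If, for homogeneity with Propositions~\ref{prop:FP00lc} and~\ref{prop:LC13}, an explicit treatment is preferred, one can instead take the eight networks one by one and exhibit for each a chain $(\rho,\gamma)\xrightarrow{*}(\alpha',\beta')\to\cdots\to(\alpha',\beta')$ routing every initial pair $(\rho,\gamma)\neq(0,0)$ into a concrete limit cycle; this is routine but markedly longer.
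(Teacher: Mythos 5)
Your structural route is genuinely different from the paper's, and its individual steps check out: on the eight truth tables of $\LC[00]\setminus S$ the configuration $(0,0)$ is indeed its own unique preimage (whereas $[9,00]$, $[13,00]$, $[20,00]$ also send $(0,1)$, $(1,1)$, $(1,0)$ respectively onto $(0,0)$, which is exactly why they need the separate treatment of Propositions~\ref{prop:LC9} and~\ref{prop:LC13}); in a MBN a coordinate can only be $0$ at time $t+1$ if its local function vanishes at time $t$, so in the extended state space $(0,0)$ has no predecessor but itself; and the fixed points of a MBN are precisely the liftings $x_i\in\{0,dt_i\}$ of the fixed points of the underlying BN, so $(0,0)$ is the only fixed point. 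Finiteness and determinism then force every configuration other than $(0,0)$ into a cycle of length at least $2$. Compared with the paper, which runs explicit trajectories (for $[14,00]$, every non-zero configuration funnels into $(0,\beta)\to(\alpha,\beta-1)\to(\alpha-1,\beta)\to\cdots\to(0,\beta)$, a cycle of length $\alpha+1$, the other seven rules being treated analogously), your argument is uniform, shorter, and it explains \emph{why} the set $S$ must be excluded.

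It does, however, prove slightly less than the proposition states. The statement asserts the existence of \emph{a} limit cycle reached by \emph{all} configurations except $(0,0)$ --- a single cycle attracting everything else --- and the paper's computation establishes exactly that (and even its length, $\alpha+1$ or $\beta+1$). Your argument only shows that every configuration $\neq(0,0)$ reaches \emph{some} limit cycle; as you acknowledge, you obtain convergence to the set of limit cycles, and nothing in the preimage/fixed-point reasoning rules out two disjoint limit cycles coexisting for some delay vector. To recover the full claim you still need a uniqueness ingredient, for instance exhibiting for each of the eight rules a configuration (such as $(0,\beta)$ for $[14,00]$) that every non-zero configuration eventually visits --- which is essentially the paper's trajectory analysis. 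So the slick part is correct and valuable, but the case-free argument alone leaves this gap.
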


\begin{proof} 
	Let us consider a MBN built on $[14,00]$ and an initial configuration such that 
	$\rho \geq 1$ and $\gamma \geq 2$. By definition of $f_2$, $x_2$ is set and stays 
	at $\beta$ while $x_1$ is positive. Notice also that $x_1$ decreases until 
	reaching $0$ by definition of $f_1$. Thus, we have the following trajectory: 
	$(\rho, \gamma) \to (\rho-1, \beta) \xrightarrow{\rho-1} (0, \beta) \to 
	(\alpha, \beta - 1) \to (\alpha-1, \beta) \xrightarrow{\alpha-1} (0, \beta)$,
	which highlights the evolution towards a limit cycle of length $\alpha+1$. Now, 
	let us focus on an initial configuration defined as $(\rho \geq 1, \gamma = 0)$. 
	Its trajectory is $(\rho, 0) \to (\rho-1, \beta) \xrightarrow{\rho - 1} (0, 
	\beta) \to (\alpha, \beta-1) \to (\alpha-1, \beta) \xrightarrow{\alpha - 1} (0, 
	\beta)$. As a consequence, for any MBN built on BN $[14,00]$ (resp. on BN 
	$[15,00]$ by symmetry), all configurations except fixed point $(0,0)$ evolves 
	towards a limit cycle of length $\alpha+1$ (resp. $\beta+1$). 
	
	Notice that a similar reasoning applies for MBNs based on BNs $[17,00]$, 
	$[23,00]$, $[26,00]$, and for their respective symmetric BNs $[24,00]$, $[18,00]$ 
	and $[27,00]$.
\end{proof}

Thanks to Propositions~\ref{prop:FP00fp} to~\ref{prop:LC14-15_17-24_18-23_26-27} 
above, we obtain the following theorem that recapitulates all the results that 
characterize the dynamical behaviors of all MBNs that can be constructed from BNs of 
size $2$ having the unique fixed point $(0,0)$.

\begin{theorem}
	\label{thm:FP00} 
	Table~\ref{tab:FP00} gives the dynamical behavior of any MBN built on the basis 
	of a BN belonging to $\FP(0,0)$, with delays $dt = (\alpha, \beta)$, initial 
	condition $0 \leq \rho \leq \alpha$, $0 \leq \gamma \leq \beta$ and $k,\ell \in 
	\N$.
\end{theorem}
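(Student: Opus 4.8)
The plan is to assemble Theorem~\ref{thm:FP00} directly from Propositions~\ref{prop:FP00fp}, \ref{prop:FP00lc}, \ref{prop:LC9}, \ref{prop:LC13} and~\ref{prop:LC14-15_17-24_18-23_26-27}, which between them already cover every one of the $27$ networks of $\F[00]$; the only remaining task is to transcribe their conclusions, case by case, into Table~\ref{tab:FP00}. First I would recall the partition established earlier: $\F[00] = \FP[00] \cup \LC[00]$, with $\FP[00] = \FP_\stab^{[00]} \cup \FP_\cy^{[00]}$ and $\LC[00] = S \cup (\LC[00]\setminus S)$ where $S = \{[9,00],[13,00],[20,00]\}$. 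A direct count against Table~\ref{tab:BN2_fp00} shows these four blocks are pairwise disjoint and exhaust the $27$ indices ($9 + 7 + 3 + 8 = 27$), so it suffices to fill the rows of Table~\ref{tab:FP00} block by block.

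For $\FP_\stab^{[00]} = \{[1,00],\dots,[5,00],[8,00],[10,00],[12,00],[21,00]\}$, Proposition~\ref{prop:FP00fp} yields the entry ``unique attractor: fixed point $(0,0)$'' for every $dt = (\alpha,\beta)$ and every initial condition, so these nine rows carry no side-condition. For $\FP_\cy^{[00]} = \{[6,00],[7,00],[11,00],[16,00],[19,00],[22,00],[25,00]\}$, Proposition~\ref{prop:FP00lc} must be re-read so as to extract, for each network, the precise region of the pair $(\alpha,\beta)$ (and, where relevant, of the initial state $(\rho,\gamma)$ and of $k$) in which a limit cycle appears and its length: for $[6,00]$/$[11,00]$, a cycle of length $\beta$ (resp. $\alpha$) when $\alpha \geq 2$ (resp. $\beta \geq 2$) and $\rho \geq 2$ (resp. $\gamma \geq 2$); for $[7,00]$/$[19,00]$, a cycle of length $\beta$ (resp. $\alpha$) when $\alpha > \beta$ (resp. $\beta > \alpha$); for $[16,00]$/$[22,00]$, a cycle of length $\beta$ (resp. $\alpha$) when $\alpha > \beta$, or when $\alpha < \beta$ with $\beta \neq k(\alpha+1)-1$ for all $k\in\N$, and $(0,0)$ otherwise; for $[25,00]$, a cycle of length $\min(\alpha,\beta)$ whenever $\alpha \neq \beta$. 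Each such verdict is one table entry with its explicit dependence on $(\alpha,\beta)$, $(\rho,\gamma)$, $k$.

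For the $\LC[00]$ block, Proposition~\ref{prop:LC9} supplies the rows for $[9,00]$ and $[20,00]$ (a limit cycle of length $\beta$, resp. $\alpha$, reached by every configuration with $\rho \geq 1$, resp. $\gamma \geq 1$); Proposition~\ref{prop:LC13} supplies the two-regime row for $[13,00]$, governed by whether $\gcd(\alpha+1,\beta+1) = 1$ and, when it is not, by the linear congruence $\rho + \ell_0(\alpha+1) = \gamma + k_0(\beta+1)$, the cycle having length $\lcm(\alpha+1,\beta+1)$; and Proposition~\ref{prop:LC14-15_17-24_18-23_26-27} supplies the uniform row for $\{[14,00],[15,00],[17,00],[18,00],[23,00],[24,00],[26,00],[27,00]\}$, namely a limit cycle of length $\alpha+1$ or $\beta+1$ reached from every configuration except $(0,0)$. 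Collecting all of this produces Table~\ref{tab:FP00}, completing the proof.

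I expect the only genuine obstacle to be exhaustiveness and non-overlap: one must check that the four blocks really partition the $27$ indices, and, inside each block, that every admissible $(\alpha,\beta) \in (\N\setminus\{0\})^2$ and every admissible $(\rho,\gamma)$ with $0 \leq \rho \leq \alpha$, $0 \leq \gamma \leq \beta$ falls under exactly one row of the table — in particular that the arithmetic dichotomies of Propositions~\ref{prop:FP00lc} (the ``$\beta = k(\alpha+1)-1$'' split for $[16,00]$/$[22,00]$) and~\ref{prop:LC13} (the $\gcd$/$\lcm$ split for $[13,00]$) are copied into the table with no gap and no double-counting. No new dynamical argument is required; the theorem is a consolidation, and its validity rests entirely on the correctness and joint completeness of the preceding propositions.
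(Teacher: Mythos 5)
Your proposal matches the paper exactly: the paper gives no separate argument for Theorem~\ref{thm:FP00}, which is stated explicitly as a recapitulation of Propositions~\ref{prop:FP00fp}--\ref{prop:LC14-15_17-24_18-23_26-27}, and your plan --- verify that $\FP_\stab^{[00]}$, $\FP_\cy^{[00]}$, $S$ and $\LC[00]\setminus S$ partition the $27$ networks and transcribe each proposition's delay/initial-condition regions and cycle lengths into the corresponding rows of Table~\ref{tab:FP00} --- is precisely that consolidation. Just take care, when transcribing, to copy the initial-condition regions verbatim from the propositions (e.g.\ for $[6,00]$ the limit cycle is also reached from $(\rho=1,\gamma\geq 1)$, not only from $\rho\geq 2$), since the theorem's content is the table itself.
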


\begin{table}
	\begin{small}	
		\centerline{\begin{tabular}{|c|c|c|c|}
			\hline
			Rules & Attractors & Delays $dt=(\alpha,\beta)$ & Initial conditions 
				$(\rho,\gamma)$\\
			\hline\hline
			\multirow{6}{1.9cm}{$[1,00],[2,00]{,}$ $[3,00],[4,00]{,}$ 
				$[5,00],[8,00]{,}$ $[10,00],[12,00]{,}$ $[21,00]$
			} & & &\\
			& {FP} & $\forall (\alpha, \beta)$ & $\forall (\rho, \gamma)$\\
			& &  & \\
			\cline{2-4} &  &  &\\ 
			& LC & $\emptyset$ & $\emptyset$\\
			&  &  &\\ 
			\hline\hline
			\multirow{3}{1.9cm}{$[6,00]$
			} & {FP} & $\alpha = 1$ & $\forall (\rho, \gamma)$\\
			& & $\forall (\alpha, \beta)$ & $(\rho=1 \land \gamma=0) \lor 
				(\rho=0 \land \gamma\geq0)$\\
			\cline{2-4} & LC & $(\alpha,\beta) \geq (2,1)$ & $(\rho \geq 2 \land 
				\gamma \geq 0) \lor (\rho=1 \land \gamma=1)$\\
			\hline
			\multirow{3}{1.9cm}{$[11,00]$
			} & 
			{FP} & $\beta = 1$ &  $\forall (\rho, \gamma)$\\
			& & $\forall (\alpha,\beta)$ &  ($\rho=0 \land \gamma=1) \lor (\rho \geq 0
				\land \gamma=0)$\\
			\cline{2-4} & LC & $(\alpha,\beta) \geq (1,2)$ & $(\rho \geq 0 \land 
				\gamma \geq 2) \lor (\rho=1 \land \gamma=1)$\\
			\hline\hline
			\multirow{2}{1.9cm}{$[7,00]$
			} & {FP} & $\alpha \leq \beta$ & $\forall (\rho, \gamma)$\\
			& & $\alpha > \beta$ & $\rho \leq \gamma$\\
			\cline{2-4}
			& LC &  $\alpha > \beta$ & $\rho > \gamma$\\
			\hline
			\multirow{2}{1.9cm}{$[19,00]$
			} & {FP} & $\alpha \geq \beta$ & $\forall (\rho, \gamma)$\\
			& & $\alpha < \beta$ & $\rho \geq \gamma$\\
			\cline{2-4}
			& LC &  $\alpha < \beta$ & $\rho < \gamma$\\
			\hline\hline
			\multirow{2}{1.9cm}{$[9,00]$
			} & {FP} & $\forall (\alpha,\beta)$ & $\rho = 0$\\
			\cline{2-4} & LC & $\forall (\alpha,\beta)$ & $\rho \geq 1$\\
			\hline
			\multirow{2}{1.9cm}{$[20,00]$
			} & {FP} & $\forall (\alpha,\beta)$ & $\gamma = 0$\\
			\cline{2-4} & LC & $\forall (\alpha,\beta)$ & $\gamma \geq 1$\\
			\hline\hline
			\multirow{4}{1.9cm}{$[13,00]$
			} & FP & $\gcd(\alpha+1,\beta+1) = 1$ & $\forall (\rho, \gamma)$\\
			& & $\gcd(\alpha+1,\beta+1) > 1$ & $\rho + \ell_0(\alpha+1) = \gamma + 
				k_0(\beta+1)$\\
			& & & $\text{with } 0 \leq k_0 \leq k, 0 \leq \ell_0 \leq \ell,$\\
			& & & $\ell(\alpha+1) = k(\beta+1) = \lcm(\alpha+1, \beta+1)$\\
			\cline{2-4} & LC & $\gcd(\alpha+1,\beta+1) > 1$ & $\rho + 
				\ell_0(\alpha+1) \neq \gamma + k_0(\beta+1)$\\
			& & & $\text{with } 0 \leq k_0 \leq k, 0 \leq \ell_0 \leq \ell,$\\
			& & & $\ell(\alpha+1) = k(\beta+1) = \lcm(\alpha+1, \beta+1)$\\			
			\hline\hline
			\multirow{4}{1.9cm}{$[14,00],[15,00]{,}$ $[17,00],[18,00]{,}$ 
				$[23,00],[24,00]{,}$ $[26,00]{,}[27,00]$
			} & {FP} & $\forall (\alpha, \beta)$ & $(\rho, \gamma) = (0,0)$\\
			& & & \\
			\cline{2-4} & {LC} & $\forall (\alpha, \beta)$ & $(\rho, \gamma) \neq 
				(0,0)$\\
			& & & \\
			\hline\hline
			\multirow{8}{1.9cm}{
				$[16,00]$
			} & {FP} & $\alpha < \beta$ & $(\rho = \gamma) \lor (\rho + k(\alpha+1) = 
				\gamma) \lor$\\
			& & & $(\beta + 1 = k(\alpha+1))$\\
			& & $\alpha = \beta$ & $\forall (\rho, \gamma)$\\
			& & $\alpha > \beta$ & $\rho = \gamma$\\ 
			\cline{2-4}
			& LC & $\alpha < \beta$ & $(\rho < \gamma) \land (\gamma \neq \rho + k 
				\cdot (\alpha + 1)) \land$\\
			& & & $(\beta + 1 \neq k(\alpha+1))$\\
			& & $\alpha < \beta$ & $(\rho > \gamma) \land (\beta + 1 
				\neq k(\alpha+1))$\\
			& & $\alpha > \beta$ & $\rho \neq \gamma$\\ 
			\hline		
			\multirow{8}{1.9cm}{
				$[22,(0,0)]$
			} & {FP} & $\alpha < \beta$ & $\rho = \gamma$\\
			& & $\alpha = \beta$ & $\forall (\rho, \gamma)$\\
			& & $\alpha > \beta$ & $(\rho = \gamma) \lor (\gamma + k(\beta + 1) = 
				\rho) \lor$\\
			& & & $(\alpha + 1 = k(\beta + 1))$\\
			\cline{2-4}
			& LC & $\alpha < \beta$ & $\rho \neq \gamma$\\
			& & $\alpha > \beta$ & $(\rho > \gamma) \land (\rho \neq \gamma + k 
				\cdot (\beta + 1)) \land$\\
			& & & $(\alpha + 1 \neq k(\beta + 1))$\\
			& & $\alpha > \beta$ & $(\rho < \gamma) \land (\alpha + 1 
				\neq k(\beta + 1))$\\
			\hline\hline
			\multirow{3}{1.9cm}{
				$[25,00]$
			} & {FP} & $\alpha = \beta $ & $\forall (\rho, \gamma)$\\
			& & $\alpha \neq \beta$ & $\rho = \gamma$\\
			\cline{2-4}
			& LC & $\alpha \neq \beta $ & $\rho \neq \gamma$  \\
			\hline
		\end{tabular}}
	\end{small}
	\caption{Dynamical behaviors of all the MBNs built on the basis of the BNs 
		belonging to $\FP(0,0)$, with delays $dt = (\alpha, \beta)$, initial condition 
		$0 \leq \rho \leq \alpha$, $0 \leq \gamma \leq \beta$ and $k, \ell \in \N$.}
	\label{tab:FP00}
\end{table}

Now the dynamical properties of MBNs of size $2$ with a unique fixed point have been 
characterized, let us pay attention to MBNs of the same size with two fixed points.

\subsection{Networks admitting two fixed points}
\label{sec:MBN2_2fp}

In this section, we focus on the MBNs that can be built on the basis of BNs of size 
$2$ that admit two fixed points. First of all, let us notice that there exist $6$ 
distinct classes of such networks, each of which being composed of $9$ networks. As 
what has been presented above, let us introduce the following notations. 
\begin{notation}
	$[k, x, y]$, with $k \in \{1, \dots, 9\}$ and $x, y \in \B^2$, denotes the 
	network of size $2$ whose local transition functions are represented by $k$ and 
	defined by their truth tables in 
	Tables~\ref{tab:FPs00-01},~\ref{tab:FPs00-11},~\ref{tab:FPs01-10} 
	and~\ref{tab:FPs01-11} and that admits $x$ and $y$ as its fixed points 
	(represented as binary words).
\end{notation}
Let us also denote by $\F[x, y]$ the set composed of all BNs that admit $x$ and $y$ 
as their unique fixed points, $\FP[x, y] \subseteq \F[x, y]$ the set of BNs for which 
$x$ and $y$ are the unique attractors, $\LC[x, y] \subseteq \F[x, y]$ the set of BNs 
that admits at least a limit cycle.

\subsubsection{MBNs based on $\F[00, 01]$ and $\F[00, 10]$}
\label{sec:MBN2_2fp_00_01}

Concerning the classes of BNs $\F[00, 01]$ and $\F[00, 10]$, notice first that they 
are symmetric. Thus, all the results obtained for $\F[00, 01]$ have their symmetric 
that hold for $\F[00, 10]$. So, let us focus only on $\F[00, 01]$.

\begin{table}[t!]
	{\scriptsize \centerline{
		$\begin{array}{|c||c|c|c|c|c|c|c|c|c|}
			\hline
			x & [1,00,01] & [2,00,01] & [3,00,01] & [4,00,01] & [5,00,01] & 
				[6,00,01] & [7,00,01] & [8,00,01] & [9,00,01]\\
			\hline\hline
			00 & 00 & 00 & 00 & 00 & 00 & 00 & 00 & 00 & 00\\
			01 & 01 & 01 & 01 & 01 & 01 & 01 & 01 & 01 & 01\\
			10 & 00 & 00 & 00 & 01 & 01 & 01 & 11 & 11 & 11\\
			11 & 00 & 01 & 10 & 00 & 01 & 10 & 00 & 01 & 10\\
			\hline
		\end{array}$}}\smallskip
		
	\hspace*{1pt}{\scriptsize \centerline{
		$\begin{array}{|c||c|c|c|c|c|c|c|c|c|}
			\hline
			x & [1,00,10] & [2,00,10] & [3,00,10] & [4,00,10] & [5,00,10] & 
				[6,00,10] & [7,00,10] & [8,00,10] & [9,00,10]\\
			\hline\hline
			00 & 00 & 00 & 00 & 00 & 00 & 00 & 00 & 00 & 00\\
			01 & 00 & 00 & 00 & 10 & 10 & 10 & 11 & 11 & 11\\
			10 & 10 & 10 & 10 & 10 & 10 & 10 & 10 & 10 & 10\\
			11 & 00 & 01 & 10 & 00 & 01 & 10 & 00 & 01 & 10\\
			\hline
		\end{array}$}}
	\caption{Truth tables of all the $9$ BNs that admit (up) fixed points $(0,0)$ and 
		$(0,1)$ and (down) fixed points $(0,0)$ and $(1,0)$.}
	\label{tab:FPs00-01}
\end{table}

\begin{remark}
	\label{rem:MBN2_2fp_00_01}
	For every network of $\F[00, 01]$, since $(0,0)$ and $(0,1)$ are fixed points, 
	whatever $\alpha$ and $\beta$ are, configuration $(0,0)$ cannot change, admits 
	the following trajectory $(0,0)$\loopr{} and is a fixed point, and the trajectory 
	of any initial configuration such that $\rho = 0$ and $\gamma \geq 1$ is 
	$(0, \gamma) \to (0, \beta)$\loopr{} that leads to fixed point $(0, \beta)$.
\end{remark}
Therefore we analyze the dynamical behavior of all the initial configurations of the 
form $(\rho, \gamma)$ where $\rho \neq 0$. From Table~\ref{tab:FPs00-01} (up), a 
basic enumeration gives that $\FP[00, 01] = \F[00,01] \setminus \{[9,00,01]\}$, and 
$\LC[00, 01] = \{[9,00,01]\}$. Now, let us partition $\FP[00, 01]$ into the following 
two sub-classes: $\FP_\stab^{[00,01]} = \{[1,00,01], \dots, [5,00,01], 
[8,00,01]\}$, and $\FP_\cy^{[00,01]} = \FP[00,01] \setminus \FP_\stab^{[00,01]} = 
\{[6,00,01],[7,00,01]\}$. Proposition~\ref{prop:MBN2_2fp_00_01_FPstab} below shows 
that MBNs based on BNs of $\FP_\stab[00,01]$ admits only two fixed points, $(0,0)$ 
and $(0,\beta)$.

\begin{proposition}
	\label{prop:MBN2_2fp_00_01_FPstab}
	For any delay vector $dt$, every MBN built on a BN that belongs to 
	$\FP_\stab^{[00,01]}$ admits only two attractors, fixed points $(0,0)$ and 
	$(0,\beta)$.
\end{proposition}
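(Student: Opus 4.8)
The plan is to reduce the statement to a single claim about the first coordinate and then dispatch the six relevant BNs according to the shape of their local function $f_1$. By Remark~\ref{rem:MBN2_2fp_00_01}, $(0,0)$ is always a fixed point and every configuration with $\rho=0$ flows within one step to $(0,\beta)$, which is therefore a fixed point as well; so it suffices to show that, for each BN of $\FP_\stab^{[00,01]}$ and each initial configuration with $\rho\geq 1$, the first coordinate eventually takes the Boolean value $0$. Once this happens, I will use the feature shared by all nine networks of Table~\ref{tab:FPs00-01}~(up) --- their rows $00$ and $01$ read $f=(0,0)$ and $f=(0,1)$ respectively --- to conclude that $x_1$ then stays null forever and that $x_2$ stabilises within one further step at $0$ (if its current Boolean value is $0$) or at $\beta$ (if it is $1$, since $f_2$ then evaluates to $1$ and resets $\Delta_2$ to $\beta$). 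As every trajectory would then reach $(0,0)$ or $(0,\beta)$, no other attractor --- in particular no limit cycle --- can exist, which is exactly the assertion.

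For the four BNs $[1,00,01]$, $[2,00,01]$, $[4,00,01]$ and $[5,00,01]$ this is immediate: reading Table~\ref{tab:FPs00-01} shows $f_1\equiv 0$ for each of them, hence $\Delta_1$ decreases by one at every step and reaches $0$ after exactly $\rho\leq\alpha$ steps. (For $[1,00,01]$, $[2,00,01]$ and $[5,00,01]$ one could alternatively observe that their interaction graphs contain no cycle other than a positive loop and invoke Proposition~\ref{prop:dag}; the direct argument is shorter and also pins down the attractors.)

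The substantive part concerns the two BNs whose $f_1$ is not identically $0$: $[3,00,01]$, with $f_1=x_1\land x_2$ and $f_2=\neg x_1\land x_2$, and $[8,00,01]$, with $f_1=x_1\land\neg x_2$ and $f_2=x_1\lor x_2$. Here $x_1$ may be reset to $\alpha$ rather than decay, so I must follow the coupling with $x_2$. For $[3,00,01]$: as long as the Boolean value of $x_1$ is $1$ one has $f_2=0$, so $\Delta_2$ decreases; thus either $x_1$ reaches $0$ first, or $x_2$ reaches $0$ while $x_1$ is still positive, and in that second situation $f_1=f_2=0$ from then on, so $x_2$ stays frozen at $0$ and $\Delta_1$ decays to $0$, giving convergence to $(0,0)$ in at most $\alpha+\beta$ steps. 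For $[8,00,01]$: if $\gamma=0$ the configuration jumps in one step to $(\alpha,\beta)$, reducing matters to the case $\gamma\geq 1$; and when $\gamma\geq 1$ the Boolean value of $x_2$ equals $1$, which forces $f_1=0$ and $f_2=1$ while $x_1\geq 1$, so $\Delta_1$ decays to $0$ within $\alpha$ steps while $x_2$ is repeatedly reset to $\beta$, yielding convergence to $(0,\beta)$. I expect this bookkeeping --- verifying that the ``refresh to $\alpha$'' behaviour of $x_1$ cannot persist indefinitely in $[3,00,01]$ and $[8,00,01]$ --- to be the only delicate point; everything else is reading the truth tables and applying the memory update rule.

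Combining the three cases, every initial configuration reaches $(0,0)$ or $(0,\beta)$, both of which are fixed points, so these two fixed points are exactly the attractors of any MBN built on a BN of $\FP_\stab^{[00,01]}$, as claimed.
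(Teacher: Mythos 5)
Your proof is correct, and it is organized differently from the paper's. The paper dispatches the six networks by appealing to its general structural results: $[1,00,01]$, $[2,00,01]$, $[3,00,01]$ are decreasing, hence covered by Proposition~\ref{prop:decreasing}; $[5,00,01]$ has an interaction graph with no cycle besides a positive loop, hence Proposition~\ref{prop:dag} applies; and $[4,00,01]$ and $[8,00,01]$ are treated by explicit trajectory case analyses (for $[4,00,01]$ the paper even distinguishes $\rho=\gamma$, $\rho<\gamma$, $\rho>\gamma$ and a divisibility condition $\rho-\gamma = k(\beta+1)$ to identify which fixed point is reached). You instead extract a uniform reduction that the paper never states: since rows $00$ and $01$ are fixed in every BN of $\F[00,01]$, once $x_1$ takes Boolean value $0$ it stays there and the configuration stabilizes at $(0,0)$ or $(0,\beta)$ within one further step, so everything reduces to showing that $x_1$ eventually vanishes; this is immediate for $[1],[2],[4],[5]$ because $f_1\equiv 0$, and your short coupled arguments for $[3,00,01]$ and $[8,00,01]$ (where $x_1$ can be refreshed) are accurate against the truth tables. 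What each route buys: the paper's version recycles its general propositions and yields finer information (which attractor is reached from which initial condition, and convergence times), which it reuses in the summary discussion; yours is leaner for the statement as given, notably bypassing the divisibility bookkeeping for $[4,00,01]$ since identifying basins is not required to conclude that $(0,0)$ and $(0,\beta)$ are the only attractors. The only nitpick is a wording slip in your opening: a configuration with $\rho=0$ and $\gamma\geq 1$ flows in one step to $(0,\beta)$, and the fact that $(0,\beta)$ is fixed follows from row $01$ being fixed (i.e.\ $f_2(0,1)=1$ refreshes $\Delta_2$ to $\beta$), not from the mere fact that something flows into it; this is exactly the content of Remark~\ref{rem:MBN2_2fp_00_01}, so nothing is missing in substance.
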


\begin{proof}
	Let us first consider MBNs built on BNs $[1,00,01]$, $[2,00,01]$ or $[3,00,01]$. 
	Table~\ref{tab:FPs00-01} (up) shows that they are decreasing networks. As a 
	consequence, by Proposition~\ref{prop:decreasing}, they admit only fixed points 
	that are $(0,0)$ and $(0,\beta)$ by definition of the local transition 
	functions. All the MBNs built on $[1,00,01]$, $[2,00,01]$ (resp. on $[3,00,01]$) 
	converge in at most $\alpha$ (resp. $\alpha+\beta$) time steps.\smallskip
	
	Moreover, Table~\ref{tab:FPs00-01} (up) highlights also that the interaction 
	graph of BN $[5,00,01]$ does not induce cycles except a positive loop on vertex 
	$2$. Thus, by Proposition~\ref{prop:dag}, any MBN built on $[5,00,01]$ admits 
	only fixed points that are $(0,0)$ and $(0,\beta)$. More precisely, for any delay 
	vector $dt = (\alpha, \beta)$ and $\rho \geq 1$, we have the following 
	trajectory: $(\rho, \gamma) \xrightarrow{\rho} (0,\beta)$\loopr{} that is reached 
	in at most $\alpha$ time steps.\smallskip
	
	Consider a MBN built on BN $[4,00,01]$. Let $(\rho, \gamma)$ be any initial 
	configuration. If $\rho = \gamma$, its trajectory is $(\rho,\rho) 
	\xrightarrow{\rho} (0,0)$\loopr{} and the network converges in at most 
	$\min(\alpha, \beta)$ time steps. If $\rho < \gamma$, its trajectory is $(\rho, 
	\gamma) \xrightarrow{\rho} (0, \gamma - \rho) \to (0, \beta)$\loopr{} and the 
	network converges in $\alpha+1$ at most. Now, if $\rho > \gamma$, the trajectory 
	begins by $(\rho,\gamma) \xrightarrow{\gamma} (\rho - \gamma, 0) \to (\rho - 
	\gamma - 1, \beta)$, and the trajectory of $(\rho - \gamma - 1, \beta)$ is either 	$(\rho - \gamma - 1, \beta) \xrightarrow{\rho - \gamma - 1} (0,\beta)$\loopr{} if 
	$\rho - \gamma \neq k(\beta+1)$, or $(\rho - \gamma - 1, \beta) \xrightarrow{\rho 
	- \gamma - 1} (0,0)$\loopr{} otherwise. Thus the network converges towards these 
	two fixed points in at most $\alpha$ time steps.\smallskip

	Consider finally a MBN built on BN $[8,00,01]$. For any delay vector $dt = 
	(\alpha, \beta)$, we have: if $\rho \geq 1$ and $\gamma \geq 1$ then $(\rho, 
	\gamma) \to (\rho-1, \beta) \xrightarrow{\rho-1} (0,\beta)$\loopr{} that is 
	reached in $\rho$ time steps, and if $\rho \geq 1$ and $\gamma = 0$ then $(\rho, 
	0) \to (\alpha, \beta) \xrightarrow{\alpha} (0, \beta)$\loopr{} that is reached 
	in $\alpha+1$ time steps. Thus, the network admits only the two fixed points 
	$(0,0)$ and $(0,\beta)$ and its convergence time is at most $\alpha+1$ time 
	steps.
\end{proof}

Proposition~\ref{prop:MBN2_2fp_00_01_FPcy} shows that there exist specific conditions 
under which MBNs built on BNs belonging to $\FP_\cy[00,01]$ evolve towards a limit 
cycle.

\begin{proposition}
	\label{prop:MBN2_2fp_00_01_FPcy}
	For all the BNs of $\FP_\cy[00,01]$, there exist delay vectors $dt$s such that 
	any associated MBN admits a limit cycle.
\end{proposition}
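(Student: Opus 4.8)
The plan is to mimic the constructive strategy of Proposition~\ref{prop:FP00lc}: for each of the two BNs in $\FP_\cy[00,01] = \{[6,00,01],[7,00,01]\}$, I would exhibit an explicit delay vector $dt = (\alpha,\beta)$ together with an initial configuration whose trajectory, traced step by step under the MBN update, becomes periodic and therefore yields a limit cycle. First I would read off from Table~\ref{tab:FPs00-01}~(up) the local transition functions: for $[6,00,01]$ one has $f_1(x) = x_1 \land x_2$ and $f_2(x) = x_1 \oplus x_2$, while for $[7,00,01]$ one has $f_1(x) = x_1 \land \neg x_2$ and $f_2(x) = x_1 \oplus x_2$. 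Using the abuse of notation of Section~\ref{sec_models_def_mbns}, a state $x_i \in \{0,\dots,dt_i\}$ projects to the Boolean value $1$ as soon as $x_i \geq 1$, the update resets $x_i$ to $dt_i$ whenever $f_i$ evaluates to $1$ on the Boolean projection of $x$, and decrements $x_i$ to $\max(x_i-1,0)$ otherwise; all of the tracings below are carried out with this rule.

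For $[6,00,01]$ I would take any $dt = (\alpha,\beta)$ with $\alpha \geq 2$ and $\beta \geq 1$, and follow the orbit of the configuration $(\alpha,0)$. Since $x_2 = 0$ forces $f_1 = 0$ while $f_2 = 1$, one gets $(\alpha,0) \to (\alpha-1,\beta)$; then $f_1 = 1$ resets $x_1$ and $f_2 = 0$ decrements $x_2$, giving $(\alpha-1,\beta) \to (\alpha,\beta-1)$; finally, as long as $x_2 \geq 1$ the positive self-action keeps $f_1 = 1$ so $x_1$ stays pinned at $\alpha$ while $x_2$ decreases, i.e. $(\alpha,\beta-1) \xrightarrow{\beta-1} (\alpha,0)$. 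Thus $(\alpha,0)$ lies on a limit cycle of length $\beta+1$ (for $dt=(2,1)$ this is simply $(2,0) \leftrightarrows (1,1)$). For $[7,00,01]$ I would take any $dt = (\alpha,\beta)$ with $\alpha > \beta \geq 1$ and again start from $(\alpha,0)$: here both $f_1$ and $f_2$ fire, so $(\alpha,0) \to (\alpha,\beta)$; then both coordinates decrement jointly, $(\alpha,\beta) \xrightarrow{\beta} (\alpha-\beta,0)$, the hypothesis $\alpha > \beta$ ensuring that $x_1$ stays positive throughout and lands on $\alpha-\beta \geq 1$; and finally $(\alpha-\beta,0) \to (\alpha,\beta)$, closing a limit cycle of length $\beta+1$ (for $dt=(2,1)$ this is $(2,1) \leftrightarrows (1,0)$).

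These computations are routine and there is no deep obstacle; the one point that requires care is the bookkeeping between the integer ``memory'' value of a state and its Boolean projection when evaluating $f_1$ and $f_2$ — in particular checking that the self-loop of $[6,00,01]$ (resp. the positive/negative interplay governing $x_1$ in $[7,00,01]$) keeps $x_1$ at the right level during the decrement phase, and that the chosen delay hypotheses ($\alpha \geq 2$ for $[6,00,01]$, $\alpha > \beta$ for $[7,00,01]$) are exactly what prevents the orbit from collapsing onto one of the fixed points $(0,0)$ or $(0,\beta)$ of Remark~\ref{rem:MBN2_2fp_00_01}. To conclude I would remark that the underlying BNs themselves (the case $\alpha = \beta = 1$) have no limit cycle, so the cycle of length $\beta+1$ is genuinely produced by the added memory, which is precisely the claim; and, as recorded for the table analogous to Table~\ref{tab:FP00}, this limit cycle is in fact reached from every configuration with $x_1 \geq 1$ lying outside the basins of the two fixed points.
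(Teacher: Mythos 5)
Your proposal is correct and follows essentially the same route as the paper: for $[6,00,01]$ you exhibit, for $\alpha \geq 2$, the same cycle $(\alpha,0) \to (\alpha-1,\beta) \to (\alpha,\beta-1) \xrightarrow{\beta-1} (\alpha,0)$ of length $\beta+1$, and for $[7,00,01]$, under the same condition $\alpha > \beta$, the same cycle $(\alpha,\beta) \xrightarrow{\beta} (\alpha-\beta,0) \to (\alpha,\beta)$ that the paper's proof identifies. The only difference is that the paper additionally traces all other initial configurations (and the case $\alpha=1$) to describe basins and convergence times, which the existence statement does not require.
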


\begin{proof}
	Consider BN $[6,00,01]$. Let us consider two cases for $\alpha$ and begin with 
	$\alpha > 1$. The different possible evolutions are: if $\rho \geq 1, \gamma \geq 
	1$ then $(\rho,\gamma) \xrightarrow{\gamma} (\alpha, 0) \to (\alpha-1, \beta) \to 
	(\alpha, \beta-1) \xrightarrow{\beta-1} (\alpha, 0)$, which emphasizes a limit 
	cycle of length $\beta+1$; if $\rho > 1, \gamma = 0$ then $(\rho, 0) \to (\rho-1, 
	\beta) \to (\alpha, \beta-1) \xrightarrow{\beta-1} (\alpha,0)$ that belongs to a 
	limit cycle of length $\beta+1$; if $\rho = 1, \gamma = 0$ then $(1, 0) \to (0, 
	\beta)$\loopr{}. Now, consider that $\alpha = 1$. The different possible 
	evolutions are: if $\rho = 1, \gamma = 0$ then $(1,0) \to (0, \beta)$\loopr{}; if 
	$\rho = 1, \gamma \geq 1$ then $(1,\gamma) \xrightarrow{\gamma} (1, 0) \to (0, 
	\beta)$\loopr{} and reaches its fixed point in $\gamma+1$ time steps. Thus, the 
	MBNs built on BN $[6,00,01]$ converge to their fixed points in at most $\beta+1$ 
	time steps and can admit a limit cycle of length $\beta+1$.\medskip
	
	Now, consider BN $[7,00,01]$. Let us consider three cases depending on the 
	initial configuration $(\rho, \gamma)$. First, if $\rho > \gamma$, we have:
	if $\alpha > \beta$ then $(\rho, \gamma) \xrightarrow{\gamma} (\rho - \gamma, 0) 
	\to (\alpha, \beta) \xrightarrow{\beta} (\alpha-\beta, 0) \to (\alpha, \beta)$, 
	which emphasizes a limit cycle of length $\beta+1$; if $\alpha < \beta$ then 
	$(\rho, \gamma) \xrightarrow{\gamma} (\rho - \gamma, 0) \to (\alpha, \beta) 
	\xrightarrow{\alpha} (0, \beta-\alpha) \to (0, \beta)$\loopr{}, and reaches its 
	fixed point in $\alpha+\gamma+2$ time steps; if $\alpha = \beta$ then $(\rho,
	\gamma) \xrightarrow{\gamma} (\rho - \gamma, 0) \to (\alpha, \alpha) 
	\xrightarrow{\alpha} (0,0)$\loopr{}, and reaches its fixed point in 
	$\alpha+\gamma+1$ time steps. Now, whatever $dt$ is: if $\rho < \gamma$ then 
	$(\rho, \gamma) \xrightarrow{\rho} (0, \gamma - \rho) \to (0, \beta)$\loopr{}, 
	and reaches its fixed point in $\rho+1$ time steps; if $\rho = \gamma$ then 
	$(\rho, \rho) \xrightarrow{\rho} (0, 0)$\loopr{}, and reaches its fixed point in 
	$\rho$ time steps. Thus, the MBNs built on BN $[7,00,01]$ converge to their fixed 
	points in at most $\alpha + \beta + 2$ time steps and can admit a limit cycle of 
	length $\beta+1$.
\end{proof}

Proposition~\ref{prop:MBN2_2fp_00_01_FP9} shows the same principle for $[0,00,01] \in 
\LC[00,01]$.

\begin{proposition}
	\label{prop:MBN2_2fp_00_01_FP9}
	Considering $[9,00,01]$ the unique element of $\FP_\cy[00,01]$, for all delay 
	vectors $dt$, there exist initial conditions such that any associated MBN admits
	a limit cycle.
\end{proposition}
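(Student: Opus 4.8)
The plan is to follow the same strategy as in the proof of Proposition~\ref{prop:LC9}, exploiting the fact that the first coordinate of $[9,00,01]$ is self-maintained. First I would read off from Table~\ref{tab:FPs00-01} (up) the local transition functions of $[9,00,01]$, namely $f_1(x) = x_1$ (a positive loop) and $f_2(x) = x_1 \oplus x_2$ (the \textsc{xor} of the two coordinates): this is consistent with the pure BN dynamics $00$\loopr{}, $01$\loopr{}, $10 \to 11 \to 10$. Fix an arbitrary admissible delay vector $dt = (\alpha, \beta)$. I would then pick any initial configuration with $\rho \geq 1$, for instance $(\rho, \gamma) = (\alpha, \beta)$. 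Since $f_1(x) = x_1$, we get $f_1(x(0)) = 1$, hence $x_1(1) = \alpha$, and by an immediate induction $x_1(t) = \alpha$ for every $t \geq 1$.

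Next I would describe the evolution of the second coordinate once $x_1$ is frozen at $\alpha \geq 1$. For $t \geq 1$, since $x_1(t) \geq 1$, we have $f_2(x(t)) = 1 \oplus [x_2(t) \geq 1]$, so $f_2(x(t)) = 1$ exactly when $x_2(t) = 0$ and $f_2(x(t)) = 0$ otherwise. By the delay update rule this gives $x_2(t+1) = \beta$ when $x_2(t) = 0$ and $x_2(t+1) = x_2(t) - 1$ when $x_2(t) \geq 1$. Consequently the trajectory runs through the recurrent configurations $(\alpha, \beta) \to (\alpha, \beta - 1) \to \cdots \to (\alpha, 1) \to (\alpha, 0) \to (\alpha, \beta)$, visiting all of $\{0, 1, \dots, \beta\}$ in the second coordinate periodically. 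This is a limit cycle of length $\beta + 1 \geq 2$, reached by every initial configuration with $\rho \geq 1$, for every admissible $dt = (\alpha, \beta)$, which is precisely the assertion. The configurations with $\rho = 0$ behave as in Remark~\ref{rem:MBN2_2fp_00_01} and converge to $(0,0)$ or $(0,\beta)$, but they are irrelevant to the statement.

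There is essentially no hard step here: the only points that require a little care are to note that a firing vertex is reset to its delay value $\alpha$ (so $x_1$ stabilises at $\alpha$, not at the initial $\rho$), and to check the boundary cases $\beta = 1$, where the cycle reduces to $(\alpha,1) \leftrightarrows (\alpha,0)$, and $x_2(0) = 0$, which is already a point of the cycle — both immediate. The argument is a direct adaptation of the proof of Proposition~\ref{prop:LC9}, the only difference being that the \textsc{xor} on coordinate $2$ produces a cycle of length $\beta + 1$ rather than $\beta$.
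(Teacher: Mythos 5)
Your proof is correct and follows essentially the same route as the paper: you identify $f_1(x)=x_1$, $f_2(x)=x_1\oplus x_2$, note that the positive loop freezes $x_1$ at $\alpha$ after one step, and exhibit the cycle $(\alpha,\beta)\to(\alpha,\beta-1)\to\cdots\to(\alpha,0)\to(\alpha,\beta)$ of length $\beta+1$, reached by every configuration with $\rho\geq 1$ — exactly the paper's argument. The only cosmetic slip is the remark that a configuration with $x_2(0)=0$ ``is already a point of the cycle'': it is only so when $\rho=\alpha$, otherwise it merely maps onto the cycle in one step, which does not affect the conclusion.
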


\begin{proof}
	Consider any $dt$ and an initial configuration such that $\rho \geq 1$. We have: 
	if $\gamma = 0$ then $(\rho, 0) \to (\alpha, \beta) \xrightarrow{\beta} (\alpha, 
	0) \to (\alpha, \beta)$, which emphasizes a limit cycle of length $\beta+1$; if 
	$\gamma \geq 1$ then $(\rho, \gamma) \xrightarrow{\gamma} (\alpha, \beta)$ that 
	belongs to a limit cycle of length $\beta+1$. Thus, the MBNs built on BN 
	$[9,00,01]$ converge to their fixed points in at most $1$ time step and can admit 
	a limit cycle of length $\beta+1$.
\end{proof}

\subsubsection{MBNs based on $\F[00, 11]$}
\label{sec:MBN2_2fp_00_11}

The class of BNs $\F[00, 11]$ does not admit a symmetric class and what follows gives 
a characterization of the dynamics of MBNs built on it. 

\begin{table}[t!]
	{\scriptsize \centerline{
		$\begin{array}{|c||c|c|c|c|c|c|c|c|c|}
			\hline
			x & [1,00,11] & [2,00,11] & [3,00,11] & [4,00,11] & [5,00,11] & 
				[6,00,11] & [7,00,11] & [8,00,11] & [9,00,11]\\
			\hline\hline
			00 & 00 & 00 & 00 & 00 & 00 & 00 & 00 & 00 & 00\\
			01 & 00 & 00 & 00 & 10 & 10 & 10 & 11 & 11 & 11\\
			10 & 00 & 01 & 11 & 00 & 01 & 11 & 00 & 01 & 11\\
			11 & 11 & 11 & 11 & 11 & 11 & 11 & 11 & 11 & 11\\
			\hline
		\end{array}$
	}}
	\caption{Truth tables of all the $9$ BNs that admit fixed points $(0,0)$ and 
		$(1,1)$.}
	\label{tab:FPs00-11}
\end{table}

\begin{remark}
	\label{rem:MBN2_2fp_00_11}
	For every network of $\F[00, 11]$, since $(0,0)$ and $(1,1)$ are fixed points, 
	whatever $\alpha$ and $\beta$ are, configuration $(0,0)$ cannot change, admits 
	the following trajectory $(0,0)$\loopr{} and is a fixed point, and any initial 
	configuration such that $\rho \geq 1$ and $\gamma \geq 1$ admits the following 
	trajectory $(\rho, \gamma) \to (\alpha, \beta)$\loopr{} that leads to fixed point 
	$(\alpha, \beta)$.
\end{remark}
Therefore we analyze the dynamical behavior of all the initial configurations of the 
form $(0, \gamma)$ where $\gamma \neq 0$ and $(\rho, 0)$ where $\rho \neq 0$. From 
Table~\ref{tab:FPs00-11}, a basic enumeration gives that $\FP[00, 11] = \F[00,11] 
\setminus \{[5,00,11]\}$, and $\LC[00, 11] = \{[5,00,11]\}$. 
Proposition~\ref{prop:MBN2_2fp_00_11_FP} below shows that all the MBNs built on BNs 
of $\FP[00,11]$ converge towards fixed points $(0,0)$ and $(\alpha,\beta)$ that are 
the only attractors. 

\begin{proposition}
	\label{prop:MBN2_2fp_00_11_FP}
	For any delay vector $dt$, every MBN built on a BN that belongs to $\FP[00,11]$ 
	admits only two attractors, fixed points $(0,0)$ and $(\alpha,\beta)$.
\end{proposition}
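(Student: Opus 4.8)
The plan is to reduce the statement, via Remark~\ref{rem:MBN2_2fp_00_11}, to a small finite case analysis. The remark already tells us that $(0,0)$ is a fixed point and that every configuration $(\rho,\gamma)$ with $\rho\geq 1$ and $\gamma\geq 1$ (Boolean projection $(1,1)$) goes in one step to $(\alpha,\beta)$, which is a fixed point. Since $\FP[00,11]=\F[00,11]\setminus\{[5,00,11]\}$ consists of the eight BNs $[1,00,11],\dots,[4,00,11],[6,00,11],\dots,[9,00,11]$, it therefore suffices to show that, for each of these eight networks and every delay vector $dt=(\alpha,\beta)$, the trajectories issued from the two ``boundary'' families of configurations, namely $(0,\gamma)$ with $\gamma\geq 1$ and $(\rho,0)$ with $\rho\geq 1$, reach $(0,0)$ or $(\alpha,\beta)$ and do not enter a limit cycle.

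The case analysis is organised by the values that $(f_1,f_2)$ takes on the Boolean configurations $(0,1)$ and $(1,0)$, i.e.\ by rows $01$ and $10$ of Table~\ref{tab:FPs00-11}. From $(0,\gamma)$ with $\gamma\geq 1$ the Boolean projection is $(0,1)$: if the image is $(0,0)$ then $\Delta_2$ decays and $(0,\gamma)\to(0,\gamma-1)\to\cdots\to(0,0)$; if it is $(1,1)$ then $(0,\gamma)\to(\alpha,\beta)$; if it is $(1,0)$ then $(0,\gamma)\to(\alpha,\gamma-1)$, after which either $\gamma-1\geq 1$, the projection is $(1,1)$ and we land on $(\alpha,\beta)$, or $\gamma-1=0$ and one continues from $(\alpha,0)$, reading row $10$. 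Symmetrically, from $(\rho,0)$ with $\rho\geq 1$ the projection is $(1,0)$, and depending on whether row $10$ is $(0,0)$, $(0,1)$ or $(1,1)$ one gets respectively $(\rho,0)\to(\rho-1,0)\to\cdots\to(0,0)$, or $(\rho,0)\to(\rho-1,\beta)$ (then finish by the $(1,1)$ jump if $\rho\geq 2$, or re-enter the $(0,\gamma)$ analysis with $\gamma=\beta$ if $\rho=1$), or $(\rho,0)\to(\alpha,\beta)$. Fixing each of the eight networks just amounts to substituting the two relevant rows of the table and following these finitely many fragments to their endpoint; the convergence-time bounds (of order $\alpha+\beta$) can be recorded along the way but are not needed.

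The only delicate point, and precisely where the hypothesis $F\neq[5,00,11]$ is used, is the ``corner'' combination in which row $01$ sends $(0,1)$ to $(1,0)$ \emph{and} row $10$ sends $(1,0)$ to $(0,1)$: there the trajectory bounces, $(0,\gamma)\to(\alpha,0)\to(\alpha-1,\beta)\to\cdots$, and a limit cycle could form — indeed this is exactly the behaviour that puts $[5,00,11]$ in $\LC[00,11]$. For every other network of $\F[00,11]$ at least one of rows $01$, $10$ is ``absorbing'', i.e.\ maps its state to $(0,0)$ or to $(1,1)$, so each of the trajectory fragments above terminates at one of the two fixed points and the recursion described in the previous paragraph cannot loop. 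Hence $(0,0)$ and $(\alpha,\beta)$ are the only attractors of every MBN built on a BN of $\FP[00,11]$, which is the claim. The main obstacle is thus not any single hard argument but keeping the bookkeeping clean and explicitly confirming that the single bouncing pattern is confined to the excluded network $[5,00,11]$.
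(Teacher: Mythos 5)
Your proof is correct, but it takes a genuinely different route from the paper. The paper dispatches six of the eight networks of $\FP[00,11]$ by invoking its earlier general results: Proposition~\ref{prop:decreasing} for $[1,00,11]$ (decreasing) and $[9,00,11]$ (increasing), Proposition~\ref{prop:dag} for $[3,00,11]$ and $[7,00,11]$ (acyclic interaction graphs up to loops), and Proposition~\ref{prop:disj_MBN} for $[6,00,11]$ and $[8,00,11]$ (positive disjunctive); only $[2,00,11]$ and, by symmetry, $[4,00,11]$ are handled by explicit trajectory computations. You instead give a single uniform, elementary argument: after the reduction via Remark~\ref{rem:MBN2_2fp_00_11} to the boundary configurations $(0,\gamma)$ and $(\rho,0)$, you organise the analysis by the rows $01$ and $10$ of Table~\ref{tab:FPs00-11} (noting implicitly that these rows can never be $01$ and $10$ respectively, since that would create a third fixed point), observe that each trajectory fragment terminates at $(0,0)$ or $(\alpha,\beta)$ whenever at least one of the two rows is absorbing, and confine the only non-terminating ``bouncing'' pattern (row $01\mapsto 10$ together with row $10\mapsto 01$) to the excluded network $[5,00,11]$. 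I checked the recursion: in the admissible networks it has depth at most two (a boundary configuration can hand off to the other boundary family at most once before an absorbing row applies), so it cannot loop, and your case analysis is exhaustive. What each approach buys: the paper's proof showcases and reuses its structural machinery and records per-network convergence-time bounds; yours is self-contained, treats all eight networks at once, and makes structurally explicit why $[5,00,11]$ is the unique exception, which dovetails nicely with Proposition~\ref{prop:MBN2_2fp_00_11_LC}.
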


\begin{proof} 
	First, from Table~\ref{tab:FPs00-11}, it derives that network $[1,00,11]$ (resp. 
	$[9,00,11]$) is decreasing (resp. increasing). So, from 
	Proposition~\ref{prop:decreasing}, any MBN based on it (resp. on $[9,00,11]$) 
	converges towards its two fixed points $(0,0)$ and $(\alpha,\beta)$. It does so 
	in at most $\max(\alpha, \beta)$ time steps (resp. $1$ time step).\smallskip
	
	Furthermore, the interaction graph of network $[3,00,11]$ (resp. $[7,00,11]$ by 
	symmetry) is acyclic. So, from Proposition~\ref{prop:dag}, any MBN based on it 
	(resp. on $[7,00,11]$) converges towards the two fixed points and it does 
	so in at most $\beta$ (resp. $\alpha$) time steps.\smallskip
	
	Moreover, it is easy to see also that network $[6,00,11]$ (resp. $[8,00,11]$ by 
	symmetry) is a positive disjunctive BN. So, from Proposition~\ref{prop:disj_MBN}, 
	any MBN based on it (resp. based on $[8,00,11]$) converges towards the two fixed 
	points. It does so in at most $\beta+1$ (resp. $\alpha+1$ and $1$) time 
	steps.\smallskip
	
	Let us now focus on the MBNs built on BN $[2,00,11]$, given any delay vector 
	$dt$. We have: if $\rho = 0$ and $\gamma \geq 1$ then $(0, \gamma) 
	\xrightarrow{\gamma} (0,0)$\loopr{} that is a fixed point reached in $\gamma$ 
	time steps; if $\rho = 1$ and $\gamma = 0$ then $(1, 0) \to (0, \beta) 
	\xrightarrow{\beta} (0,0)$\loopr{} that is a fixed point reached in $\beta+1$ 
	time steps; if $\rho > 1$ and $\gamma = 0$ then $(\rho, 0) \to (\rho-1, \beta) 
	\to (\alpha, \beta)$\loopr{} that is a fixed point reached in $2$ time steps. So, 
	every MBN based on BN $[2,00,11]$ (resp. on BN $[4,00,11]$ by symmetry) admits 
	only two attractors, fixed points $(0,0)$ and $(\alpha,\beta)$, and its 
	convergence time is at most $\beta+1$ (resp. $\alpha+1$) time steps.
\end{proof}

Now, Proposition~\ref{prop:MBN2_2fp_00_11_LC} shows that there exist specific 
conditions under which MBNs built on BN $[5,00,11]$ of $\LC[00,11]$ evolve towards a 
limit cycle.

\begin{proposition}
	\label{prop:MBN2_2fp_00_11_LC}
	The only MBN built on BN $[5,00,11]$ that admits a limit cycle is set with 
	$\alpha = \beta = 1$, \ie BN $[5,00,11]$ itself. Any other MBN built $[5,00,11]$ 
	admits only two attractors, fixed points $(0,0)$ and $(\alpha,\beta)$.
\end{proposition}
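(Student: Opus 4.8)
The plan is to begin by reading off from Table~\ref{tab:FPs00-11} exactly which Boolean network $[5,00,11]$ is: it is the ``swap'' network $f_1(x) = x_2$, $f_2(x) = x_1$, whose interaction graph is the two-cycle $1 \to 2 \to 1$ with both arcs positive. As a classical BN it has the two fixed points $(0,0)$ and $(1,1)$ together with the limit cycle $01 \leftrightarrows 10$, and none of the shortcut results proved earlier apply (it is neither monotone in the sense of Proposition~\ref{prop:decreasing}, nor acyclic as in Proposition~\ref{prop:dag}, nor disjunctive as in Proposition~\ref{prop:disj_MBN}), so a direct trajectory analysis is needed. The content of the statement is that switching on \emph{any} delay other than $(\alpha,\beta) = (1,1)$ destroys the cycle.

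First I would invoke Remark~\ref{rem:MBN2_2fp_00_11}: the configuration $(0,0)$ is fixed, and every initial configuration with $\rho \geq 1$ and $\gamma \geq 1$ reaches $(\alpha,\beta)$ in one step. Hence it suffices to analyze initial configurations of the form $(\rho,0)$ with $\rho \geq 1$ and $(0,\gamma)$ with $\gamma \geq 1$. Since $[5,00,11]$ is invariant under simultaneously swapping the two vertices and exchanging $\alpha$ with $\beta$, I can restrict to the configurations $(\rho,0)$ with $\rho \geq 1$; the statement for the configurations $(0,\gamma)$ then follows by symmetry.

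Then I would run the short trajectory computations, splitting on the value of $\rho$. Using the MBN update rules (a firing vertex is reset to its full delay, a non-firing active vertex is decremented), one gets $(\rho,0) \to (\rho-1,\beta)$. If $\rho \geq 2$, the configuration $(\rho-1,\beta)$ has both coordinates positive, so it maps to $(\alpha,\beta)$, and we are done in two steps. If $\rho = 1$, the trajectory is $(1,0) \to (0,\beta) \to (\alpha,\beta-1)$; when $\beta \geq 2$ this is again a configuration with both coordinates positive and one further step reaches $(\alpha,\beta)$, while when $\beta = 1$ one lands on $(\alpha,0)$, i.e.\ a configuration of the same type $(\rho',0)$ with $\rho' = \alpha$. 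So if additionally $\alpha \geq 2$ the $\rho \geq 2$ case applies and convergence follows, whereas if $\alpha = \beta = 1$ the trajectory closes into the cycle $(1,0) \leftrightarrows (0,1)$ — and in that case the delay vector is $(1,1)$, so the MBN is literally BN $[5,00,11]$.

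The only mildly delicate point is the bookkeeping in the $\rho = 1$, $\beta = 1$ branch, where the trajectory returns to a state of the same type and one must observe that the recursion terminates after at most one additional pass unless $\alpha$ also equals $1$; everything else is a direct finite verification. Collecting the cases, every MBN built on $[5,00,11]$ with $(\alpha,\beta) \neq (1,1)$ sends all configurations to $(0,0)$ or to $(\alpha,\beta)$, hence has no limit cycle, while the MBN with $(\alpha,\beta) = (1,1)$ is exactly BN $[5,00,11]$, which carries the cycle $01 \leftrightarrows 10$; this establishes the proposition.
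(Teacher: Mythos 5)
Your proof is correct and follows essentially the same route as the paper: after invoking Remark~\ref{rem:MBN2_2fp_00_11}, both arguments reduce to short explicit trajectory computations on configurations of the form $(\rho,0)$ and $(0,\gamma)$, showing that any delay vector other than $(1,1)$ forces convergence to $(0,0)$ or $(\alpha,\beta)$ within a few steps. The only cosmetic difference is your use of the vertex-swap symmetry of $[5,00,11]$ (exchanging $\alpha$ and $\beta$) to halve the case analysis, where the paper simply writes out the four cases directly.
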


\begin{proof}
	When $\alpha = \beta = 1$, the dynamics of this MBN is trivially the same as that 
	of BN $[5,00,11]$. The network admit three attractors, fixed points $(0,0)$ and 
	$(1,1)$ and limit cycle $(0,1) \leftrightarrows (1,0)$. Now, let us consider MBNs 
	such that $\alpha > 1$ or $\beta > 1$. We have: if $\alpha > 1$ and $\rho \geq 2$ 
	then $(\rho, 0) \to (\rho-1, \beta) \to (\alpha, \beta)$\loopr{}; if $\rho = 1$ 
	and $\beta > 1$ then $(1, 0) \to (0, \beta) \to (\alpha, \beta-1) \to (\alpha, 
	\beta)$\loopr{}; if $\beta > 1$ and $\gamma \geq 2$ then $(0,\gamma) \to (\alpha, 
	\gamma-1) \to (\alpha, \beta)$\loopr{}; if $\gamma = 1$ and $\alpha > 1$ then 
	$(0, 1) \to (\alpha, 0) \to (\alpha-1, \beta) \to (\alpha, \beta)$\loopr{}.
	Hence, only the MBN that is BN $[5,00,11]$ itself can admit a limit cycle of 
	length $2$. All the others converge towards $(0,0)$ and $(\alpha,\beta)$ in at 
	most $3$ time steps.
\end{proof}

\subsubsection{MBNs based on $\F[01, 10]$}
\label{sec:MBN2_2fp_01_10}

As $\F[00,11]$, the class of BNs $\F[01, 10]$ does not admit a symmetric class and 
what follows gives a characterization of the dynamics of MBNs built on it. 

\begin{table}[b!]
	{\scriptsize \centerline{
		$\begin{array}{|c||c|c|c|c|c|c|c|c|c|}
			\hline
			x & [1,01,10] & [2,01,10] & [3,01,10] & [4,01,10] & [5,01,10] & 
				[6,01,10] & [7,01,10] & [8,01,10] & [9,01,10]\\
			\hline\hline
			00 & 01 & 01 & 01 & 10 & 10 & 10 & 11 & 11 & 11\\
			01 & 01 & 01 & 01 & 01 & 01 & 01 & 01 & 01 & 01\\
			10 & 10 & 10 & 10 & 10 & 10 & 10 & 10 & 10 & 10\\
			11 & 00 & 01 & 10 & 00 & 01 & 10 & 00 & 01 & 10\\
			\hline
		\end{array}$
	}}
	\caption{Truth tables of all the $9$ BNs that admit fixed points $(0,1)$ and 
		$(1,0)$.}
	\label{tab:FPs01-10}
\end{table}

\begin{remark}
	\label{rem:MBN2_2fp_01_10}
	For every network of $\F[01,10]$, since $(0,1)$ and $(1,0)$ are fixed points, 
	whatever $\alpha$ and $\beta$ are, we have: if $\rho \geq 1$ and $\gamma = 0$ 
	then $(\rho, 0) \to (\alpha, 0)$\loopr{} that is a fixed point, and conversely, 
	if $\rho = 0$ and $\gamma \geq 1$ then $(0, \gamma) \to (0, \beta)$\loopr{} that 
	is a fixed point.
\end{remark}
Therefore we analyze the dynamical behavior of all the initial configurations of the 
form $(0, 0)$ and $(\rho, \gamma)$ where $\rho, \gamma \neq 0$. From 
Table~\ref{tab:FPs01-10}, a basic enumeration gives that $\FP[01,10] = \F[01,10] 
\setminus \{[7,01,10]\}$, and $\LC[01,10] = \{[7,01,10]\}$. 
Proposition~\ref{prop:MBN2_2fp_01_10_FP} below shows that all the MBNs built on BNs 
of $\FP[01,10]$ converge towards fixed points $(0,\beta)$ and $(\alpha,0)$ that are 
the only attractors. 

\begin{proposition}
	\label{prop:MBN2_2fp_01_10_FP}
	For any delay vector $dt$, every MBN built on a BN that belongs to $\FP[01,10]$ 
	admits only two attractors, fixed points $(0,\beta)$ and $(\alpha,0)$.
\end{proposition}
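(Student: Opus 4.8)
The plan is to run the same kind of finite case analysis as in the proofs of Propositions~\ref{prop:FP00fp}--\ref{prop:LC14-15_17-24_18-23_26-27}, using Remark~\ref{rem:MBN2_2fp_01_10} to reduce the set of initial configurations that really need to be inspected. Recall that $\FP[01,10]=\F[01,10]\setminus\{[7,01,10]\}$, so only eight networks are at stake, and that for each of them $(0,1)$ and $(1,0)$ are Boolean fixed points, so that $(0,\beta)$ and $(\alpha,0)$ are genuine fixed points of every associated MBN (reached, for instance, from $(0,\gamma)$, respectively $(\rho,0)$, by Remark~\ref{rem:MBN2_2fp_01_10}). By that same remark, any initial configuration with exactly one coordinate equal to $0$ reaches one of these two fixed points in a single step, so what remains is to show that the trajectories issued from $(0,0)$ and from the configurations $(\rho,\gamma)$ with $\rho,\gamma\geq 1$ also converge to $(0,\beta)$ or $(\alpha,0)$, i.e.\ that no limit cycle appears.

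Two of the eight networks are disposed of at once: the interaction graphs of $[3,01,10]$ (where $f_1=x_1$ and $f_2=\neg x_1$) and $[5,01,10]$ (where $f_1=\neg x_2$ and $f_2=x_2$) contain no cycle other than a positive loop, so Proposition~\ref{prop:dag} already gives that every MBN built on them has only fixed points, necessarily $(0,\beta)$ and $(\alpha,0)$. For the remaining six networks Proposition~\ref{prop:dag} does not apply, since each of them carries a length-$2$ cycle between the two vertices consisting of two negative arcs, i.e.\ a positive feedback loop. I would treat these by writing the trajectories explicitly, grouping them by the mechanism that kills the potential oscillation. In $[6,01,10]$ and $[9,01,10]$ vertex $1$ is self-maintaining once firing: from $(\rho,\gamma)$ with $\rho\geq 1$ the coordinate $x_1$ is set to $\alpha$ and stays there, which forces $x_2$ to decay to $0$, so the trajectory ends at $(\alpha,0)$ (the configuration $(0,0)$ first passes through $(\alpha,\beta)$ for $[9,01,10]$, after which the same argument applies). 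Symmetrically, in $[2,01,10]$ and $[8,01,10]$ vertex $2$ stays firing at $\beta$ while $x_1$ decays, so the trajectory converges to $(0,\beta)$. Finally, in $[1,01,10]$ and $[4,01,10]$ both local functions evaluate to $0$ on the Boolean configuration $(1,1)$, so from $(\rho,\gamma)$ with $\rho,\gamma\geq 1$ the two coordinates decrease in lockstep; after $\min(\rho,\gamma)$ steps one of them reaches $0$, and then, exactly as above, the surviving coordinate is driven to its extremal value, yielding $(0,\beta)$ or $(\alpha,0)$ according to the comparison between $\rho$ and $\gamma$, while $(0,0)$ is sent to $(0,\beta)$ by $[1,01,10]$ and to $(\alpha,0)$ by $[4,01,10]$.

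The genuinely delicate point is precisely these six networks with the positive $2$-cycle: such a feedback loop could a priori sustain an oscillation, so one must check by hand that the firing memory prevents it. The crucial observation, a local instance of the freezing effect already mentioned after Proposition~\ref{prop:disj_MBN}, is that in each of these networks one of the two coordinates becomes Boolean-constant after only a bounded number of steps (of order at most $\min(\alpha,\beta)$), after which the other coordinate is completely determined and the configuration freezes at $(0,\beta)$ or $(\alpha,0)$. Concretely I would write out one trajectory in full, say for $[1,01,10]$ (distinguishing the subcases $\rho<\gamma$, $\rho>\gamma$ and $\rho=\gamma$), and then indicate that the five remaining networks are handled by the same bookkeeping, recording along the way the convergence-time bounds (of the form $\alpha+\beta+O(1)$) that the computation yields, as was done throughout Section~\ref{sec:MBN2}.
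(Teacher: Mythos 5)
Your proposal is correct and takes essentially the same route as the paper: reduce the initial configurations via Remark~\ref{rem:MBN2_2fp_01_10}, dispatch $[3,01,10]$ and $[5,01,10]$ through Proposition~\ref{prop:dag}, and settle the remaining six networks by explicit trajectory computations (lockstep decay for $[1],[4]$; one self-maintained coordinate forcing the other to decay for the rest). The only difference is cosmetic: the paper groups these six by the symmetry pairs $[1]\leftrightarrow[4]$, $[2]\leftrightarrow[6]$, $[8]\leftrightarrow[9]$, whereas you group them by convergence mechanism, $\{[6],[9]\}$ towards $(\alpha,0)$ and $\{[2],[8]\}$ towards $(0,\beta)$, which yields the same case analysis and the same conclusions.
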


\begin{proof}
	First of all, let us focus on BN $[3,01,10]$. Its interaction graph does not 
	induce cycles except one positive loop. So, by Proposition~\ref{prop:dag}, any 
	MBN built on it only admit fixed points. More precisely, whatever $\alpha$ and 
	$\beta$ are, configuration $(0,0)$ converges towards $(0, \beta)$ in $1$ time 
	step. Any other configuration such that $\rho, \gamma \geq 1$ admits the 
	following trajectory: $(\rho, \gamma) \to (\alpha, \gamma-1) 
	\xrightarrow{\gamma-1} (\alpha, 0)$\loopr{} that is reached in $\gamma$ time 
	steps. Thus, any MBN built on BN $[3,01,10]$ (resp. on BN $[5,01,10]$ by 
	symmetry) converges towards two attractors, fixed points $(0, \beta)$ and 
	$(\alpha, 0)$, and does so in at most $\beta$ (resp. $\alpha$) time 
	steps.\smallskip

	Now, consider BN $[1,01,10]$. If $\rho > \gamma$ then the trajectory is $(\rho, 
	\gamma) \xrightarrow{\gamma} (\rho - \gamma, 0) \to (\alpha, 0)$\loopr{} that is 
	reached in $\gamma+1$ time steps. Conversely, if $\rho \leq \gamma$ then the 
	trajectory is $(\rho, \gamma) \xrightarrow{\rho} (0, \gamma - \rho) \to 
	(0, \beta)$ that is reached in $\rho+1$ time steps. Thus, any MBN built on BN 
	$[1,01,10]$ (resp. on BN $[4,01,10]$ by symmetry) converges towards two 
	attractors, fixed points $(0,\beta)$ and $(\alpha, 0)$, and does so in at most 
	$\max(\alpha, \beta) + 1$ time steps.\smallskip
		
	Concerning BN $[2,01,10]$, for all delay vector $dt$, $(0,0)$ converges towards 
	$(0,\beta)$ in one time step. Any other initial configuration $(\rho, \gamma)$ 
	such that $\rho, \gamma \geq 1$ admits the following trajectory: $(\rho, \gamma) 
	\to (\rho-1, \beta) \xrightarrow{\rho-1} (0,\beta)$\loopr{} that is reached in at 
	most $\rho$ time steps. Thus, any MBN built on BN $[2,01,10]$ (resp. on BN 
	$[6,01,10]$ by symmetry) converges towards two attractors, fixed points $(0,
	\beta)$ and $(\alpha, 0)$, and does so in at most $\alpha$ (resp. $\beta$) time 
	steps.\smallskip
		
	Lastly, concerning BN $[8,01,10]$, for all delay vector $dt$, $(0,0)$ admits the 
	following trajectory: $(0,0) \to (\alpha, \beta) \xrightarrow{\alpha} (0, 
	\beta)$\loopr{} that is reached in $\alpha+1$ time steps. Any other initial 
	configuration $(\rho, \gamma)$ such that $\rho, \gamma \geq 1$ admits the 
	following trajectory: $(\rho, \gamma) \to (\rho-1, \beta) \xrightarrow{\rho-1} 
	(0,\beta)$\loopr{} that is reached in $\rho$ time steps. Thus, any MBN built on 
	BN $[8,01,10]$ (resp. on BN $[9,01,10]$ by symmetry) converges towards two 
	attractors, fixed points $(0,\beta)$ and $(\alpha, 0)$, and does so in at most 
	$\alpha+1$ (resp. $\beta+1$) time steps.
\end{proof}

Now, Proposition~\ref{prop:MBN2_2fp_01_10_LC} shows that there exist specific 
conditions under which MBNs built on BN $[7,01,10]$ of $\LC[01,10]$ evolve towards a 
limit cycle.

\begin{proposition}
	\label{prop:MBN2_2fp_01_10_LC}
	Every MBN built on BN $[7,01,10]$ with $\alpha = \beta$ admits a limit cycle. Any 
	other admits only $(0,\beta)$ and $(\alpha,0)$ as its unique attractors.
\end{proposition}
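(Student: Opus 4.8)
The plan is to read off from Table~\ref{tab:FPs01-10} that BN $[7,01,10]$ has local transition functions $f_1(x) = \neg x_2$ and $f_2(x) = \neg x_1$, and then, for an arbitrary delay vector $dt = (\alpha,\beta)$ (with $\alpha,\beta \geq 1$), to trace the trajectory of every initial configuration $(\rho,\gamma)$. By Remark~\ref{rem:MBN2_2fp_01_10}, the configurations with exactly one null coordinate already converge to $(0,\beta)$ or $(\alpha,0)$, so only $(0,0)$ and the configurations with $\rho \geq 1$ and $\gamma \geq 1$ remain to be handled.

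First I would note that as long as $\rho \geq 1$ and $\gamma \geq 1$ both Boolean states equal $1$, hence $f_1 = f_2 = 0$, so both delays merely decrement: $(\rho,\gamma) \to (\rho-1,\gamma-1)$, and this regime holds until one coordinate vanishes. Iterating, if $\rho < \gamma$ one reaches $(0,\gamma-\rho)$ with $\gamma-\rho \geq 1$, which leads to the fixed point $(0,\beta)$ by the Remark; if $\rho > \gamma$ one reaches $(\rho-\gamma,0)$, which leads to $(\alpha,0)$; and if $\rho = \gamma$ one reaches $(0,0)$. For $(0,0)$ itself, both $f_1$ and $f_2$ evaluate to $1$, so both delays reset and $(0,0) \to (\alpha,\beta)$. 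Thus every orbit either has already reached a fixed point or passes through $(\alpha,\beta)$.

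Then I would conclude by analysing the orbit of $(\alpha,\beta)$ with the same decrement rule: if $\alpha < \beta$ it reaches $(0,\beta-\alpha) \to (0,\beta)$; if $\alpha > \beta$ it reaches $(\alpha-\beta,0) \to (\alpha,0)$; and if $\alpha = \beta$ it reaches $(0,0)$, closing the cycle $(0,0) \to (\alpha,\alpha) \to (\alpha-1,\alpha-1) \to \cdots \to (1,1) \to (0,0)$, a limit cycle of length $\alpha+1$. Combining the cases: when $\alpha \neq \beta$ every configuration converges to $(0,\beta)$ or $(\alpha,0)$, whereas when $\alpha = \beta$ the diagonal configurations together with $(0,0)$ form an additional limit cycle, which is exactly the statement.

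There is no real obstacle here; the argument is a short finite case analysis. The only points needing care are the abuse of notation at the boundary of the firing regime — verifying that a step from a firing state with residual delay $1$ lands at Boolean value $0$ (state ``$0$'') rather than staying ``$1$'' — and checking that the decrement regime $(\rho,\gamma)\to(\rho-1,\gamma-1)$ is genuinely valid throughout, which it is because $f_1 = f_2 = 0$ as long as both states are firing; this makes the three sub-cases $\rho<\gamma$, $\rho>\gamma$, $\rho=\gamma$ exhaustive and pins down the exit configuration.
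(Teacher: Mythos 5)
Your proposal is correct and follows essentially the same route as the paper's proof: an explicit trajectory analysis based on the simultaneous decrement of both delays while $\rho,\gamma\geq 1$, split into the cases $\rho<\gamma$, $\rho>\gamma$, $\rho=\gamma$ and $\alpha<\beta$, $\alpha>\beta$, $\alpha=\beta$, with the off-diagonal configurations handled by Remark~\ref{rem:MBN2_2fp_01_10}. The only divergence is cosmetic: you funnel the remaining orbits through $(0,0)$ and $(\alpha,\beta)$ rather than casing first on $\rho=\gamma$, and your count of the cycle length as $\alpha+1$ (including $(0,0)$) is in fact the accurate one, whereas the paper states length $\alpha$; since the proposition claims only existence of the limit cycle, this does not affect correctness.
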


\begin{proof}
	Consider first that $\alpha = \beta$ and that $\rho = \gamma$. Then we have: 
	$(\rho, \rho) \xrightarrow{\rho} (0,0) \to (\alpha, \alpha) \xrightarrow{\alpha} 
	(0,0)$, which emphasizes a limit cycle of length $\alpha$. Now, consider that 
	$\alpha \neq \beta$ and that $\rho = \gamma$. We have the following trajectory: 	
	if $\alpha > \beta$ then $(\rho,\rho) \xrightarrow{\rho} (0,0) \to (\alpha, 
	\beta) \xrightarrow{\beta} (\alpha - \beta, 0) \to (\alpha, 0)$\loopr{} that is 
	reached in $\rho + \beta + 2$ time steps; if $\alpha < \beta$ then $(\rho,\rho) 
	\xrightarrow{\rho} (0,0) \to (\alpha, \beta) \xrightarrow{\alpha} (0, \beta - 
	\alpha) \to (0, \beta)$\loopr{} that is reached in $\rho + \alpha + 2$ time 
	steps. The last case to consider is when $\rho \neq \gamma$ whatever $\alpha$ and 
	$\beta$, for which we have: if $\rho > \gamma$ then $(\rho, \gamma) 
	\xrightarrow{\gamma} (\rho - \gamma, 0) \to (\alpha, 0)$\loopr{} that is reached 
	in $\gamma+1$ time steps; if $\rho < \gamma$ then $(\rho, \gamma) 
	\xrightarrow{\rho} (0, \gamma - \rho) \to (0, \beta)$\loopr{} that is reached in 
	$\rho+1$ time steps. Hence, when built on BN $[7,01,10]$, only MBNs such that 
	$\alpha = \beta$ admit a limit cycle of length $\alpha$ that can only be reached 
	by configurations such that $\rho = \gamma$. All the others converge only towards 
	$(0,0)$ and $(\alpha, \beta)$ in at most $\alpha + \max(\alpha,\beta) + 2$ time 
	steps.
\end{proof}

\subsubsection{MBNs based on $\F[01, 11]$ and $\F[10, 11]$}
\label{sec:MBN2_2fp_01_11}

\begin{table}[t!]
	{\scriptsize \centerline{
		$\begin{array}{|c||c|c|c|c|c|c|c|c|c|}
			\hline
			x & [1,01,11] & [2,01,11] & [3,01,11] & [4,01,11] & [5,01,11] & 
				[6,01,11] & [7,01,11] & [8,01,11] & [9,01,11]\\
			\hline\hline
			00 & 01 & 01 & 01 & 10 & 10 & 10 & 11 & 11 & 11\\
			01 & 01 & 01 & 01 & 01 & 01 & 01 & 01 & 01 & 01\\
			10 & 00 & 01 & 11 & 00 & 01 & 11 & 00 & 01 & 11\\
			11 & 11 & 11 & 11 & 11 & 11 & 11 & 11 & 11 & 11\\
			\hline
		\end{array}$}}\smallskip
		
	{\scriptsize \hspace*{1pt}\centerline{
		$\begin{array}{|c||c|c|c|c|c|c|c|c|c|}
			\hline
			x & [1,10,11] & [2,10,11] & [3,10,11] & [4,10,11] & [5,10,11] & 
				[6,10,11] & [7,10,11] & [8,10,11] & [9,10,11]\\
			\hline\hline
			00 & 01 & 01 & 01 & 10 & 10 & 10 & 11 & 11 & 11\\
			01 & 00 & 10 & 11 & 00 & 10 & 11 & 00 & 10 & 11\\
			10 & 10 & 10 & 10 & 10 & 10 & 10 & 10 & 10 & 10\\
			11 & 11 & 11 & 11 & 11 & 11 & 11 & 11 & 11 & 11\\
			\hline
		\end{array}$}}
	\caption{Truth tables of all the $9$ BNs that admit (up) fixed points $(0,1)$ and 
		$(1,1)$ and (down) fixed points $(1,0)$ and $(1,1)$.}
	\label{tab:FPs01-11}
\end{table}

Concerning the classes of BNs $\F[01, 11]$ and $\F[10, 11]$, notice first that they 
are symmetric. Thus, the results obtained for $\F[01, 11]$ have their symmetric that 
hold for $\F[10, 11]$. So, let us focus only on $\F[01, 11]$.

\begin{remark}
	\label{rem:MBN2_2fp_01_11}
	For every network of $\F[01, 11]$, since $(0,1)$ and $(1,1)$ are fixed points, 
	whatever $\alpha$ and $\beta$ are, configurations $(0,\gamma)$, with $\gamma \geq 
	1$, converge towards $(0,\beta)$ in one time step, and any initial configuration 
	such that $\rho \geq 1$ and $\gamma \geq 1$ converges towards $(\alpha,\beta)$ in 
	one time step. 
\end{remark}
Therefore we analyze the dynamical behavior of all the initial configurations of the 
form $(\rho, 0)$, where $\rho \geq 0$. From Table~\ref{tab:FPs01-11} (up), a basic 
enumeration gives that $\FP[00, 01] = \F[00,01] \setminus \{[4,00,01]\}$, and 
$\LC[00, 01] = \{[4,00,01]\}$. Proposition~\ref{prop:MBN2_2fp_01_11_FP} below shows 
that all the MBNs built on BNs of $\FP[01,11]$ converge towards fixed points $(0,
\beta)$ and $(\alpha,\beta)$ that are the only attractors. 

\begin{proposition}
	\label{prop:MBN2_2fp_01_11_FP}
	For any delay vector $dt$, every MBN built on a BN that belongs to $\FP[01,11]$ 
	admits only two attractors, fixed points $(0,\beta)$ and $(\alpha,\beta)$.
\end{proposition}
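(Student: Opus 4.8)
The plan is to exploit Remark~\ref{rem:MBN2_2fp_01_11} to cut the work down to a single family of initial configurations, and then to walk through the eight BNs of $\FP[01,11]$ (namely $[1,01,11],\dots,[3,01,11],[5,01,11],\dots,[9,01,11]$) in small groups. First I would record that, since rows $01$ and $11$ of every truth table in Table~\ref{tab:FPs01-11}~(up) are fixed, the configurations $(0,\beta)$ and $(\alpha,\beta)$ are fixed points of every MBN built on any BN of $\F[01,11]$, for any $dt$: from $(0,\beta)$ the Boolean state is $(0,1)$, so $f_1=0$ keeps $\Delta_1=0$ while $f_2=1$ resets $\Delta_2$ to $\beta$; from $(\alpha,\beta)$ the Boolean state is $(1,1)$, so both counters are reset. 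By Remark~\ref{rem:MBN2_2fp_01_11}, every configuration $(0,\gamma)$ with $\gamma\geq 1$ reaches $(0,\beta)$ in one step, and every configuration $(\rho,\gamma)$ with $\rho,\gamma\geq 1$ reaches $(\alpha,\beta)$ in one step. Hence the only configurations whose fate is not yet settled are those of the form $(\rho,0)$ with $\rho\geq 0$.

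Next I would analyse the orbit of $(\rho,0)$ for each BN, organising the case split according to the image of the Boolean state $(1,0)$, i.e.\ row $10$ of the table. If $f(1,0)=(1,1)$ (networks $[3,01,11]$, $[6,01,11]$, $[9,01,11]$), then from $(\rho,0)$ with $\rho\geq 1$ a single step gives $(\alpha,\beta)$; moreover $[3,01,11]$ has an interaction graph that is acyclic up to a positive loop, so Proposition~\ref{prop:dag} already applies to it. If $f(1,0)=(0,1)$ (networks $[2,01,11]$, $[5,01,11]$, $[8,01,11]$), then $(\rho,0)\to(\rho-1,\beta)$, which is $(0,\beta)$ when $\rho=1$ and, when $\rho\geq 2$, a configuration covered by Remark~\ref{rem:MBN2_2fp_01_11} that moves to $(\alpha,\beta)$ on the following step. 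If $f(1,0)=(0,0)$ (networks $[1,01,11]$, $[7,01,11]$), then $(\rho,0)\xrightarrow{\rho}(0,0)$, the counter simply decaying. In every remaining situation the orbit has thus been reduced to $(0,0)$, whose successor is read off row $00$: it is $(0,\beta)$ for $[1,01,11]$, $[2,01,11]$, $[3,01,11]$; it is $(\alpha,\beta)$ for $[7,01,11]$, $[8,01,11]$, $[9,01,11]$; and it is $(\alpha,0)$ for $[5,01,11]$, $[6,01,11]$, which is again of the form $(\rho,0)$ with $\rho=\alpha\geq 1$ and has therefore already been treated. Assembling these chains, every configuration of every MBN in question reaches $(0,\beta)$ or $(\alpha,\beta)$ after finitely many steps without revisiting any configuration beforehand, so these two fixed points are the only attractors; the length of each chain yields the associated convergence bounds.

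The main obstacle I anticipate is not conceptual but organisational: one must make sure the reduction to configurations $(\rho,0)$ is genuinely exhaustive and that no spurious cycle hides among these states — this is exactly why $[1,01,11]$ and $[7,01,11]$ require the explicit decay argument $(\rho,0)\xrightarrow{\rho}(0,0)$ rather than an appeal to an earlier proposition, since their interaction graphs contain a negative $2$-cycle (so Proposition~\ref{prop:dag} does not apply) and they are neither increasing nor decreasing (so Proposition~\ref{prop:decreasing} does not apply either). Some care is also needed with the boundary cases $\rho=1$ and $\alpha=1$, in which an intermediate configuration collapses directly onto $(0,\beta)$ instead of passing through a configuration with both counters positive.
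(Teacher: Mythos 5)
Your proof is correct and is essentially the paper's argument: both use Remark~\ref{rem:MBN2_2fp_01_11} to reduce the analysis to configurations $(\rho,0)$ with $0 \leq \rho \leq \alpha$ and then settle every case by short explicit trajectories ending in the fixed points $(0,\beta)$ or $(\alpha,\beta)$. The only difference is organisational: you group the eight BNs of $\FP[01,11]$ by the images of $(1,0)$ and $(0,0)$ and argue uniformly (handling the $\rho=1$ and $\alpha=1$ boundary cases explicitly), whereas the paper proceeds network by network, citing Proposition~\ref{prop:decreasing} for $[3,01,11]$, $[6,01,11]$, $[9,01,11]$ and Proposition~\ref{prop:dag} for $[2,01,11]$ before writing out the same trajectories.
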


\begin{proof}
	First of all, consider BNs $[3,01,11]$, $[6,01,11]$ and $[9,01,11]$ that are 
	increasing according to their definition. So, by 
	Proposition~\ref{prop:decreasing}, all of the MBNs built on them admit only fixed 
	points.
		
	Concerning MBNs built on $[3,01,11]$, whatever $\alpha$ and $\beta$ are, 
	configuration $(0,0)$ converges towards $(0, \beta)$ in $1$ time step. Any other 
	configuration such that $(\rho, 0)$, with $\rho \geq 1$, admits the following 
	trajectory: $(\rho, 0) \to (\alpha, \beta)$\loopr{} that is reached in $1$ time 
	step. Thus, any MBN built on BN $[3,01,11]$ admits only two attractors, fixed 
	points $(0, \beta)$ and $(\alpha, \beta)$, and converge to them in at most $2$ 
	time steps.
	
	Concerning MBNs built on $[6,01,11]$, whatever $\alpha$ and $\beta$ are, 
	configuration $(0,0)$ becomes $(\alpha, 0)$ in one time step. Moreover, any 
	configuration $(\rho, 0)$, with $1 \leq \rho \leq \alpha$, converges towards 
	$(\alpha, \beta)$ in one time step. Thus, any MBN built on BN $[6,01,11]$ admits 
	only two attractors, fixed points $(0, \beta)$ and $(\alpha, \beta)$, and 
	converge to them in at most $2$ time steps.\smallskip
	
	Concerning MBNs built on $[9,01,11]$, whatever $\alpha$ and $\beta$ are, 
	any configuration such that $(\rho, 0)$, with $\rho \geq 0$ converges towards 
	$(\alpha, \beta)$ in $1$ time step. So, such networks admit only two attractors, 
	fixed points $(0, \beta)$ and $(\alpha, \beta)$, and converge to them in at most 
	$1$ time step.\smallskip
	
	Now, let us focus on BN $[2,01,11]$. Its interaction graph does not induce cycles 
	except one positive loop. So, by Proposition~\ref{prop:dag}, any MBN built on 
	such a BN only admits fixed points. More precisely, whatever $\alpha$ and $\beta$ 
	are, configurations $(0,0)$ and $(1,0)$ converge towards $(0, \beta)$ in $1$ time 
	step. Any other configuration such that $(\rho, 0)$, with $\rho \geq 2$, admits 
	the following trajectory: $(\rho, 0) \to (\rho-1, \beta) \to (\alpha, 
	\beta)$\loopr{} that is reached in $2$ time steps. Thus, any MBN built on BN 
	$[2,01,11]$ admits only two attractors, fixed points $(0, \beta)$ and $(\alpha, 
	\beta)$, and converge to them in at most $2$ time steps.\smallskip
		
	Let us focus on BN $[1,01,11]$. Whatever $\alpha$ and $\beta$ are, 
	any configuration $(\rho,0)$, with $\rho \geq 0$, admits the following 
	trajectory: $(\rho, 0) \xrightarrow{\rho} (0,0) \to (0, \beta)$\loopr{} that is 
	reached in $\rho+1$ time steps. Thus, any MBN built on BN $[1,01,11]$ admits only 
	two attractors, fixed points $(0, \beta)$ and $(\alpha, \beta)$, and converge to 
	them in at most $\alpha+1$ time steps.\smallskip
	
	Consider BN $[5,01,11]$. Whatever $\alpha$ and $\beta$ are, configuration 
	$(1,0)$ converges towards $(0,\beta)$ in $1$ time step. Configuration $(0,0)$ 
	follows the trajectory $(0,0) \to (\alpha, 0) \to (\alpha-1, \beta) \to 
	(\alpha, \beta)$\loopr{} that is reached in $3$ time steps. Any other 
	configuration such that $(\rho, 0)$, with $\rho \geq 2$, admits the following 
	trajectory: $(\rho, 0) \to (\rho-1, \beta) \to (\alpha,\beta)$\loopr{} thay is 
	reached in $2$ time steps. Thus, any MBN built on BN $[5,01,11]$ admits only two 
	attractors, fixed points $(0, \beta)$ and $(\alpha, \beta)$, and converge to them 
	in at most $3$ time steps.\smallskip
	
	Consider BN $[7,01,11]$. Whatever $\alpha$ and $\beta$ are, any configuration 
	$(\rho, 0)$, with $\rho \geq 0$, admits the following trajectory: $(\rho, 0) 
	\xrightarrow{\rho} (0,0) \to (\alpha, \beta)$\loopr{} that is reached in $\rho+1$ 
	time steps. Thus, any MBN built on BN $[7,01,11]$ admits only two attractors, 
	fixed points $(0, \beta)$ and $(\alpha, \beta)$, and converge to them 
	in at most $\alpha+1$ time steps.\smallskip
	
	Let us finally focus on BN $[8,01,11]$. Whatever $\alpha$ and $\beta$ are, 
	configuration $(0,0)$ (resp. $(1,0)$) converges towards $(\alpha, \beta)$ (resp. 
	$(0,\beta)$) in $1$ time step. Any other configuration such that $(\rho, 0)$, 
	with  $\rho \geq 2$, admits the following trajectory: $(\rho, 0) \to 
	(\rho-1, \beta) \to (\alpha, \beta)$\loopr{} that is reached in $2$ time steps. 
	Thus, any MBN built on BN $[8,01,11]$ admits only two attractors, fixed points 
	$(0, \beta)$ and $(\alpha, \beta)$, and converge to them in at most $2$ time 
	steps.
\end{proof}

Now, Proposition~\ref{prop:MBN2_2fp_01_11_LC} shows that there exist specific 
conditions under which MBNs built on BN $[4,01,11]$ of $\LC[01,11]$ evolve towards a 
limit cycle.

\begin{proposition}
	\label{prop:MBN2_2fp_01_11_LC}
	Every MBN built on BN $[4,01,10]$ admits a limit cycle that is reached by any 
	configuration $(\rho, 0)$, with $0 \leq \rho \leq \alpha$.
\end{proposition}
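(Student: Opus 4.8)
The plan is to analyze directly the trajectories of the configurations left open after Remark~\ref{rem:MBN2_2fp_01_11}, namely those of the form $(\rho,0)$ with $0\le\rho\le\alpha$ (configurations with $\gamma\ge1$ already converge to a fixed point). First I would read off from Table~\ref{tab:FPs01-11} the Boolean behavior of the function defining $[4,01,11]$, the unique element of $\LC[01,11]$: its second component is merely a positive loop, $f_2(x)=x_2$, so whenever $x_2=0$ it stays equal to $0$; its first component evaluates to $1$ precisely on the Boolean inputs $(0,0)$ and $(1,1)$, hence, once $x_2=0$, we have $f_1=1$ if $x_1=0$ and $f_1=0$ if $x_1\ge1$. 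Recalling that in the MBN update a local value $1$ resets the corresponding coordinate to its delay while a local value $0$ decreases it by one (floored at $0$), this is all the information needed.

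Then I would run the bookkeeping along the line $x_2=0$. From $(0,0)$ one step gives $(\alpha,0)$, and from $(k,0)$ with $1\le k\le\alpha$ one step gives $(k-1,0)$. Chaining these transitions yields the orbit
\[
  (0,0)\to(\alpha,0)\to(\alpha-1,0)\to\cdots\to(1,0)\to(0,0)\text{,}
\]
a limit cycle of length $\alpha+1$ on which \emph{every} configuration $(\rho,0)$ with $0\le\rho\le\alpha$ already lies; for $\rho\ge1$ the configuration simply walks down the $x_1$-axis back to $(0,0)$, with no proper transient. A degenerate-case check confirms consistency: for $\alpha=1$ the orbit collapses to the $2$-cycle $(0,0)\leftrightarrows(1,0)$, which is exactly the limit cycle of the plain BN $[4,01,11]$, as expected since $[4,01,11]\in\LC[01,11]$.

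There is essentially no hard step here: the argument is a finite deterministic walk. The only points demanding care are the abuse of notation identifying an integer state with its residual memory, and the off-by-one count of the cycle length — it comprises the configuration $(0,0)$ together with the $\alpha$ decaying configurations $(1,0),\dots,(\alpha,0)$, hence has length $\alpha+1$ and not $\alpha$. Combined with Remark~\ref{rem:MBN2_2fp_01_11} (and with Proposition~\ref{prop:MBN2_2fp_01_11_FP} for the surrounding class), this completes the description of the attractor landscape: whatever $dt$ is, the MBN built on $[4,01,11]$ has exactly the fixed points $(0,\beta)$ and $(\alpha,\beta)$ and this one limit cycle of length $\alpha+1$, the latter being reached by (indeed consisting of) the configurations $(\rho,0)$, $0\le\rho\le\alpha$.
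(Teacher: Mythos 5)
Your proof is correct and follows essentially the same route as the paper's: a direct walk along the slice $x_2=0$, giving $(0,0)\to(\alpha,0)\to(\alpha-1,0)\to\cdots\to(1,0)\to(0,0)$, so that every configuration $(\rho,0)$, $0\le\rho\le\alpha$, already lies on the limit cycle, while Remark~\ref{rem:MBN2_2fp_01_11} handles all other initial configurations. Your period count $\alpha+1$ is in fact the accurate one: the paper's proof says ``length $\alpha$'' while listing $\alpha+1$ configurations on the cycle, an off-by-one you rightly flag (consistent with the Boolean case $\alpha=1$, where the cycle $(0,0)\leftrightarrows(1,0)$ has length $2$).
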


\begin{proof}
	First, let us consider any configuration $(\rho, 0)$, with $\rho \geq 0$. Such a 
	configuration follows the trajectory $(\rho,0) \xrightarrow{\rho} (0,0) \to 
	(\alpha, 0) \to (\alpha-1, 0) \xrightarrow{\alpha-1} (0,0)$, which emphasizes a 
	limit cycle of length $\alpha$ composed of all the configurations $(\rho, 0)$. 
	Thus, any MBN built on BN $[4,01,10]$ admits three attractors, the two fixed 
	points $(0, \beta)$ and $(\alpha, \beta)$  that are reached in $1$ time step, and 
	a limit cycle of length $\alpha$.
\end{proof}

\subsection{Networks admitting three fixed points}
\label{sec:MBN2_3fp}

In this section, we focus on the MBNs that can be built on the basis of BNs of size 
$2$ that admit three fixed points. First of all, let us notice that there exist $4$ 
distinct classes of such networks, each of which being composed of $3$ networks. As 
what has been presented above, let us introduce the following notations. 

\begin{notation}
	$[k, x, y, z]$, with $k \in \{1, \dots, 3\}$ and $x, y, z \in \B^2$, denotes the 
	network of size $2$ whose local transition functions are represented by $k$ and 
	defined by their truth tables in Tables~\ref{tab:FPs00-01-10_01-10-11} 
	and~\ref{tab:FPs00-01-11_00-10-11}, and that admits $x$, $y$ and $z$ as its fixed 
	points (represented as binary words).
\end{notation}

Let us also denote by $\F[x, y, z]$ the set composed of all BNs that admit $x$, $y$ 
and $z$ as their unique fixed points.

\subsubsection*{MBNs based on $\F[00,01,10]$ and $\F[01,10,11]$}
\label{sec:MBN2_3fp_00_01_10}

Concerning the classes of BNs $\F[00,01,10]$ and $\F[01,10,11]$, notice first that 
they are symmetric. Thus, all the results obtained for $\F[00,01,10]$ have their 
symmetric that hold for $\F[01,10,11]$. So, let us focus only on $\F[00,01,10]$.

\begin{remark}
	For every network of $\F[00,01,10]$, since $(0,0)$, $(0,1)$ and $(1,0)$ are fixed 
	points, whatever $\alpha$ and $\beta$ are, we have: if $\rho = \gamma = 0$ then 
	there is trivially convergence towards fixed point $(0,0)$; if $\rho = 0$ and 
	$\gamma \geq 1$ then $(0, \gamma) \to (0, \beta)$\loopr{} that is a fixed point; 
	if $\rho \geq 1$ and $\gamma = 0$ then $(\rho, 0) \to (\alpha, 0)$\loopr{} that 
	is a fixed point. 
\end{remark}
Therefore, we analyze the dynamical behavior of all the initial configurations of the 
form $(\rho, \gamma)$, where $\rho \geq 1$ and $\gamma \geq 1$.
Proposition~\ref{prop:MBN2_3fp_00_01_10_FP} below shows that all the MBNs built on 
BNs of $\F[00,01,10]$ converge towards fixed points $(0,0)$, $(0,\beta)$ and 
$(\alpha,0)$ that are the only attractors. 

\begin{table}[h!]
	{\scriptsize \begin{center}
		\begin{minipage}{.475\textwidth}
			\centerline{
				$\begin{array}{|c||c|c|c|}
					\hline
					x & [1,00,01,10] & [2,00,01,10] & [3,00,01,10]\\
					\hline\hline
					00 & 00 & 00 & 00\\
					01 & 01 & 01 & 01\\
					10 & 10 & 10 & 10\\
					11 & 00 & 01 & 10\\
					\hline
				\end{array}$
			}
		\end{minipage}\quad
		\begin{minipage}{.475\textwidth}
			\centerline{
				$\begin{array}{|c||c|c|c|}
					\hline
					x & [1,01,10,11] & [2,01,10,11] & [3,01,10,11]\\
					\hline\hline
					00 & 01 & 10 & 11\\
					01 & 01 & 01 & 01\\
					10 & 10 & 10 & 10\\
					11 & 11 & 11 & 11\\
					\hline
				\end{array}$
			}
		\end{minipage}\medskip
	
		\begin{minipage}{.475\textwidth}
			\centerline{(a)}
		\end{minipage}\quad
		\begin{minipage}{.475\textwidth}
			\centerline{(b)}
		\end{minipage}
	\end{center}}
		
	\caption{(a) Truth tables of all the $3$ BNs that admit fixed points $(0,0)$, 
		$(0,1)$ and $(1,0)$; (b) Truth tables of all the $3$ BNs that admit fixed 
		points $(0,1)$, $(1,0)$ and $(1,1)$.}
	\label{tab:FPs00-01-10_01-10-11}
\end{table}

\begin{proposition}
	\label{prop:MBN2_3fp_00_01_10_FP}
	For any delay vector $dt$, every MBN built on a BN that belongs to $\F[00,01,10]$ 
	admits only three attractors, fixed points $(0,0)$, $(0,\beta)$ and 
	$(\alpha, 0)$.
\end{proposition}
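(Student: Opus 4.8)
The plan is to derive the whole statement from Proposition~\ref{prop:decreasing}. Reading off the three truth tables of Table~\ref{tab:FPs00-01-10_01-10-11}.a, each of the three BNs of $\F[00,01,10]$ coincides with the identity map on every input except $x=(1,1)$, which it sends to one of $(0,0)$, $(0,1)$, $(1,0)$. In every case $F(x)\leq x$ holds componentwise (trivially on the identity inputs, and because $F(1,1)\leq(1,1)$ on the remaining one), so each BN of $\F[00,01,10]$ is decreasing. Hence, by Proposition~\ref{prop:decreasing}, every MBN built on such a BN converges towards fixed points only, whatever the delay vector $dt$ is; in particular no limit cycle ever appears.

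It then remains to identify these fixed points. I would note that a configuration $x$ of an MBN with delay vector $dt=(\alpha,\beta)$ is a fixed point if and only if its Boolean projection $\bar x$ (defined by $\bar x_i=1$ iff $x_i>0$) is a fixed point of the underlying BN and $x_i=dt_i\cdot\bar x_i$ for every $i$: indeed, when $x_i>0$ the update forces $f_i(\bar x)=1$ and $x_i=dt_i$ for stability, otherwise the memory strictly decreases, and when $x_i=0$ stability forces $f_i(\bar x)=0$. Since each BN of $\F[00,01,10]$ has exactly the fixed points $(0,0)$, $(0,1)$, $(1,0)$, the fixed points of any MBN built on it are exactly $(0,0)$, $(0,\beta)$ and $(\alpha,0)$, which are precisely the three attractors announced; that $(0,\beta)$ and $(\alpha,0)$ are indeed fixed is also recorded in the Remark preceding the statement.

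I expect no real obstacle here: the only point requiring attention is the memory semantics — a positive state counts as the Boolean value $1$ when evaluating $f_1,f_2$, but relaxes to $0$ in at most $dt_i$ steps once the function outputs $0$ — which is exactly the mechanism through which the decreasing property of the BN lifts to the MBN. Should one wish to avoid invoking Proposition~\ref{prop:decreasing}, the same conclusion follows from a direct trajectory analysis: using the Remark to reduce to initial configurations with $\rho\geq1$ and $\gamma\geq1$, one checks that for $[1,00,01,10]$ both memories decay in lockstep until the trajectory reaches $(0,0)$, $(0,\beta)$ or $(\alpha,0)$ depending on the sign of $\rho-\gamma$, and that for $[2,00,01,10]$ (and symmetrically $[3,00,01,10]$) one coordinate is pinned while the other decays, leading the trajectory to $(0,\beta)$ (resp. $(\alpha,0)$) in a bounded number of steps.
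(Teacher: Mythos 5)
Your proof is correct, but it follows a different route from the paper's. The paper proves Proposition~\ref{prop:MBN2_3fp_00_01_10_FP} by explicit trajectory analysis: for $[1,00,01,10]$ it tracks the lockstep decay of both memories and splits on the sign of $\rho-\gamma$, and for $[2,00,01,10]$ (and $[3,00,01,10]$ by symmetry) it follows the pinned-coordinate trajectory $(\rho,\gamma)\to(\rho-1,\beta)\xrightarrow{\rho-1}(0,\beta)$ — exactly the fallback argument you sketch in your last paragraph. Your main argument instead observes that all three BNs of $\F[00,01,10]$ are decreasing (identity except on $(1,1)$, which is sent below itself), invokes Proposition~\ref{prop:decreasing} to exclude limit cycles for every delay vector, and then pins down the fixed points via a general characterization (a configuration is fixed for the MBN iff its Boolean projection is fixed for the BN and every active memory sits at its full delay value) — a clean little lemma that the paper uses implicitly throughout but never states, and whose verification you give correctly from the update rule. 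This is in the spirit of how the paper handles $\FP_\stab^{[00,01]}$ in Proposition~\ref{prop:MBN2_2fp_00_01_FPstab}, just applied here where the authors chose not to. What your structural route buys is brevity and reusability (it would dispose of any class of decreasing BNs at once); what the paper's explicit trajectories buy is the extra quantitative information stated in its proof — convergence times (at most $\max(\alpha,\beta)+1$, $\alpha$, $\beta$ respectively) and the basins, i.e.\ which initial configurations reach which of the three fixed points — which your argument does not provide, though the proposition as stated does not require it.
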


\begin{proof}
	Consider first BN $[1,00,01,10]$. Whatever $\alpha$ and $\beta$ are, given $\rho 
	\geq 1$ and $\gamma \geq 1$, we have: if $\rho < \gamma$ then $(\rho, \gamma) 
	\xrightarrow{\rho} (0, \gamma-\rho) \to (0, \beta)$\loopr{} that is reached in 
	$\rho+1$ time steps; if $\rho > \gamma$ then $(\rho, \gamma) \xrightarrow{\gamma} 
	(\rho-\gamma, 0) \to (\alpha, 0)$\loopr{} that is reached in $\gamma+1$ time 
	steps; if $\rho = \gamma$ then $(\rho, \rho) \xrightarrow{\rho} (0, 0)$\loopr{} 
	that is reached in $\rho$ time steps. Thus, any MBN built on BN $[1,00,01,10]$ 
	admits three attractors, fixed points $(0,0)$, $(0, \beta)$ and $(\alpha, 0)$  
	that are reached in at most $\max(\alpha, \beta) + 1$ time steps.\smallskip
	
	Concerning BN $[2,00,01,10]$, whatever $\alpha$ and $\beta$ are, any initial 
	configuration such that $\rho \geq 1$ and $\gamma \geq 1$ admits the following 
	trajectory: $(\rho, \gamma) \to (\rho-1, \beta) \xrightarrow{\rho-1} (0, 
	\beta)$\loopr{} that is reached in $\rho$ time steps. Thus, any MBN built on BN 
	$[2,00,01,10]$ (resp. on BN $[3,00,01,10]$ by symmetry) admits three attractors, 
	fixed points $(0,0)$, $(0, \beta)$ and $(\alpha, 0)$, and converges to them in at 
	most $\alpha$ (resp. $\beta$) time steps.
\end{proof}

\subsubsection*{MBNs based on $\F[00,01,11]$ and $\F[00,10,11]$}
\label{sec:MBN2_3fp_00_01_11}

Concerning the classes of BNs $\F[00,01,11]$ and $\F[00,10,11]$, notice first that 
they are symmetric. Thus, all the results obtained for $\F[00,01,11]$ have their 
symmetric that hold for $\F[00,10,11]$. So, let us focus only on $\F[00,01,11]$.

\begin{remark}
	For every network of $\F[00,01,11]$, since $(0,0)$, $(0,1)$ and $(1,1)$ are fixed 
	points, whatever $\alpha$ and $\beta$ are, we have: if $\rho = \gamma = 0$ then 
	there is trivially convergence towards fixed point $(0,0)$; if $\rho = 0$ and 
	$\gamma \geq 1$ then $(0, \gamma) \to (0, \beta)$\loopr{} that is a fixed point; 
	if $\rho \geq 1$ and $\gamma \geq 1$ then $(\rho, \gamma) \to (\alpha, 
	\beta)$\loopr{} that is a fixed point. 
\end{remark}
Therefore, we analyze the dynamical behavior of all the initial configurations of the 
form $(\rho, \gamma)$, where $\rho \geq 1$ and $\gamma = 0$. 
Proposition~\ref{prop:MBN2_3fp_00_01_11_FP} below shows that all the MBNs built on 
BNs of $\F[00,01,11]$ converge towards fixed points $(0,0)$, $(0,\beta)$ and 
$(\alpha,\beta)$ that are the only attractors. 

\begin{proposition}
	\label{prop:MBN2_3fp_00_01_11_FP}
	For any delay vector $dt$, every MBN built on a BN that belongs to $\F[00,01,11]$ 
	admits only three attractors, fixed points $(0,0)$, $(0,\beta)$ and 
	$(\alpha, \beta)$.
\end{proposition}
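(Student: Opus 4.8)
The plan is to exploit the Remark preceding the statement, which already pins down the asymptotics of every initial configuration except those of the form $(\rho,0)$ with $\rho\geq 1$ (the configurations $(0,0)$, $(0,\gamma)$ with $\gamma\geq 1$, and $(\rho,\gamma)$ with $\rho,\gamma\geq 1$ being sent to $(0,0)$, $(0,\beta)$ or $(\alpha,\beta)$). So it suffices to follow the trajectories issued from $(\rho,0)$, $\rho\geq 1$. I would first record that $\F[00,01,11]$ contains exactly three networks: the three prescribed fixed points force $f(0,0)=(0,0)$, $f(0,1)=(0,1)$ and $f(1,1)=(1,1)$, hence the only free entry of the truth table is $f(1,0)$, which must avoid the value $(1,0)$ (otherwise a fourth fixed point appears); thus $f(1,0)\in\{(0,0),(0,1),(1,1)\}$, giving the networks $[1,00,01,11]$, $[2,00,01,11]$ and $[3,00,01,11]$ respectively (the symmetric class $\F[00,10,11]$ following by symmetry).

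Next I would observe that two of the three cases reduce to an earlier result. For $[1,00,01,11]$ one computes $f_1=x_1\land x_2$ and $f_2=x_2$, and for $[3,00,01,11]$ one computes $f_1=x_1$ and $f_2=x_1\lor x_2$; in both interaction graphs the only cycles are positive loops, so by Proposition~\ref{prop:dag} the associated MBNs admit only fixed points, and no limit cycle ever appears. Since, by the Remark together with the trajectory from $(\rho,0)$, every configuration converges to $(0,0)$, $(0,\beta)$ or $(\alpha,\beta)$, these three are exactly the attractors. The explicit trajectories are short: for $[1,00,01,11]$, as $f_1(1,0)=0$ the first coordinate strictly decreases while the second stays $0$, so $(\rho,0)\xrightarrow{\rho}(0,0)$; for $[3,00,01,11]$, as $f(1,0)=(1,1)$ one has $(\rho,0)\to(\alpha,\beta)$ in one step. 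The only case needing a genuine trajectory argument is $[2,00,01,11]$, where $f(1,0)=(0,1)$: then $(\rho,0)\to(\rho-1,\beta)$, and if $\rho=1$ this is the fixed point $(0,\beta)$, whereas if $\rho\geq 2$ the Boolean image of $(\rho-1,\beta)$ is $(1,1)$, so the next step lands on $(\alpha,\beta)$. Hence in every case the trajectory reaches one of the three announced fixed points with no limit cycle, and convergence occurs in at most $\alpha$, $2$, and $1$ time steps for the three networks respectively.

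I do not expect a real obstacle here: once the three networks of the class are listed, the conclusion follows either from Proposition~\ref{prop:dag} or from a two‑line trajectory computation, exactly as in the analogous two–fixed–point propositions. The only points that demand care are (i) checking that the enumeration is exhaustive, \ie that $f(1,0)=(1,0)$ is the unique excluded image; and (ii) the bookkeeping of the delay counters when a vertex stops firing — in particular verifying that from the state $(1,0)$ in network $[1,00,01,11]$ the successor is $(0,0)$ and not $(\alpha,0)$, since $f_1$ returns $0$ there and the residual delay of vertex $1$ has reached its last unit. Both are routine.
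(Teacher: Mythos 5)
Your proof is correct and follows essentially the same route as the paper: reduce via the preceding Remark to initial configurations $(\rho,0)$ with $\rho\geq 1$, then treat the three networks $[1,00,01,11]$, $[2,00,01,11]$, $[3,00,01,11]$ by short explicit trajectories, and your delay bookkeeping (e.g.\ $(1,0)\to(0,0)$ for $[1,00,01,11]$ and $(\rho,0)\to(\rho-1,\beta)\to(\alpha,\beta)$ for $[2,00,01,11]$ when $\rho\geq 2$) matches the paper's computations exactly. The only additions — justifying the enumeration of the class and invoking Proposition~\ref{prop:dag} for the two acyclic-plus-positive-loop cases — are harmless extras on top of the same argument.
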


\begin{proof}
	Consider first BN $[1,00,01,11]$. Whatever $\alpha$ and $\beta$ are, given $\rho 
	\geq 1$ and $\gamma = 0$, configuration $(\rho, 0)$ admits the following 
	trajectory: $(\rho, 0) \to (\rho-1, 0) \xrightarrow{\rho-1} (0,0)$\loopr{} that 
	is reached in $\rho$ time steps. Thus, any MBN built on BN $[1,00,01,11]$ admits 
	three attractors, fixed points $(0,0)$, $(0, \beta)$ and $(\alpha, \beta)$ that 
	are reached in at most $\alpha$ time steps.\smallskip
	
	Concerning BN $[2,00,01,11]$, whatever $\alpha$ and $\beta$ are, given $\rho 
	\geq 1$ and $\gamma = 0$, we have: if $\rho = 1$ then $(1, 0) \to (0, 
	\beta)$\loopr{} that is reached in $1$ time step; if $\rho > 1$ then $(\rho, 0) 
	\to (\rho-1, \beta) \to (\alpha, \beta)$\loopr{} that is reached in $2$ time 
	steps. Thus, any MBN built on BN $[2,00,01,11]$ admits three attractors, fixed 
	points $(0,0)$, $(0, \beta)$ and $(\alpha, \beta)$ that are reached in at most 
	$2$ time steps.\smallskip
	
	Concerning BN $[3,00,01,11]$, whatever $\alpha$ and $\beta$ are, given $\rho 
	\geq 1$ and $\gamma = 0$, any configuration $(\rho, 0)$ converges towards 
	$(\alpha, \beta)$ in $1$ time step.
\end{proof}

\begin{table}[h!]
	{\scriptsize \begin{center}
		\begin{minipage}{.475\textwidth}
			\centerline{
				$\begin{array}{|c||c|c|c|}
					\hline
					x & [1,00,01,11] & [2,00,01,11] & [3,00,01,11]\\
					\hline\hline
					00 & 00 & 00 & 00\\
					01 & 01 & 01 & 01\\
					10 & 00 & 01 & 11\\
					11 & 11 & 11 & 11\\
					\hline
				\end{array}$
			}
		\end{minipage}\quad
		\begin{minipage}{.475\textwidth}
			\centerline{
				$\begin{array}{|c||c|c|c|}
					\hline
					x & [1,00,10,11] & [2,00,10,11] & [3,00,10,11]\\
					\hline\hline
					00 & 00 & 00 & 00\\
					01 & 00 & 10 & 11\\
					10 & 10 & 10 & 10\\
					11 & 11 & 11 & 11\\
					\hline
				\end{array}$
			}
		\end{minipage}\medskip
	
		\begin{minipage}{.475\textwidth}
			\centerline{(a)}
		\end{minipage}\quad
		\begin{minipage}{.475\textwidth}
			\centerline{(b)}
		\end{minipage}
	\end{center}}
		
	\caption{(a) Truth tables of all the $3$ BNs that admit fixed 
		points $(0,0)$, $(0,1)$ and $(1,1)$; (b) Truth tables of all the $3$ BNs that 
		admit fixed points $(0,0)$, $(1,0)$ and $(1,1)$.}
	\label{tab:FPs00-01-11_00-10-11}
\end{table}


To conclude on this theoretical analysis, let us simply add that it is trivial to 
show that MBNs of size $2$ that are built on the only BN with $4$ fixed points cannot 
admit limit cycle. Their attractors are fixed points $(0,0)$, $(\alpha, 0)$, $(0, 
\beta)$ and $(\alpha, \beta)$. Furthermore, as it could have been predicted, the more 
BNs admit degree of freedom (\ie the less they admit fixed points), the more MBNs 
built on them may have complex behaviors. Eventually, as it has been highlighted 
in~\cite{Graudenzi2011a,Graudenzi2011b}, this section has formally shown that, for 
networks of size $2$, higher delays results in longer limit cycles. However, it has 
also been shown that it is not true in general that the more the maximum delay value, 
the less the network admits attractors. Indeed, we have seen that this property holds 
in some cases but that there exist also networks for which increasing delays can 
create limit cycles asymptotically.

\section{Application to specific genetic regulation networks}
\label{sec:applis}

\subsection{Immunity control in bacteriophage $\lambda$}
\label{sec:Thomas}

\begin{figure}[b!]
	\begin{center}
		\begin{minipage}{.375\textwidth}
			\scalebox{.85}{\input{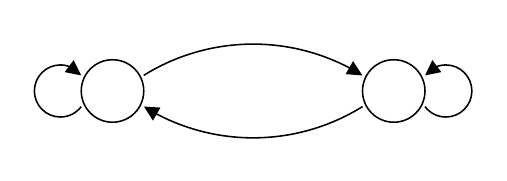_t}}
		\end{minipage}
		\qquad\qquad
		\begin{minipage}{.25\textwidth}
			\begin{tabular}{|cc|cc|}
				\hline
				$\gcI$ & $\gcro$ & $f_{\gcI}$ & $f_{\gcro}$\\
				\hline\hline
				0 & 0 & 1 & 1\\
				0 & 1 & 0 & 2\\
				0 & 2 & 0 & 1\\
				1 & 0 & 1 & 0\\
				1 & 1 & 0 & 0\\
				1 & 2 & 0 & 1\\
				\hline
			\end{tabular}
		\end{minipage}\smallskip
			
		\begin{minipage}{.375\textwidth}
			\centerline{(a)}
		\end{minipage}
		\qquad\qquad
		\begin{minipage}{.25\textwidth}
			\centerline{(b)}
		\end{minipage}
	\end{center}
	\caption{(a) Interaction graph of the gene regulation network implying genes 
		$\gcI$ and $\gcro$ in the immunity control of the bacteriophage 
		$\lambda$ introduced in~\cite{Thomas1995}; (b) Table presenting the 
		dynamical behavior of this network inferred from the phase diagram 
		given by Thieffry and Thomas.}
	\label{fig_lambda_Thomas}
\end{figure}

In~\cite{Thomas1995}, Thieffry and Thomas proposed a logical model based on the 
Thomas' method~\cite{Thomas1973,Thomas1988,Thomas1991} in order to analyze and 
achieve a better understanding of the role that specific genes have in the decision 
between lysis and lysogenization in bacteriophage $\lambda$. Notably, they first 
introduced an interaction graph, composed of four genes $\gcI$, $\gcII$, $\gcro$ 
and $\gN$, that they voluntarily simplified in a two-genes model in order to 
focus especially on the interactions existing between $\gcI$ and $\gcro$. 
Without entering neither into the details of the Thomas' method nor into those 
of the model itself (they can be obtained in the original paper), let us just 
give in Figure~\ref{fig_lambda_Thomas} the main static and dynamical features of 
the latter, in which we can see that the chosen modeling is not in the Boolean 
setting but in a discrete one. Indeed, although gene $\gcI$ is associated with a 
Boolean variable, gene $\gcro$ is associated with a variable taking values into 
$\{0,1,2\}$. In this modeling, the network converges towards two attractors, a 
stable configuration $(1,0)$, whose corresponding expression pattern is $\gcI$ 
expressed and $\gcro$ inhibited that stands for the bacterium lysing, and a 
stable oscillation $(0,1) \leftrightarrows (0,2)$ that stand for the bacterium 
becoming lysogenic.
				
From this model, considering that when gene $\gcro$ is at its maximum expression 
level, it inevitably tends to decrease to its medium expression level whatever 
that of $\gcI$, it is trivial to reduce this model into the Boolean setting by 
merging states $1$ and $2$ of $\gcro$, without loss of qualitative information 
from the biological standpoint because two attractors are conserved exactly, 
through the two stable configurations $(1,0)$ and $(0,1)$. Hence, from 
Figure~\ref{fig_lambda_Thomas}.b, it is easy to obtain the truth table of the 
Boolean local transition functions of $\gcI$ and $\gcro$ and consequently the 
associated BN. Then, we can obviously build the related interaction and 
transitions graphs (see Figure~\ref{fig_lambda_Boole}). 
				
\begin{figure}[t!]
	\begin{center}
		\begin{minipage}{.25\textwidth}
			\begin{tabular}{|cc|cc|}
				\hline
				$\gcI$ & $\gcro$ & $f_{\gcI}$ & $f_{\gcro}$\\
				\hline\hline
				0 & 0 & 1 & 1\\
				0 & 1 & 0 & 1\\
				1 & 0 & 1 & 0\\
				1 & 1 & 0 & 0\\
				\hline
			\end{tabular}
		\end{minipage}
		\qquad\quad
		\begin{minipage}{.25\textwidth}
			\scalebox{.85}{\input{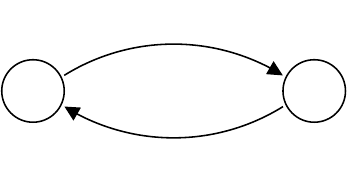_t}}
		\end{minipage}
		\qquad\quad
		\begin{minipage}{.23\textwidth}
			\scalebox{.85}{\input{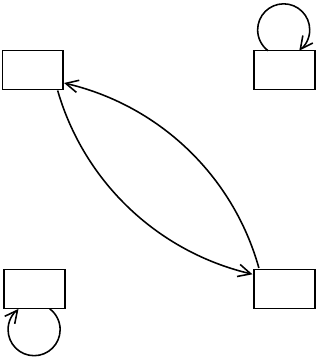_t}}
		\end{minipage}\smallskip
		
		\begin{minipage}{.25\textwidth}
			\centerline{(a)}
		\end{minipage}
		\qquad\quad
		\begin{minipage}{.25\textwidth}
			\centerline{(b)}
		\end{minipage}
		\qquad\quad
		\begin{minipage}{.23\textwidth}
			\centerline{(c)}
		\end{minipage}
	\end{center}
	\caption{(a) Truth table of the Boolean local transition functions of $\gcI$ 
		and $\gcro$ inferred from the Thomas' model of the immunity control of 
		the bacteriophage $\lambda$; (b) Related interaction graph; (c) Related 
		transition graph.}
	\label{fig_lambda_Boole}
\end{figure}
												
Now, considering this network set in the MBN context, we obtain 
Theorem~\ref{thm:Lambda} below.
												
\begin{theorem}
	\label{thm:Lambda}
	Any MBN built on the BN of the immunity control of the bacteriophage 
	$\lambda$ admits fixed points $(dt_\gcI, 0)$ and $(0, dt_\gcro)$. Only those 
	such that $dt_\gcI = dt_\gcro$ can admit a limit cycle of length $dt_\gcI$.
\end{theorem}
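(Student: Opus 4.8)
The plan is to recognize that the Boolean network of Figure~\ref{fig_lambda_Boole}.a is, up to renaming the vertices, exactly one of the two-gene networks already classified in Section~\ref{sec:MBN2_2fp}, so that Theorem~\ref{thm:Lambda} drops out as an immediate corollary. Concretely, I would first read off the local transition functions from the truth table: $f_{\gcI}(x) = \neg x_{\gcro}$ and $f_{\gcro}(x) = \neg x_{\gcI}$, i.e.\ the mutual-repression motif. Checking the four lines of the table, its global map sends $00 \mapsto 11$, $01 \mapsto 01$, $10 \mapsto 10$, $11 \mapsto 00$, so its fixed points are precisely $(1,0)$ and $(0,1)$ and it additionally carries the limit cycle $00 \leftrightarrows 11$. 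Comparing this with Table~\ref{tab:FPs01-10}, the network is BN $[7,01,10]$ once $\gcI$ is identified with vertex $1$ and $\gcro$ with vertex $2$; in particular it is the unique element of $\LC[01,10]$.

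Then I would invoke Proposition~\ref{prop:MBN2_2fp_01_10_LC} directly, with $\alpha = dt_{\gcI}$ and $\beta = dt_{\gcro}$. Reading its conclusion through the abuse of notation of Section~\ref{sec_models_def_mbns} (state $1$ of a vertex of delay $d$ is written $d$), the fixed point $(0,1)$ of the underlying BN corresponds to the configuration $(0, dt_{\gcro})$ and the fixed point $(1,0)$ to the configuration $(dt_{\gcI}, 0)$, while a limit cycle — of length $\alpha = dt_{\gcI}$ — is present if and only if $\alpha = \beta$, i.e.\ $dt_{\gcI} = dt_{\gcro}$. This is exactly the statement of Theorem~\ref{thm:Lambda}, so no further dynamical analysis is needed.

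Since all the trajectory bookkeeping was already performed in the proof of Proposition~\ref{prop:MBN2_2fp_01_10_LC}, there is no genuine obstacle here. The only points requiring (minimal) care are: verifying the identification with $[7,01,10]$ by checking \emph{all four} rows of the truth table rather than just the two fixed-point rows; and noting that $[7,01,10]$ is symmetric under exchanging its two vertices, so which gene plays the role of vertex $1$ is immaterial and the condition $dt_{\gcI} = dt_{\gcro}$ is unambiguous. One may also wish to add the one-line remark that the passage from the original multivalued Thomas model to the Boolean model of Figure~\ref{fig_lambda_Boole} preserves the two stable configurations $(1,0)$ and $(0,1)$, which is already justified in the surrounding text, so that the MBN conclusions inherit their intended biological reading (lytic fixed point $(dt_{\gcI},0)$, lysogenic fixed point $(0,dt_{\gcro})$).
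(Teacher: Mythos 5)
Your proposal is correct, and the identification is sound: the Boolean network of Figure~\ref{fig_lambda_Boole} is indeed $f_{\gcI}=\neg x_{\gcro}$, $f_{\gcro}=\neg x_{\gcI}$, whose global map ($00\mapsto 11$, $01\mapsto 01$, $10\mapsto 10$, $11\mapsto 00$) coincides row by row with BN $[7,01,10]$ of Table~\ref{tab:FPs01-10}, so Theorem~\ref{thm:Lambda} does follow from Remark~\ref{rem:MBN2_2fp_01_10} together with Proposition~\ref{prop:MBN2_2fp_01_10_LC} with $\alpha=dt_\gcI$, $\beta=dt_\gcro$ (the remark, or the $\rho\neq\gamma$ cases of the proposition's proof, supplying the two fixed points $(dt_\gcI,0)$ and $(0,dt_\gcro)$ in all cases, and the proposition supplying the exact condition $\alpha=\beta$ for a limit cycle, of the stated length). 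The paper takes a different editorial route: its proof of Theorem~\ref{thm:Lambda} does not invoke the Section~\ref{sec:MBN2} classification at all but redoes the trajectory analysis from scratch in the $(\gcI,\gcro)$ notation (cases $\rho>\gamma$, $\rho<\gamma$, and $\rho=\gamma$ split on $dt_\gcI$ versus $dt_\gcro$), which keeps Section~\ref{sec:applis} self-contained and yields explicit convergence times; your reduction avoids repeating those computations and makes transparent that the biological model is literally one of the already-classified two-gene motifs, at the cost of having to verify the identification on all four rows of the truth table (which you do) and of importing the proposition's bookkeeping wholesale. Mathematically the two arguments are the same case analysis, so nothing is lost either way; the only mild point of care, which you handle, is that the proposition's statement for $\alpha=\beta$ only asserts the limit cycle, so the persistence of the two fixed points in that case must be drawn from the remark (or from the general fact that a BN fixed point lifts to the MBN configuration with full delay counters), exactly as you note.
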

		
\begin{proof}
	First, for all $dt_\gcI$ and $dt_\gcro$ and initial configurations such that 
	$\rho \neq \gamma$, we have: if $\rho > \gamma$ then $(\rho, \gamma) 
	\xrightarrow{\gamma} (\rho-\gamma, 0) \to (dt_\gcI, 0)$\loopr{} that is reached 
	in at most $dt_\gcro$ time steps, and if $\rho < \gamma$ then $(\rho, \gamma) 
	\xrightarrow{\rho} (0, \gamma-\rho) \to (0, dt_\gcro)$\loopr{} that is reached in 
	at most $dt_\gcI$ time steps. Now, let us consider that $\rho = \gamma$ and 
	compute its trajectory depending on $dt_\gcI$ and $dt_\gcro$:
	\begin{itemize}
	\item if $dt_\gcI = dt_\gcro$ then $(\rho, \rho) \to (\rho-1, \rho-1) 
		\xrightarrow{\rho-1} (0,0) \to (\rho, \rho)$, which emphasizes a limit cycle 
		of length $dt_\gcI$.
	\item if $dt_\gcI > dt_\gcro$ then $(\rho, \rho) \xrightarrow{\rho} (0,0) \to 
		(dt_\gcI, dt_\gcro) \xrightarrow{dt_\gcro} (dt_\gcI - dt_\gcro, 0) \to 
		(dt_\gcI, 0)$\loopr{} that is reached in at most $2dt_\gcro+2$ time steps.
	\item if $dt_\gcI < dt_\gcro$ then $(\rho, \rho) \xrightarrow{\rho} (0,0) \to 
		(dt_\gcI, dt_\gcro) \xrightarrow{dt_\gcro} (0, dt_\gcro-dt_\gcI) \to 
		(0, dt_\gcro)$\loopr{} that is reached in at most $2dt_\gcI+2$ time steps.
	\end{itemize}
\end{proof}

\begin{remark}
	\label{rem:synchro}
	As Theorem~\ref{thm:Lambda} states, in the case where $dt_\gcI = dt_\gcro$ and  
	$\alpha = \beta$, the behavior of the network evolves towards a limit cycle. This 
	limit cycle is a spurious asymptotic behavior induced by the perfect 
	synchronicity created by both the parallel updating mode and the equality of all 
	the control parameters $dt_\gcI$, $dt_\gcro$, $\alpha$ and $\beta$. These 
	parameters could be useful to control synchronicity loss in genetic regulatory 
	networks. 
\end{remark}

\subsection{Floral morphogenesis of \emph{Arabidopsis thaliana}}
\label{sec:Mendoza}
												
\setcounter{MaxMatrixCols}{20}
In~\cite{Mendoza1998}, Mendoza and Alvarez-Buylla introduced a network of the genetic 
control of the floral morphogenesis of the plant \emph{Arabidopsis thaliana}. The 
mathematical model that they used to study the dynamical behavior of this network was 
that of threshold BNs. Threshold BNs are particular BNs in which every local 
transition function $f_i(x)$ is a threshold Boolean function defined as:
\begin{equation*}
	f_i(x) = \mathbb{I}(\sum_{j \in V} w_{i,j} \cdot x_j - \theta_i)\text{,}
\end{equation*}
where $\theta_i$ is the activation threshold of node $i$, $w_{i,j} \in \mathbb{R}$ is 
the interaction weight that node $j$ has on node $i$, and $\mathbb{I}: \mathbb{R} \to 
\B$ is the Heaviside function defined as $\mathbb{I}(x) = 0$ if $x \leq 0$ and $1$ 
otherwise. A very relevant result obtained by Mendoza and Alvarez-Buylla was that the 
dynamical behavior of this network, according to a specific block-sequential (or 
series-parallel) updating mode (see~\cite{Robert1986} for more details about this 
kind of updating mode), converges towards six fixed points among which four 
corresponds to the four floral cellular types: carpels, stamens, petals and sepals. 
The two other fixed points are respectively related to inflorescence and mutant 
cellular types.
												
\begin{figure}[t!]
	\centerline{\scalebox{.85}{\input{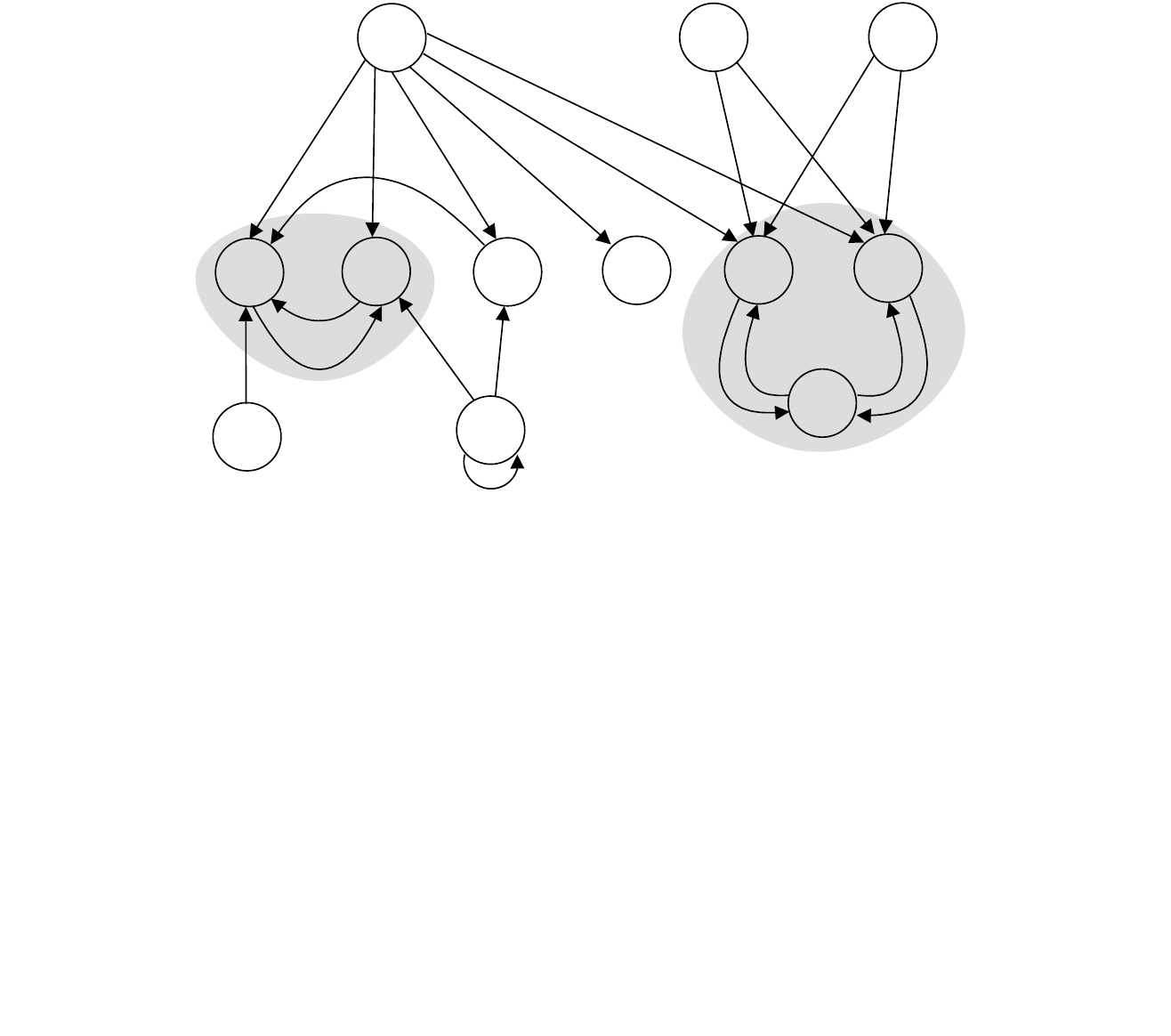_t}}}
	\caption{Symmetric version of the regulation network modeling the genetic control 
		of \emph{Arabisopsis thaliana} floral morphogenesis pointing out the dynamical 
		role of the two strongly connected components $\mathcal{C}_1$ and 
		$\mathcal{C}_2$.}
	\label{fig:mendozaSym}
\end{figure}

In~\cite{Demongeot2010}, the authors emphasized that this network, when subjected to 
the parallel updating mode, has also seven limit cycles of size $2$. Basing 
themselves on the results obtained by Goles in~\cite{Goles1980} about the convergence 
of symmetric threshold BNs, they showed that this original network was actually 
equivalent to another simpler one. The latter is simpler in the sense that its 
dynamical richness is entirely governed by two strongly connected components 
$\mathcal{C}_1$ and $\mathcal{C}_2$, of respective sizes $2$ and $3$ and such that 
$V_{\mathcal{C}_1} = \{\gAG, \gAPI\}$ and $V_{\mathcal{C}_2} = \{\gAPIII, \gBFU, 
\gPI\}$. This network, together with its related interaction matrix $\tilde{W}$ and 
threshold vector $\tilde{\Theta}$ is pictured in Figure~\ref{fig:mendozaSym}. From 
the analysis of the dynamical behavior of this network given in~\cite{Demongeot2010}, 
we derive Theorem~\ref{thm:Mendoza} below.

\begin{theorem}
	\label{thm:Mendoza}
	Suppose that every node that does not belong to neither $\mathcal{C}_1$ nor 
	$\mathcal{C}_2$ is at state $0$. The following holds:
	\begin{enumerate}
	\item[i)] Any MBN associated with $\mathcal{C}_1$ admits the two fixed 
		points $(dt_{\gAG}, 0)$ and $(0, dt_{\gAPI})$. Only those such that 
		$dt_{\gAG} = dt_{\gAPI}$ can admit a limit cycle.
	\item[ii)] For the $C_2$ component, the behavior of any MBN built on it is 
		the following:
		\begin{itemize}
		\item If $d_{\gBFU} \geq 2$, then the set of attractors of the network is 
			composed of two fixed points, $(0,0,0)$ and $(d_{\gAPIII}, d_{\gBFU}, 
			d_{\gPI})$.
		\item If $(dt_{\gBFU} = 1) \land (dt_{\gAPIII} \geq 2) \land (dt_{\gPI} \geq 
			2)$ then the set of attractors of the network is composed of two fixed 
			points, $(0,0,0)$ and $(dt_{\gAPIII}, dt_{\gBFU}, d_{\gPI})$.
		\item If $(d_{\gBFU} = 1) \land ((d_{\gAPIII} = 1) \lor (d_{\gPI} = 1))$) then  
			the set of attractors of the network is composed of two fixed points, 
			$(0,0,0)$ and $(dt_{\gAPIII}, dt_{\gBFU}, d_{\gPI})$, and of a limit 
			cycle of length $2$.
	 	\end{itemize}
	\end{enumerate}
\end{theorem}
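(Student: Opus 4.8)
The plan is to treat the two components separately. For $\mathcal{C}_1$ I would first read off the Boolean local transition functions of $\gAG$ and $\gAPI$ from the rows of $\widetilde{W}$ and $\tilde{\Theta}$ indexed by these two vertices, using the standing hypothesis that every vertex outside $\mathcal{C}_1\cup\mathcal{C}_2$ is frozen at $0$ (in particular $\gEMF=\gLFY=0$; one checks that this subspace is invariant under the threshold dynamics). Evaluating the Heaviside function, the threshold rules collapse to $f_{\gAG}(x)=\mathbb{I}(1-2x_{\gAPI})=\neg x_{\gAPI}$ and $f_{\gAPI}(x)=\mathbb{I}(2-2x_{\gAG})=\neg x_{\gAG}$. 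Up to the relabelling $\gcI\mapsto\gAG$, $\gcro\mapsto\gAPI$, this is exactly the two-gene Boolean network of the immunity control of bacteriophage $\lambda$ of Section~\ref{sec:Thomas} (equivalently, BN $[7,01,10]$), whose fixed points are $(1,0)$ and $(0,1)$. Item~i) is then a direct corollary of Theorem~\ref{thm:Lambda} (equivalently Proposition~\ref{prop:MBN2_2fp_01_10_LC}): the fixed points become $(dt_{\gAG},0)$ and $(0,dt_{\gAPI})$, and a limit cycle of length $dt_{\gAG}$ (reachable only from configurations with equal delay counters) appears if and only if $dt_{\gAG}=dt_{\gAPI}$.

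For $\mathcal{C}_2$ the same evaluation (now with $\gLFY=\gUFO=\gSUP=0$) collapses the threshold rules to the positive network $f_{\gAPIII}(x)=f_{\gPI}(x)=x_{\gBFU}$ and $f_{\gBFU}(x)=x_{\gAPIII}\land x_{\gPI}$. Writing a configuration in the compressed notation as $(x_{\gAPIII},x_{\gBFU},x_{\gPI})$ and letting $b_i=[x_i\geq 1]$ be the Boolean projection, I note that $(0,0,0)$ and $(dt_{\gAPIII},dt_{\gBFU},dt_{\gPI})$ are fixed points and that $\gAPIII$ and $\gPI$ are \emph{slaved} to $\gBFU$: their firing conditions both equal $b_{\gBFU}$, so any step with $b_{\gBFU}(t)=1$ forces $b_{\gAPIII}(t+1)=b_{\gPI}(t+1)=1$, hence $f_{\gBFU}$ fires at $t+1$ and $b_{\gBFU}(t+2)=1$. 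Thus the whole dynamics is governed by the single Boolean sequence $\beta_t:=b_{\gBFU}(t)$ (one has $\beta_{t+1}=1$ exactly when $\gAPIII$ and $\gPI$ are both active at $t$ or the counter $x_{\gBFU}(t)$ is still $\geq 2$), and $\beta_s=1$ implies $\beta_{s'}=1$ for all even $s'>s$.

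The case analysis on $dt_{\gBFU}$ then runs as follows. (a)~If $dt_{\gBFU}\geq 2$, firing $f_{\gBFU}$ at $s$ gives $x_{\gBFU}(s+2)=dt_{\gBFU}\geq 2$, so $\beta$ is $1$ at both $s+2$ and $s+3$; two consecutive $1$'s of $\beta$ propagate forever by the slaving property, so any trajectory with $\beta$ ever $1$ converges to $(dt_{\gAPIII},dt_{\gBFU},dt_{\gPI})$ and any other trajectory drops to $(0,0,0)$: no limit cycle. (b)~If $dt_{\gBFU}=1$ but $dt_{\gAPIII}\geq 2$ and $dt_{\gPI}\geq 2$, then after $\beta_s=1$ one still gets $\beta_{s+2}=1$, and since $x_{\gAPIII}$ and $x_{\gPI}$ now survive at least two steps, $b_{\gAPIII}(s+2)=b_{\gPI}(s+2)=1$ even if $\gBFU$ was inactive at $s+1$, so $f_{\gBFU}$ fires at $s+2$ and $\beta_{s+3}=1$, again yielding two consecutive $1$'s and the same conclusion. (c)~If $dt_{\gBFU}=1$ and $dt_{\gAPIII}=1$ or $dt_{\gPI}=1$, then exchanging $\gAPIII$ and $\gPI$ (an automorphism of the network) lets me assume $dt_{\gPI}=1$; then $x_{\gBFU}(t)=\beta_t$, $x_{\gPI}(t)=\beta_{t-1}$, and since $\beta_{t-1}=1$ forces $x_{\gAPIII}(t)=dt_{\gAPIII}\geq 1$, the recurrence collapses to $\beta_{t+1}=\beta_{t-1}$ for every $t\geq 1$. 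So $\beta$ is eventually $2$-periodic, and depending on whether its two constant parity-subsequences are $(0,0)$, $(1,1)$, or out of phase, the trajectory ends at $(0,0,0)$, at $(dt_{\gAPIII},dt_{\gBFU},dt_{\gPI})$, or on the orbit $(dt_{\gAPIII},0,1)\leftrightarrows(dt_{\gAPIII}-1,1,0)$; a one-line computation confirms this orbit is a genuine limit cycle of length $2$, which gives the claimed attractor set.

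The threshold evaluations and the elementary trajectory computations are routine. The main obstacle is the bookkeeping for $\mathcal{C}_2$: pinning down the correct invariant for $\beta_t=b_{\gBFU}(t)$ and verifying that in cases (a) and (b) the slaving mechanism really does force two consecutive active steps of $\gBFU$ irrespective of the initial values of the internal counters, while in case (c) it does not, so that the out-of-phase $2$-periodic regime of $\beta$ persists and must be identified as an attractor rather than dismissed as a transient.
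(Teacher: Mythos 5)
Your proposal is correct, and for part~ii) it follows a genuinely different route than the paper. For part~i) you do exactly what the paper does (reduce $\mathcal{C}_1$ to the bacteriophage network and invoke Theorem~\ref{thm:Lambda}), except that you additionally make explicit the evaluation of the threshold rules $f_{\gAG}=\neg x_{\gAPI}$, $f_{\gAPI}=\neg x_{\gAG}$ on the invariant subspace, which the paper leaves implicit. For part~ii), the paper proves the three delay cases by an exhaustive enumeration of trajectories of $(\rho,\delta,\gamma)$, listing each sub-case of the initial counters separately; you instead isolate a structural invariant: since $f_{\gAPIII}=f_{\gPI}=x_{\gBFU}$ and $f_{\gBFU}=x_{\gAPIII}\land x_{\gPI}$, the Boolean activity of $\gAPIII$ and $\gPI$ is slaved to that of $\gBFU$, so the whole dynamics is governed by the sequence $\beta_t$ with $\beta_t=1\Rightarrow\beta_{t+2}=1$. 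This buys a much shorter case analysis (one activation of $\gBFU$ forces two consecutive activations when $dt_{\gBFU}\geq 2$, or when $dt_{\gAPIII},dt_{\gPI}\geq 2$, and two consecutive activations absorb into $(dt_{\gAPIII},dt_{\gBFU},dt_{\gPI})$; otherwise the recurrence collapses to $\beta_{t+1}=\beta_{t-1}$, yielding exactly the two fixed points and the length-$2$ cycle), and the $\gAPIII\leftrightarrow\gPI$ symmetry lets you treat case (c) once instead of twice, which incidentally sidesteps the duplicated sub-cases in the paper's write-up. Your exhibited cycle $(dt_{\gAPIII},0,1)\leftrightarrows(dt_{\gAPIII}-1,1,0)$ is the mirror image, under that symmetry, of the paper's $(dt_{\gAPIII},0,dt_{\gPI})\leftrightarrows(0,1,dt_{\gPI}-1)$, so the conclusions agree. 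The only blemishes are cosmetic indexing slips (in case (a) the two consecutive active steps occur at $s+1,s+2$ rather than $s+2,s+3$, and ``for all even $s'>s$'' should read ``for all $s'>s$ of the same parity as $s$''); neither affects the argument.
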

	
\begin{proof}
	Let us work on both sub-networks.
	\begin{enumerate}
	\item[\emph{i)}] The proof is the same as that of Theorem~\ref{thm:Lambda}.
	\item[\emph{ii)}] Let us denote by $(\rho, \delta, \gamma)$ the initial 
		configuration. First, notice that whatever the delay vector, if the state of 
		$\gBFU$ is $0$ and at least one of the other two is also $0$ in the initial 
		configuration then the latter converges towards the fixed point $(0,0,0)$. 
		Indeed, $(0,0,\gamma) \to (0,0,\gamma-1) \xrightarrow{\gamma-1} 
		(0,0,0)$\loopr{} that is reached in $\gamma$ time steps, \ie in at most 
		$dt_\gPI$ time steps. Similarly, $(\rho,0,0) \to (\rho-1,0,0) 
		\xrightarrow{\rho-1} (0,0,0)$\loopr{} that is reached in $\gamma$ time steps, 
		\ie in at most $dt_\gPI$ time steps.
		
		Consider now that it is not the case so that the central site and one of its 
		neighbors are not both at state $0$ at the same time. Three cases are 
		possible:
		\begin{enumerate}
		\item $dt_\gBFU \geq 2$. Then, for $1 \leq \delta \leq dt_\gBFU$ and for 
			any $\rho, \gamma \geq 0$ we have:
			\begin{itemize}
			\item if $\rho = 0$ or $\gamma = 0$ then $(\rho, \delta, \gamma) \to 
				(dt_\gAPIII, \delta - 1, dt_\gPI)$. If $\delta-1 = 0$ then 
				\begin{itemize}
				\item if $dt_\gAPIII = dt_\gPI = 1$ then $(dt_\gAPIII, 0, dt_\gPI) 
					\to (0, dt_\gBFU, 0) \to (dt_\gAPIII, dt_\gBFU - 1, dt_\gPI) 
					\to (dt_\gAPIII, dt_\gBFU, dt_\gPI)$\loopr{}.
				\item if $dt_\gAPIII = 1\ \land\ dt_\gPI > 1$ then 
					$(dt_\gAPIII, 0, dt_\gPI) \to (0, dt_\gBFU, dt_\gPI - 1) \to 
						(dt_\gAPIII, dt_\gBFU - 1, dt_\gPI) \to 
						(dt_\gAPIII, dt_\gBFU, dt_\gPI)$\loopr{}.
				\item if $dt_\gAPIII > 1\ \land\ dt_\gPI = 1$ then 
					$(dt_\gAPIII, 0, dt_\gPI) \to (dt_\gAPIII-1, dt_\gBFU, 0) \to 
						(dt_\gAPIII, dt_\gBFU - 1, dt_\gPI) \to 
						(dt_\gAPIII, dt_\gBFU, dt_\gPI)$\loopr{}.
				\item if $dt_\gAPIII > 1\ \land\ dt_\gPI > 1$ then 
					$(dt_\gAPIII, 0, dt_\gPI) \to 
					(dt_\gAPIII - 1, dt_\gBFU, dt_\gPI - 1) \to 
					(dt_\gAPIII, dt_\gBFU, dt_\gPI)$\loopr{}.
				\end{itemize}
				If $\delta-1 \geq 1$ then $(dt_\gAPIII, \delta - 1, dt_\gPI) \to 
				(dt_\gAPIII, dt_\gBFU, dt_\gPI)$\loopr{}.
			\item if $\rho \geq 1$ and $\gamma \geq 1$ then it is trivial that 
				$(\rho, \delta, \gamma)$ converges directly towards 
				$(dt_\gAPIII, dt_\gBFU, dt_\gPI)$.
			\end{itemize}
			Now, if $\rho \geq 1$, $\delta = 0$ and $\gamma \geq 1$ then 
			\begin{itemize}
			\item if $\rho = \gamma = 1$ then $(1, 0, 1) \to (0, dt_\gBFU, 0) \to 
				(dt_\gAPIII, dt_\gBFU - 1, dt_\gPI) \to 
				(dt_\gAPIII, dt_\gBFU, dt_\gPI)$\loopr{}.
			\item if $\rho = 1\ \land\ \gamma \geq 2$ then $(1, 0, \gamma) \to 
				(0, dt_\gBFU, \gamma - 1) \to (dt_\gAPIII, dt_\gBFU - 1, dt_\gPI) \to 
				(dt_\gAPIII, dt_\gBFU, dt_\gPI)$\loopr{}.
			\item if $\rho \geq 2\ \land\ \gamma = 1$ then $(\rho, 0, 1) \to 
				(\rho-1, dt_\gBFU, 0) \to (dt_\gAPIII, dt_\gBFU - 1, dt_\gPI) \to 
				(dt_\gAPIII, dt_\gBFU, dt_\gPI)$\loopr{}.
			\item if $\rho \geq 2\ \land\ \gamma \geq 2$ then $(\rho, 0, \gamma) \to 
				(\rho-1, dt_\gBFU, \gamma-1) \to 
				(dt_\gAPIII, dt_\gBFU, dt_\gPI)$\loopr{}.
			\end{itemize}
			Thus, when $dt_\gBFU \geq 2$, the network admits only two attractors, 
			fixed point $(0,0,0)$ (resp. $(dt_\gAPIII, dt_\gBFU, dt_\gPI)$) towards 
			which it converges in at most $\max(dt_\gAPIII, dt_\gPI)$ (resp. $4$) time 
			steps.

		\item $dt_\gBFU = 1$ and $(dt_\gAPIII \geq 2)$ and $(dt_\gPI \geq 2)$. 
			Given an initial configuration $(\rho, \delta, \gamma)$, we have:
			\begin{itemize}
			\item if $\delta = 0$ and $\rho \geq 1$ and $\gamma \geq 1$ then:
				\begin{itemize}
				\item if $\rho = \gamma = 1$ then $(1, 0, 1) \to (0, 1, 0) \to 
					(dt_\gAPIII, 0, dt_\gPI)$. 
				\item if $\rho = 1\ \land\ \gamma \geq 2$ then $(1, 0, \gamma) \to 
					(0,1,\gamma-1) \to (dt_\gAPIII, 0, dt_\gPI)$.
				\item if $\rho \geq 2\ \land\ \gamma = 1$ then $(\rho, 0, 1) \to 
					(\rho-1,1,0) \to (dt_\gAPIII, 0, dt_\gPI)$.
				\item if $\rho, \gamma \geq 2$ then $(\rho, 0, \gamma) 
					\to (\rho-1,1,\gamma-1) \to (dt_\gAPIII, 0, dt_\gPI)$.
				\end{itemize}
				Now, notice that $(dt_\gAPIII, 0, dt_\gPI) \to 
				(dt_\gAPIII-1, 1, dt_\gPI-1) \to 
				(dt_\gAPIII, dt_\gBFU, dt_\gPI)$\loopr{}.
			\item if $\delta = 1$ and $\rho \geq 1$ and $\gamma \geq 1$ then it is 
				trivial that there is a direct convergence towards 
				$(dt_\gAPIII, dt_\gBFU, dt_\gPI)$. 
			\end{itemize}
			Thus, when $dt_\gBFU = 1$ and $dt_\gAPIII \geq 2$ and $(dt_\gPI \geq 2)$, 
			the network necessarily converges towards either $(0,0,0)$ in at most 
			$\max(dt_\gAPIII, dt_\gPI)$ time steps or 
			$(dt_\gAPIII, dt_\gBFU, dt_\gPI)$ in at most $4$ time steps.

		\item $dt_\gBFU = 1$ and $((dt_\gAPIII = 1)\ \lor\ (dt_\gPI = 1))$. Let us 
			first suppose that $dt_\gAPIII = 1$ and $dt_\gAPIII \geq 2$ and consider 
			an initial configuration $(\rho, \delta, \gamma)$. We have:
			\begin{itemize}
			\item if $\delta = 0$ and $\rho = 1$ and $\gamma \geq 1$ then 
				$(1, 0, \gamma) \to (0, 1, \gamma-1) \to (dt_\gAPIII, 0, dt_\gPI) 
				\leftrightarrows (0, 1, dt_\gPI-1)$, which highlights a limit cycle of 
				length $2$.
			\item if $\delta = 1$ then:
				\begin{itemize}
				\item $\forall \rho, \gamma$ such that $\rho = 0$ or $\gamma = 0$, 
					$(\rho,1,\gamma) \to (dt_\gAPIII, 0, dt_\gPI) \leftrightarrows 
					(0, 1, dt_\gPI-1)$, which highlights a limit cycle of length $2$.
				\item if $\rho = 1$ and $\gamma \geq 1$ then $(1,1,\gamma) \to 
					(dt_\gAPIII, dt_\gBFU, dt_\gPI)$\loopr{}.
				\end{itemize}
			\end{itemize}
			The cases where $(dt_\gAPIII \geq 2\ \land\ dt_\gAPIII = 1)$, and where 
			$dt_\gAPIII = dt_\gAPIII = 1$ can be treated similarly.
	 	\end{enumerate}			
	\end{enumerate}
\end{proof}
																				
Theorem~\ref{thm:Mendoza} shows that both components $\mathcal{C}_1$ and 
$\mathcal{C}_2$ that are the engines of the dynamical behavior of the network admit a 
limit cycle. However, here again, these two limit cycles are spurious in the sense 
that only very specific initial conditions allow to capture them. Furthermore, both 
of these components admit two fixed points: $(x_\gAG=dt_\gAG, x_\gAPI=0)$ and 
$(x_\gAG=0, x_\gAPI=dt_\gAPI)$ for $\mathcal{C}_1$, and $(x_\gAPIII=0, x_\gBFU=0, 
x_\gPI=0)$ and $(x_\gAPIII=dt_\gAPIII, x_\gBFU=dt_\gBFU, x_\gPI=dt_\gPI)$. By 
combining them by making vectors $(x_\gAG, x_\gAPI, x_\gAPIII, x_\gBFU, x_\gPI)$, we 
obtain four possible fixed points: 
\begin{gather*}
	\fp_1 = (dt_\gAG, 0, 0, 0, 0),\quad \fp_2 = (0, dt_\gAPI, 0, 0, 0),\\
	\fp_3 = (dt_\gAG, 0, dt_\gAPIII, dt_\gBFU, dt_\gPI) \text{ and } 
	\fp_4 = (0, dt_\gAPI, dt_\gAPIII, dt_\gBFU, dt_\gPI)\text{,}
\end{gather*} 
that correspond exactly to the floral organs (sepals, petals, carpels and stamens) 
according to the ABC model~\cite{Coen1991,Meyerowitz1994}.

\section{Conclusion and perspectives}
\label{sec:conclusion}
																				
This paper aimed at studying from the theoretical point of view the GPBN model of 
Graudenzi and Serra in~\cite{Graudenzi2010,Graudenzi2011a,Graudenzi2011b}. This 
mathematical model was introduced as an extension of BNs allowing to consider nodes 
representing both genes (as Boolean variables) and proteins (as discrete decay times 
of their concentrations). By using an intermediary model, that we call MBNs, we first 
obtained a result emphasizing that GPBNs are computationally speaking equivalent to 
BNs. Indeed, given an arbitrary GPBN, it is possible to construct an dynamically 
equivalent BN composed of more nodes. This result led us to focus on theoretical 
properties of MBNs. For this model in general, we proved that: \emph{(i)} positive 
disjunctive MBNs, as well as locally monotonic MBNs (\ie MBNs whose global transition 
function is either decreasing or increasing for all $x$), necessarily converge 
towards fixed points; \emph{(ii)} the only attractors that MBNs whose 
underlying interaction graphs do not induce any cycle except possible loops 
may have are fixed points. Then, before we presented applications to two well 
know examples of real genetic regulation networks, we analyzed exhaustively the 
possible dynamical behaviors of MBNs composed of two nodes. The underlying idea aimed 
at obtaining a subtle knowledge about the dynamical and computational properties of 
simple biological patterns. It led us to highlight the conditions under which 
this or that pattern admits only fixed points or limit cycles as limit behaviors. 
Moreover, it led us to show that adding memory to BNs is a pertinent way to freeze 
some cyclic behaviors, which is particularly interesting in the case of spurious 
complex attractors, but that it is also a way to create complex attractors in very 
specific networks having few attractors in their initial Boolean form. The features 
of such BNs would deserve to be studied more generally.\medskip

This work opens the following perspectives:
\begin{itemize}
\item \emph{about other applications}: One of the first perspectives would be to 
	use MBNs to model other networks known to model genetic controls in living 
	systems. Among the networks that we think of are the embryonic segmentation of 
	\emph{Drosophila melanogaster}~\cite{Albert2003}, B-cell 
	differentiation~\cite{Mendoza2016} and the fission yeast cell cycle 
	network~\cite{Li2004}.
\item \emph{about larger networks}: A natural opening of this work is to focus on the 
	theoretical properties of the dynamics and the complexity of networks of larger 
	sizes. A first axis would be to maintain our interest in biological patterns by 
	paying attention to networks of size $3$ and $4$. Because of the intrinsic 
	complexity (regarding the size of the phase space) of such a work, it would be 
	judicious to focus on specific patterns that have been highlighted to be either 
	statistically well represented in biological networks~\cite{Alon2002} or of 
	specific importance such as regulons or feed-forward coherent and incoherent 
	cycles. Moreover, of course, another axis should naturally be articulated around 
	larger networks whose interaction graphs belong to specific classes, such as 
	cycles and more generally graph family like cacti and caterpillars.
\item \emph{about synchronicity}: As evoked in Remark~\ref{rem:synchro}, MBNs get 
	interesting features to understand the role of synchronicity and asynchonicity 
	on the dynamics of regulatory networks. Indeed, when we work on BNs, the way to 
	study and understand the influence of synchronicity is to play with the updating 
	mode (parallel, block-sequential, sequential, randomly sequential, fair, 
	asynchronous...). However, although there exist a lot of results in this domain, 
	the comparison of their impact on networks is very tricky for mainly two reasons: 
	the infinite number of possible updating modes in theory and their mathematical 
	nature (\ie deterministic vs non-deterministic, periodic vs non-periodic...). 
	Here, with MBNs, while we keep the parallel updating mode, we can directly change 
	synchronicity through initial configurations that play the role of synchronicity 
	control parameters. From the biological point of view, this is of peculiar 
	interest because in most cases, the synchronicity induced by the use of the 
	parallel updating mode (that the most basic mathematically speaking) tends to 
	produce cyclic attractors with no biological meaning. In this framework, future 
	works will be oriented towards the characterizations of delays for distinct 
	classes of MBNs ensuring to filter such spurious attractors by freezing the 
	networks. 
\end{itemize}\medskip

\paragraph{Acknowledgements} This work has been supported by ECOS-CONICYT C16E01 (EG, 
  FL, GR, SS), FONDECYT 1140090 (EG, FL), Basal Project CMM (GR, EG), FANs program 
  ANR-18-CE40-0002-01 (SS) and PACA Fri Project 2015\_01134 (SS).

\bibliographystyle{plain}
\bibliography{paper_GLRS_NACO_final}   

\end{document}